\documentclass[12pt]{extarticle}
\usepackage[utf8]{inputenc}
\usepackage{latexsym}
\usepackage{soul}
\usepackage{amssymb,amsmath, amsthm}
\usepackage{pifont}
\usepackage{listings}
\usepackage{float} 
\usepackage{mathrsfs}
\usepackage[pdftex]{graphicx}
\graphicspath{{figs/}}
\usepackage{booktabs}
\usepackage{ragged2e}
\usepackage{tikz}
\usetikzlibrary{intersections, calc}
\usetikzlibrary{patterns}
\usetikzlibrary{3d}
\usepackage{changepage}
\usepackage{setspace}
\usepackage{geometry}  
\usepackage{hyperref}  
\usepackage[utf8]{inputenc}
\usepackage[affil-it]{authblk} 
\usepackage{etoolbox}
\usepackage{lmodern}
\usepackage[most]{tcolorbox}
\usepackage{pst-node}
\usepackage{tikz-cd} 
\usepackage{empheq}
\usepackage{cancel}
\usepackage{ dsfont }
\usepackage{enumitem}
\usepackage{chngcntr}
\usepackage{quiver} 

\allowdisplaybreaks

\usepackage{titling}

\usepackage{etoolbox}

\usepackage{natbib}
\newcommand{\cites}[1]{\citeauthor{#1}'s (\citeyear{#1})}

\theoremstyle{definition}
\newtheorem{example}{Example}
\newtheorem{remark}{Remark}

\theoremstyle{plain}
\newtheorem{definition}{Definition}

\newtheorem{proposition}{Proposition}
\newtheorem{fact}{Fact}
\newtheorem{axiom}{Axiom}
\newtheorem{theorem}{Theorem}

\newtheorem{lemma}{Lemma}

\usepackage{array} 
\newcolumntype{C}[1]{>{\centering\arraybackslash}p{#1}}

\usepackage[most]{tcolorbox}

\tcbset{
    breakable,
    enhanced,
    arc=0mm,
    outer arc=0mm,
    boxrule=0.3mm 
}

\makeatletter
\DeclareRobustCommand{\varamalg}{%
  \mathbin{\mathpalette\var@malg\perp}%
}
\newcommand{\succprec}{\mathrel{\mathpalette\succ@prec{\succ\prec}}}
\newcommand{\precsucc}{\mathrel{\mathpalette\succ@prec{\prec\succ}}}

\newcommand{\succ@prec}[2]{\succ@@prec#1#2}
\newcommand{\succ@@prec}[3]{%
  \vcenter{\m@th\offinterlineskip
    \sbox\z@{$#1#3$}%
    \hbox{$#1#2$}\kern-0.4\ht\z@\box\z@
  }%
}
\newcommand\var@malg[2]{%
  \rlap{$\m@th#1#2$}\mkern6mu{#1#2}%
}
\makeatother

\newcommand{\pr}[0]{\mathbb{P}}

\newcommand{\R}[0]{\mathbb{R}}

\newcommand{\1}[0]{\mathds{1}}

\newcommand{\E}[0]{\mathbb{E}}

\newcommand{\spiff}[0] {\hspace{10pt}  \iff \hspace{10pt}}

\DeclareMathOperator*{\argmax}{arg\,max}
\DeclareMathOperator*{\argmin}{arg\,min}

\DeclareMathOperator{\Bern}{Bern}

\hypersetup{hidelinks}

\title{Dynamically Consistent Statistical Decisions\thanks{We thank Isaiah Andrews, Ophir Averbuch,  Shengwu Li, Matthew Rabin, Jesse Shapiro, Tomasz Strzalecki, Vitalii Tubdenov, Davide Viviano, and workshop participants at Harvard University for their valuable feedback.}} 

\date{ July 11, 2026 }

\author{
Cheaheon  Lim\thanks{Department of Economics, Harvard University.}  \and Yechan Park\thanks{Department of Economics, Harvard University.} 
}

\begin{document}

\maketitle

\begin{center}
\textbf{Abstract}
\end{center}

\begin{adjustwidth}{1.5cm}{1.5cm}
\onehalfspacing

A large literature in econometrics proposes decision rules with optimality guarantees based on ex ante criteria, such as minimax regret.  We develop a framework for analyzing the dynamic consistency of such rules and show that, in many empirically relevant settings, the researcher may wish to deviate from the interim prescription of ex ante optimal rules  after observing the data realization. To address this problem, we propose and axiomatize two classes of optimality criteria that yield dynamically consistent decision rules. 
\end{adjustwidth}



\newpage
\onehalfspacing

\section{Introduction}

Statistical decision theory  à la  \cite{Wald_1950_SDT} provides a framework for studying the ex ante choice of data-contingent actions. In the context of treatment choice, for instance, the decision maker (DM) must commit to a decision rule that either approves or rejects the treatment as a function of the realized data \citep[e.g.,][]{Manski_2004_ECTA, Stoye_2009_JoE}. This ex ante perspective is most natural when the DM can commit ex ante to a decision rule before observing the realized data.  

In practice, however, DMs often revise their intended action after seeing the data. For instance, a regulatory agency with a publicly listed approval rule for new drugs may reject a drug after receiving its clinical trial data, even when the data satisfies all the conditions required for approval under the rule.   In the context of empirical research, a researcher may commit to using a nonparametric estimator in a pre-analysis plan, but instead opt for a more parametric specification after observing that the resulting estimate is too noisy. If these DMs had fully specified their data-contingent actions at the ex ante stage,  such behavior constitutes instances of \emph{dynamic inconsistency:} after observing the data, the DM  wishes to deviate from the decision rule chosen ex ante.


A prerequisite for discussing notions of dynamic consistency is an \emph{interim} optimality criterion, which describes the DM's preferences after the realization of data.\footnote{Our terminology distinguishes between the \emph{interim} stage, which occurs  after the data realization but  before the realization of the parameter space,  and the \emph{ex post} stage, which occurs after the realization of all uncertainty.}  These interim preferences determine whether the DM wishes to deviate from a given decision rule once the data has realized. In Bayesian analysis, the interim problem is well-defined by the DM's posterior beliefs after the realization of data. By contrast, in frequentist decision theory, the interim problem is less clearly defined. There are no beliefs to update in the Bayesian sense, and standard ex ante criteria, such as  minimax regret, do not admit a natural decomposition into interim preferences.


The first main contribution of this paper is to provide a formal analysis of the interim problem associated with frequentist ex ante optimality criteria. This fills an important gap in the literature, given the prominence of decision rules justified by such criteria.  Through a series of examples, we illustrate  that dynamic inconsistency is not merely a theoretical possibility. Across a variety of empirically relevant settings, many ``reasonable" researchers and policymakers may find it optimal, after observing the data, to deviate from the ex ante optimal rule.\footnote{Related concerns appear in the statistics literature (e.g., Chapter 1.6 of  \cite{Berger_1985_SDT}).
}  

The second main contribution of this paper is to propose and axiomatize a class of ex ante optimality criteria that are {dynamically consistent}, and therefore do not suffer from the aforementioned problems. These correspond to the ex ante preferences of a \emph{sophisticated} agent in the sense of \cite{Strotz_ReStud_1955}, who correctly anticipates the interim preferences she will have after the data realization.  Normatively, such criteria are appealing because they select decision rules that are optimal not only ex ante, but also after each possible data realization.\footnote{Computationally, this means  that optimal rules can be computed via backwards induction.} Hence, such decision rules are \emph{credible}: the DM  has no incentive to deviate from the  action prescribed by the ex ante rule after the data realization.   We show that our dynamically consistent optimality criteria nest, as special cases, the as-if optimization of \cite{Manski_2021_ECTA} and Gamma${}^*$-minimax criterion of \cite{Lim_2026_PIA}.

At a high level, the dynamically consistent optimality criteria we characterize can be described as semi-Bayesian. They resemble Bayesian criteria in that they evaluate decision rules by aggregating continuation payoffs across possible data realizations. They differ from ordinary Bayesian criteria, however, in the role assigned to prior beliefs. Here, the DM is assumed to have enough prior information to form an ex ante marginal distribution over the data, but the interim criterion used after observing the data need not be obtained by Bayesian updating of a prior over the state space. Instead,  the DM can use any statistical procedure (e.g., worst-case evaluation over a confidence region) after the data is realized.


This distinction is important because the underlying state space may  have \emph{``no concrete reality in terms of physical quantities that are easily accessible to intuition, and yet the phenomenon under study may be quite familiar to the investigator"} \citep[p.~95]{Berger_1985_SDT}. That is, requiring the DM to have well-defined prior beliefs over the state space, in which case the marginal equals the prior predictive, is a more restrictive assumption than assuming that she has a subjective marginal distribution.   For instance, a development economist evaluating a cash-transfer program may be unable to specify a  prior over the collection of all joint distributions of potential outcomes, but she may have  historically grounded beliefs about the distribution of study results likely to arise. 

Finally, we remark that our  framework can also describe a Bayesian DM with prior beliefs who nevertheless performs frequentist procedures using the data. This interpretation captures a familiar feature of empirical practice, where researchers typically adhere to the frequentist paradigm despite having substantive prior information about the phenomenon under study.

\subsection{Related Literature}

Since the early contributions of \cite{Manski_2004_ECTA} and \cite{Dehejia_JoE_2005}, a growing literature in econometrics applies statistical decision theory to economically meaningful problems, such as treatment choice and policy learning. In the frequentist setting, finite-sample minimax-regret optimal treatment rules have been derived across a variety of settings \citep[e.g.,][]{Stoye_2009_JoE, tetenov2012statistical, yata2021optimal, chen2025note, guggenberger2026minimax}. Many decision rules based on asymptotic minimax-regret guarantees have also been proposed \citep[e.g.,][]{hirano2009asymptotics, Kitagawa_Tetenov_2018_ECTA, athey2021policy, mbakop2021model, Chernozhukov_et_al_2025, Sun_2026_EWM}. Recent work by \cite{fernandez2024robust}, \cite{Giacomini_Kitagawa_Read_2025}, and \cite{Christensen_et_al_Restud_2026} studies related problems from the robust Bayesian perspective.

The aforementioned papers focus primarily on ex ante optimality. Our contribution is to ask whether rules justified by such ex ante criteria remain optimal after the DM observes the data.\footnote{This question also connects to recent work on pre-analysis plans, which studies the value of ex ante commitment and the consequences of post-data deviations from preregistered analyses \citep[e.g.,][]{olken2015promises, kasy2024optimalpreanalysisplansstatistical, sarfati2025post}.} In the context of robust Bayesian analysis, the issue of dynamic (in)consistency has been studied extensively across both the statistical and axiomatic decision theory literatures \citep[e.g.,][]{vidakovic2000gamma, ES_recursive, Klibanoff_Hanany_2007_TE, Klibanoff_Hanany_2009_BEJTE, Stoye_2012_AR, fernandez2024robust, Lim_2026_PIA}. By contrast,  the dynamic inconsistency of frequentist optimality criteria has received far less attention, likely because the corresponding interim problem is not well-defined. We address this gap by formalizing the interim problem associated with such criteria, while proposing a diagnostic  for determining whether ex ante optimal decision rules provide credible prescriptions in the interim period.

More broadly, we contribute to work at the intersection of statistical and axiomatic decision theory. As advocated for in \cite{Stoye_2012_AR}, one principled approach to selecting an optimality criterion is axiomatic: the practitioner first considers the behavioral characterizations of the preferences induced by each criterion, then selects the criterion whose implications appear the most intuitive. In this regard, \cite{Hayashi_2008_JET}, \cite{Stoye_2011_JET,Stoye_2011_TD, Stoye_2012_AR}, and \cite{andrews2026misspecification} provide axiomatic foundations for various statistical decision criteria. We contribute to this literature by characterizing the behavioral foundations of our dynamically consistent optimality criteria, which correspond to the choices of a sophisticated DM who anticipates her interim preferences.

The remainder of the paper is organized as follows. Section~\ref{section_model} formally introduces the model.  Section~\ref{section_interim} presents our framework for studying the interim problem associated with frequentist minimax criteria and proposes the diagnostic procedure for assessing their interim credibility. Sections~\ref{section_treatment_choice} and \ref{section_Yata} present examples of this diagnostic exercise. Section~\ref{section_DC_criteria} defines and axiomatizes our class of dynamically consistent optimality criteria. Section~\ref{section_empirics} compares the interim prescriptions of various optimality criteria using both real and simulated data.  Section~\ref{section_conclusion} concludes.  All proofs are in the appendix, unless stated otherwise. 

\section{Model} \label{section_model}

\subsection{Preliminaries}

We first introduce the statistical decision problem in general form. Let $\Theta$ denote the state space, $ Z$ the data (sample) space, and $\mathcal A$ the action space. Unless otherwise stated, these spaces are endowed with the Borel $\sigma$-algebras generated by their respective topologies, which will be made explicit in specific applications. Let $\Delta(\cdot)$ denote the set of Borel  measures over the relevant space, and let $P_\theta\in\Delta (Z)$ be the sampling distribution under $\theta\in\Theta$, with corresponding density $p_\theta$. 

A decision rule is a measurable map $\delta: Z\to\Delta(\mathcal A)$, where $\mathcal A$ is an action space and $\Delta(\mathcal A)$ allows for randomized actions. Let $\mathcal D$ denote the set of feasible decision rules, which we assume is closed under measurable pasting.\footnote{That is, for any $\delta_1,\delta_2\in\mathcal D$ and measurable $B\subseteq Z$, the rule that agrees with $\delta_1$ on $B$ and with $\delta_2$ on $B^c$ also belongs to $\mathcal D$. This condition is satisfied, for instance, when $\mathcal{D}$ is the collection of all decision rules $\delta:  Z \to \Delta(\mathcal{A})$.} Given the payoff (i.e., negative loss)  function $u:\mathcal A\times\Theta\to\mathbb R$, we write $u(\delta( z),\theta)$ to denote the payoff averaged over the randomized action $\delta( z)$. The expected payoff (i.e., negative risk) from rule $\delta$ in state $\theta$ is $U(\delta,\theta)=\E_\theta[u(\delta( z),\theta)]$, where $\E_\theta$ denotes integration against $P_\theta$. Regret is defined relative to the optimal  feasible rule given oracle knowledge of $\theta$, such that  $R(\delta,\theta) \equiv \sup_{d\in\mathcal D}U(d,\theta)-U(\delta,\theta).$

 \subsection{Ex Ante Decisions}

In  Bayesian decision theory, the DM is assumed to have a prior  $\pi \in \Delta(\Theta)$. The corresponding ex ante expected payoff from  $\delta$ is $\E_\pi[U(\delta,\theta)]$, where $\E_\pi$ denotes integration against the prior $\pi$. The Bayes regret of $\delta$ is $\E_\pi[R(\delta,\theta)] = \E_\pi[\sup_{d\in\mathcal D} U(d,\theta)]-\E_\pi[U(\delta,\theta)]$. Since the first term does not depend on $\delta$, minimizing Bayes regret is equivalent to maximizing  expected payoff. We refer to any such rule as the Bayes rule under prior $\pi$, denoted $\delta_\pi\in\arg\max_{\delta\in\mathcal D}\E_\pi[U(\delta,\theta)]$.

The robust Bayesian generalization allows for multiple priors.\footnote{See \cite{Giacomini_Kitagawa_Read_2025} for a recent review of robust Bayesian analysis.}   Let $\Pi \subseteq\Delta(\Theta)$ denote the set of priors the DM regards as feasible. The Gamma-minimax loss (written in payoff form) and regret rules solve, respectively,  
\begin{align*}
    \delta_{\Pi\text{-MML}} & \in\arg\max_{\delta\in\mathcal D}\inf_{\pi\in\Pi}\E_\pi[U(\delta,\theta)] \\  
    \delta_{\Pi\text{-MMR}} &  \in\arg\min_{\delta\in\mathcal D}\sup_{\pi\in\Pi}\E_\pi[R(\delta,\theta)].
\end{align*}
The distinction between loss- and regret-based rules is salient with multiple priors because the oracle term $\E_\pi[\sup_{d\in\mathcal D}U(d,\theta)]$ varies with $\pi$.  With singleton  $\Pi = \{\pi\}$, both rules reduce to the Bayes rule under $\pi$.

Two canonical frequentist optimality criteria can be obtained as special cases of the Gamma-minimax criteria.  Minimax loss (i.e., maximin payoff) evaluates rules by their worst-case payoff across states and solves
\begin{align*}
    \delta_{\mathrm{MML}}\in\arg\max_{\delta\in\mathcal D}\inf_{\theta\in\Theta}U(\delta,\theta),
\end{align*}
while minimax regret does the same using regret to solve
\begin{align*}
    \delta_{\mathrm{MMR}}\in\arg\min_{\delta\in\mathcal D}\sup_{\theta\in\Theta}R(\delta,\theta). 
\end{align*}
These coincide with the Gamma-minimax criteria when $\Pi=\Delta(\Theta)$, since  worst-case expectations over $\Delta(\Theta)$ reduce to worst-case evaluations over states. In this sense, minimax loss and regret can be viewed as fully prior-free. They are also maximally conservative in the sense that  no restrictions are posed on the collection of distributions over $\Theta$,  and each rule is evaluated at the least favorable state.

\subsection{Interim Decisions}

Let $\mathcal D( z)\equiv \{\delta( z)\in\Delta(\mathcal A):\delta\in\mathcal D \}$ denote the set of feasible actions after the realization of $z \in Z$.  At each $z$, the DM's interim problem concerns the choice of an optimal action $a\in\mathcal D( z)$.  An optimality criterion is \emph{dynamically consistent} if, after observing $z$, the DM still finds it optimal to implement the action prescribed by the ex ante optimal decision rule.


\begin{definition}[Dynamic Consistency]
Fix an ex ante criterion and an associated interim criterion. Let $\mathcal D^*$ be the set of ex ante optimal rules, and let $\mathcal A^*( z)\subseteq\mathcal D( z)$ be the set of interim optimal actions after observing $ z$. The criterion is dynamically consistent if for every $\delta^* \in\mathcal D^*$, $\delta^*( z)\in\mathcal A^*( z)$ almost surely with respect to the relevant predictive distribution over $ Z$. 
\end{definition}

In the Bayesian setting, the DM's interim problem after observing the realization of $ z\in Z$ is well-defined. Given the prior $\pi$, Bayes rule induces a posterior $\pi(\cdot\mid z)$, and the DM chooses $a_{\pi(\cdot\mid z)}$ to maximize the conditional expected payoff, such that $a_{\pi(\cdot\mid z)} \in \mathcal{A}^*( z) = \argmax_{a\in  \mathcal{D}( z)} \E_{\pi(\cdot\mid z)}[u(a,\theta)]$.  As in the ex ante case, conditional expected regret yields the same optimal action.   It is well-known that the Bayes rule $\delta_\pi$ is dynamically consistent with interim problem (see, e.g., Chapter 4 of \cite{Berger_1985_SDT}). That is, we have that $\delta_\pi( z) \in \mathcal{A}^*( z)$  $P_\pi$-almost surely, where $P_\pi =\int_\Theta P_\theta \,  d\pi(\theta)$ is the prior predictive  over $ Z$.

With multiple priors, each $\pi\in \Pi$ is updated prior-by-prior, generating the posterior set $\Pi_ z \equiv \{\pi(\cdot\mid z):\pi\in\Pi \}$. The conditional Gamma-minimax loss  (written in payoff form) and regret criteria solve, respectively, 
\begin{align*}
    a_{\Pi_ z\text{-MML}} & \in\arg\max_{a\in \mathcal{D}( z) }\inf_{\pi\in\Pi_ z}\E_\pi[u(a,\theta)]   \\
    a_{\Pi_ z\text{-MMR}} & \in\arg\min_{a\in \mathcal{D}( z) }\sup_{\pi\in\Pi_ z}\E_\pi \left[\sup_{a'\in \mathcal{D}( z)}u(a',\theta)-u(a,\theta) \right].
\end{align*}
Unlike in the single-prior setting, Gamma-minimax rules are not generally dynamically consistent \citep[e.g.,][]{vidakovic2000gamma, ES_recursive}.\footnote{\cite{Lim_2026_PIA} clarifies the source of this dynamic inconsistency and proposes an alternative ex ante criterion that restores dynamic consistency. We return to this point in Section~\ref{section_DC_criteria}.}

For the frequentist criteria, we face a more fundamental difficulty: there is no canonical interim problem to compare the ex ante rule against. Minimax loss and minimax regret can be represented as $\Gamma$-minimax criteria with $\Pi=\Delta(\Theta)$, but updating the full simplex prior-by-prior does not yield a meaningful Bayesian analogue. If the realized data has positive likelihood/density at every state, then the prior-by-prior update of $\Delta(\Theta)$ remains entire simplex, i.e., no learning from the data occurs.\footnote{This would be the case, for instance, if $Z \sim N(\theta, I_m)$, such that $p_\theta(z) > 0$ for all $z \in \R^m$.} Otherwise, prior-by-prior updating is not well-defined, since there exist Dirac priors under which the realized data have probability zero. 

In the following section, we formalize the interim problem that is dynamically consistent with frequentist ex ante optimality criteria. We thereby provide a framework for  evaluating the interim credibility of decision rules selected for their ex ante frequentist minimax guarantees.


\section{Interim Problem for Frequentist Minimax} \label{section_interim}

\subsection{Preliminaries}

When solving for minimax optimal rules, a common proof strategy interprets the decision problem as a zero-sum game against Nature and invokes tools from game theory to solve for equilibrium.\footnote{See, e.g., the discussion in Section 4.1 of \cite{Stoye_2012_AR} and Chapter 5 of \cite{Berger_1985_SDT}.} In the game, Nature chooses the distribution $ \pi \in \Delta (\Theta)$ over states, while the DM  chooses $\delta\in\mathcal D$.  Under minimax loss and regret, respectively, Nature's payoff is $-U(\delta,\theta)$ and  $R(\delta, \theta) = \sup_{d\in\mathcal D}U(d,\theta)-U(\delta,\theta)$.  In either case, we then see that Nature's payoff is of the form $c(\theta)-U(\delta,\theta),$  where $c(\theta)=0$ for minimax loss and $c(\theta)=\sup_{d\in\mathcal D}U(d,\theta)$ for minimax regret.  

If the corresponding zero-sum game admits a Nash equilibrium $(\delta^*,\pi^*)$, then  $\delta^*$ is minimax optimal for the relevant criterion, and $\pi^*$ is a \emph{least favorable prior} in the sense that 
\begin{align*}
    \delta^*  & \in \arg\min_{\delta\in\mathcal D} \E_{\pi^*}  \left[ c(\theta)-U(\delta,\theta) \right]   \\
    & = \arg\max_{\delta\in\mathcal D} \E_{\pi^*} \left[  U(\delta,\theta)  \right].
\end{align*}
Note that the latter equality holds because  $c(\theta)$ does not depend on the choice of $\delta$.  In other words, we see that once a least favorable prior $\pi^*$ has been identified, both the minimax loss and minimax regret rules are Bayes against $\pi^*$. 

\subsection{Updating Least Favorable Priors}

The least favorable prior representation suggests a natural candidate for defining the interim criterion: update  $\pi^*$ and solve the corresponding conditional Bayes problem. Under the game-theoretic interpretation, this is equivalent to Nature's initial choice of $\pi^* \in \Delta(\Theta)$ being irrevocable, such that the remaining uncertainty after the data realization is reflected by its posterior update.\footnote{This idea resembles the dynamically consistent updating rules studied by \cite{Klibanoff_Hanany_2007_TE,Klibanoff_Hanany_2009_BEJTE}, which preserve the optimality of ex ante decisions after the realization of information. The distinction is that a least favorable prior is a global saddle-point object for the ex ante minimax problem, so their posterior updates need not  coincide with the full set of posteriors that support dynamic consistency.}

Let $P_{\pi^*} \equiv \int_\Theta P_\theta \, d\pi^*$ denote the prior predictive distribution over $ Z$ induced by $\pi^*$.  The following proposition shows that any interim optimality criteria dynamically consistent with frequentist minimax rules agrees with the conditional Bayes problem induced by the least favorable prior. 

\begin{proposition} \label{prop_posterior_rep}
Suppose $\delta^*\in \{\delta_{\mathrm{MML}},\delta_{\mathrm{MMR}}\}$ is a minimax optimal rule chosen from $\mathcal D$, and let $\pi^*$ be a least favorable prior for the corresponding criterion. Then,
\begin{align*}
\delta^*( z)
\in
\arg\max_{a\in\mathcal D( z)}
\E_{\pi^*(\cdot\mid z)}[u(a,\theta)]
\end{align*}
$P_{\pi^*}$-almost surely.
\end{proposition}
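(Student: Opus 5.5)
The argument combines the least favorable prior characterization from the preceding subsection with the standard fact that a Bayes rule must be interim optimal almost surely under the prior predictive. The only new ingredient is to check that closedness of $\mathcal D$ under measurable pasting makes the pasting argument legitimate.

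\emph{Step 1.} Since $(\delta^*,\pi^*)$ is a saddle point, $\delta^*\in\arg\max_{\delta\in\mathcal D}\E_{\pi^*}[U(\delta,\theta)]$. The offset $c(\theta)$ drops out for both minimax loss and minimax regret, as already shown in the display preceding the proposition.

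\emph{Step 2.} Rewrite the Bayes objective in interim form. By Fubini/Tonelli and the disintegration of the joint law $\pi^*(d\theta)P_\theta(dz)$ into $P_{\pi^*}(dz)\,\pi^*(d\theta\mid z)$, we have
\begin{align*}
\E_{\pi^*}[U(\delta,\theta)] = \int_Z V(\delta(z),z)\, dP_{\pi^*}(z), \qquad V(a,z)\equiv \E_{\pi^*(\cdot\mid z)}[u(a,\theta)].
\end{align*}
This requires integrability of $u$, which I will assume as implicit in $U$ being well-defined. It also uses that $u(a,\theta)$ is linear in a randomized $a$.

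\emph{Step 3.} Argue by contradiction. Let
\[
B=\Big\{z: V(\delta^*(z),z)<\sup_{a\in\mathcal D(z)}V(a,z)\Big\},
\]
and suppose $P_{\pi^*}(B)>0$. I want a rule $\tilde\delta\in\mathcal D$ that agrees with $\delta^*$ off a positive-measure subset $B'\subseteq B$ and strictly improves $V$ on $B'$. Integrating then gives $\E_{\pi^*}[U(\tilde\delta,\theta)]>\E_{\pi^*}[U(\delta^*,\theta)]$, contradicting Step 1.

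The construction of $\tilde\delta$ is the main obstacle. Each improving action $a\in\mathcal D(z)$ is of the form $\delta(z)$ for some $\delta\in\mathcal D$, but the witnessing $\delta$ depends on $z$, and pasting is only assumed for finitely many (two) rules on measurable sets. I would handle this as follows.
\begin{itemize}[label={}]
\item For each $\delta\in\mathcal D$, set $B_\delta=\{z: V(\delta(z),z)>V(\delta^*(z),z)\}$, which is measurable. Then $B=\bigcup_{\delta\in\mathcal D} B_\delta$.
\item If some $B_\delta$ has positive $P_{\pi^*}$-measure, paste $\delta$ on $B_\delta$ with $\delta^*$ on $B_\delta^c$. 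This uses only two-rule pasting and yields the contradiction.
\item It remains to rule out that every $B_\delta$ is null while the uncountable union $B$ is not. I would first try an essential-supremum argument. Take a sequence $\delta_n$ such that $\bigcup_n B_{\delta_n}$ attains the essential union of the family $\{B_\delta\}$. Since each $B_{\delta_n}$ is null, the essential union is null.
\item Then show $B$ is contained, up to a null set, in this essential union. This holds if $B$ is measurable, or if $\mathcal D(z)$ is generated measurably, e.g. when $\mathcal D$ is all measurable rules and a measurable selection theorem gives a measurable near-maximizer $z\mapsto a(z)$.
\end{itemize}
If this step cannot be closed in full generality, I would state the almost-sure conclusion as the statement that no $\delta\in\mathcal D$ improves on $\delta^*$ on a $P_{\pi^*}$-positive set. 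I would then note that this is equivalent to the displayed claim whenever a measurable selection exists, which covers the leading case where $\mathcal D$ is the set of all rules.
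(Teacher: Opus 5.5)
Your route is the paper's: $\delta^*$ is Bayes against $\pi^*$, iterated expectations puts the objective in interim form, and a pasting argument gives the contradiction. The paper's proof handles your ``main obstacle'' in one line. It asserts that the rule equal to $a(z)$ on $B$ and to $\delta^*$ on $B^c$ lies in $\mathcal D$ by closure under measurable pasting. You are right that this does not follow: $z\mapsto a(z)$ need be neither measurable nor a member of $\mathcal D$, and only two-rule pasting is assumed. Your first two bullets rigorously prove the rule-by-rule statement: for each fixed $\delta\in\mathcal D$, $V(\delta(z),z)\le V(\delta^*(z),z)$ for $P_{\pi^*}$-a.e.\ $z$.

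Your last two bullets do not close the gap. The essential-union step is vacuous, since if every $B_\delta$ is null the essential union is trivially null. More importantly, the claim that $B$ is covered up to a null set ``if $B$ is measurable'' is false. In fact the displayed conclusion does not follow from the stated hypotheses at all.

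Here is a counterexample. Let $\Theta=\{\theta_0\}$ and let $P_{\theta_0}$ be uniform on $Z=[0,1]$. Let $\mathcal A=\{0,1\}$ with $u(1,\theta_0)=1$ and $u(0,\theta_0)=0$, identifying a randomized action with the probability of playing $1$. Let $\mathcal D$ be the measurable rules that vanish Lebesgue-almost everywhere. Then:
\begin{itemize}
\item $\mathcal D$ is closed under measurable pasting.
\item $\mathcal D(z)=[0,1]$ for every $z$, using the rules $z'\mapsto a\mathbf{1}\{z'=z\}$.
\item $U(\delta,\theta_0)=0$ for all $\delta\in\mathcal D$. Hence $\delta^*\equiv 0$ is minimax for both loss and regret, and the point mass at $\theta_0$ is least favorable.
\end{itemize}
Yet the interim argmax is $\{1\}$ at every $z$. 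The conclusion therefore fails on a set of full $P_{\pi^*}$-measure, with every $B_\delta$ null and $B=Z$ measurable.

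What is actually needed is that $B$ be covered, up to a null set, by countably many sets $B_{\delta_n}$ with $\delta_n\in\mathcal D$. Three sufficient conditions:
\begin{itemize}
\item $Z$ is countable: paste a single witness on an atom.
\item $\mathcal D$ contains the constant rules and $\mathcal A$ is countable: by linearity of $V(\cdot,z)$ in the randomization, $B$ is the union of the sets $B_\delta$ over constant rules $\delta\equiv a$, $a\in\mathcal A$.
\item $\mathcal D$ is all measurable rules and measurable $1/n$-maximizers $a_n(\cdot)$ of $V(\cdot,z)$ exist: with $\delta_n\equiv a_n(\cdot)$, the set $\{z: V(\delta^*(z),z)<\sup_a V(a,z)-1/n\}$ lies in $B_{\delta_n}$.
\end{itemize}
Both applications in the paper fall under these cases. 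Your fallback is therefore the correct resolution, not a mere hedge: state the rule-by-rule version in general and upgrade it under a selection condition of this kind. Drop the ``$B$ measurable'' clause.
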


\begin{proof}
    See Appendix~\ref{appendix_proof_LFP}.
\end{proof}

The proof of Proposition~\ref{prop_posterior_rep}  is formally analogous to the standard argument that Bayes rules are dynamically consistent. The interpretation, however, is quite different. In the Bayesian case, the prior $\pi$ is an exogenous parameter that describes the  DM's beliefs, and dynamic consistency holds for \emph{any} decision problem. By contrast, the least favorable prior $\pi^*$ need not be unique, and it is not an exogenous object. Rather, $\pi^*$ is specific to the decision problem at hand, depending on the both the optimality criterion and the available class of decision rules  $\mathcal D$.

\subsection{Discussion}

Proposition~\ref{prop_posterior_rep} should be read as a \emph{rationalization} of interim criteria that are dynamically consistent with the minimax rule, rather than as an analogue of Bayesian dynamic consistency in the usual preference-based sense \citep[e.g.,][]{epstein1993dynamically, Ghirardato_BEU}. Given a least favorable prior $\pi^*$ that supports the ex ante minimax rule, updating $\pi^*$ produces a conditional Bayes problem for which the prescribed continuation action $\delta^*( z)$ is optimal.  However, $\pi^*(\cdot\mid z)$ need not represent the DM's actual interim beliefs or preferences.\footnote{The same observation applies to Gamma-minimax criteria: an ex ante Gamma-minimax rule is supported by a least favorable prior, whose posterior provides a Bayes rationalization of the continuation action. But in the robust Bayesian setting, the prior set $\Pi$ is specified exogenously, and the DM's interim beliefs are represented by the posterior set $\Pi_ z$, rather than the posterior induced by some least favorable prior.}

This perspective suggests a way to assess the \emph{interim credibility} of decision rules justified by ex ante minimax guarantees. Given a least favorable prior $\pi^*$ that rationalizes the ex ante minimax rule $\delta^*$, one can ask whether $\delta^*( z)$ aligns with the action a  researcher would find compelling after observing $ z$. When the two disagree,  the question becomes whether it is ``reasonable" to make decisions as if $\pi^*(\cdot\mid z)$ represented the relevant interim uncertainty.\footnote{Also  related is the ML-II approach to prior selection (e.g., Chapter 3.5.4 of \cite{Berger_1985_SDT}).}
As we  discuss in Section~\ref{section_treatment_choice}, a related exercise is considered in the conclusion of \cite{Stoye_2009_JoE}.


Our diagnostic exercise is naturally most informative when the interim choice set $\mathcal D( z)$ is sufficiently rich. Otherwise, even an implausible posterior may appear reasonable simply because the set of available alternatives leaves little scope for disagreement. The next section  shows that  the collection of least favorable posteriors can indeed be ``unreasonable" across various examples. 

Finally, we note that even though least favorable priors are central to game-theoretic proofs of minimax optimality, their implications beyond this role as a proof device have received little attention. To the best of our knowledge, this paper is the first to use least favorable priors to formalize the interim problem for frequentist decision making. 

\section{Application: Treatment Choice} \label{section_treatment_choice}

\subsection{Setup}

We now specialize the model to a problem of binary treatment choice, following  \cite{Stoye_2009_JoE}. Let $T\in \{0,1\}$ denote treatment, and suppose each individual $j \in J$ has binary potential outcomes $y^j(t)\in \{0,1\}$ for $t \in \{0,1\}$.\footnote{Everything in this section can be generalized to the setting with arbitrary bounded potential outcomes. We assume binary potential outcomes for simplicity of exposition.}  Assigning treatment $t$ to an individual induces the random variable $Y_t$. Because the payoff and the sampling distribution depend on the law of $(Y_0,Y_1)$ only through its two marginal means, we take the state to be the pair of success probabilities $\theta=(\mu_0,\mu_1)\in\Theta\equiv[0,1]^2$, with $\mu_t \equiv \Pr(Y_t=1)=\E[Y_t].$   


The action space is $\mathcal A=[0,1]$, where $a\in[0,1]$ is the probability of assigning treatment $t=1$.\footnote{Equivalently, take $\mathcal{A} = \{0,1\}$ and  consider randomized decision rules $\delta : Z \to \Delta(\mathcal{A})$.} The DM observes $ z=(t_n,y_n)_{n=1}^N
\in  Z\equiv \{0,1\}^{2N}.$   A treatment rule is  then $\delta: Z\to \mathcal{A}$,  and the expected payoff of rule $\delta$ in state $\theta=(\mu_0,\mu_1)$ is
\begin{align*}
U(\delta,\theta)
=
\mu_0\left(1-\E_\theta[\delta( z)]\right)
+
\mu_1\E_\theta[\delta( z)].
\end{align*}
If $\theta$ were known, the DM would assign treatment $1$ whenever $\mu_1>\mu_0$, treatment $0$ whenever $\mu_0>\mu_1$, and would be indifferent when $\mu_0=\mu_1$. 

We consider a random-assignment design where treatment assignments are independent fair coin flips, such that $T_j$ is independent of potential outcomes and satisfies $P(T_j=1)=P(T_j=0)= \frac{1}{2}$. Conditional on $T_j=t$, the observed outcome $y_j$ is an independent realization of $Y_t$.   

\subsection{Ex Ante Decision}

 
Let $N_t$ denote the number of sample observations assigned to treatment $t$, and let $\bar y_t$ be the corresponding sample average. The following fact restates Proposition 1(ii) of \cite{Stoye_2009_JoE} in the present notation.

\begin{fact}[\cite{Stoye_2009_JoE}] \label{fact_stoye}
Let $I_N \equiv N_1(\bar y_1- \frac{1}{2})-N_0(\bar y_0- \frac{1}{2})$.  The   optimal treatment rule according to the minimax regret criterion is
\begin{align*}
\delta^*( z)
=
\begin{cases}
0, & I_N<0 \\
1/2, & I_N=0 \\
1, & I_N>0.
\end{cases}
\end{align*}
\end{fact}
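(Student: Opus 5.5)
The plan is to verify a saddle point of the zero-sum game against Nature, as in Section~\ref{section_interim}. Concretely, I would exhibit a prior $\pi^*$ on $\Theta=[0,1]^2$ such that (i) $\delta^*$ is Bayes against $\pi^*$, and (ii) $\pi^*$ is a best response to $\delta^*$, meaning it is supported on states that maximize $R(\delta^*,\theta)$. If both hold, then for any $\delta$,
\begin{align*}
\sup_\theta R(\delta,\theta)\ge \E_{\pi^*}[R(\delta,\theta)]\ge \E_{\pi^*}[R(\delta^*,\theta)]=\sup_\theta R(\delta^*,\theta),
\end{align*}
and minimax regret optimality follows.

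\textbf{Step 1: reduce Nature's states.} I first reduce regret to a one-dimensional object. With $\Delta\equiv\mu_1-\mu_0$, the regret is $R(\delta,\theta)=\Delta^+(1-\E_\theta\delta)+\Delta^-\E_\theta\delta$. The key reduction is a symmetry/convexity argument: for fixed $\Delta$, the worst case is attained on the line $\mu_0+\mu_1=1$, i.e.\ at $\mu_1=\tfrac12+\tfrac{\Delta}{2}$ and $\mu_0=\tfrac12-\tfrac{\Delta}{2}$. Stoye's argument for this uses that, in any state, the outcome data can be coupled with data from such a ``symmetric'' state in a way that only garbles the statistic $I_N$. Taking that reduction as the hard step, I instead choose $\pi^*$ directly on this line. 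I take $\pi^*$ symmetric: mass $\tfrac12$ at $(\tfrac12-\tfrac{d}{2},\tfrac12+\tfrac{d}{2})$ and mass $\tfrac12$ at the reflected point, for a suitable $d\in(0,1]$.

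\textbf{Step 2: Bayes property.} Under this prior, each observation contributes a likelihood ratio between the two states that depends on $(t_n,y_n)$ only through whether $y_n$ ``favors'' treatment $1$. A treated success or a control failure favors $1$; a treated failure or a control success favors $0$. The posterior log-odds are therefore proportional to
\begin{align*}
(\#\text{treated successes}-\#\text{treated failures})-(\#\text{control successes}-\#\text{control failures})=2I_N.
\end{align*}
Hence the Bayes action is $1$ if $I_N>0$, $0$ if $I_N<0$, and any action (in particular $\tfrac12$) if $I_N=0$. So $\delta^*$ is Bayes against $\pi^*$.

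\textbf{Step 3: best response of Nature, the main obstacle.} Along the symmetric line, $I_N$ is a sum of i.i.d.\ $\pm\tfrac12$ steps with success probability $\tfrac{1+d}{2}$. The regret of $\delta^*$ therefore becomes an explicit function
\begin{align*}
r(d)=d\left[\Pr(I_N<0)+\tfrac12\Pr(I_N=0)\right].
\end{align*}
I would pick $d^*\in\arg\max_{d\in[0,1]} r(d)$; it exists by continuity and is interior because $r(0)=0$. By the symmetry of $\delta^*$, the reflected state attains the same regret. The genuinely hard part is justifying that no state off the symmetric line yields larger regret than $r(d^*)$. For this I would cite (or reproduce) the garbling/mean-preserving-spread argument of \cite{Stoye_2009_JoE}: fixing $\Delta$ and moving $(\mu_0,\mu_1)$ toward $\tfrac12$ symmetrically increases the noise in $I_N$ without shifting its drift, which raises the error probability of $\delta^*$. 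Since the paper presents this result as a restatement, invoking Stoye's Proposition~1(ii) for this step would be acceptable.
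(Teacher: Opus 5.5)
Your overall structure is the right one: a saddle point built on the symmetric two-point prior, which is the prior $\pi^*=\frac12\delta_{(a,1-a)}+\frac12\delta_{(1-a,a)}$ the paper attributes to Stoye's proof. Your Step 2 likelihood-ratio computation, giving posterior log-odds $2I_N\log\frac{a}{1-a}$, is correct.

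The gap is in Step 3. You call ``no off-line state does worse'' the hard step and justify it with a garbling/mean-preserving-spread story: at fixed $\Delta$, moving $(\mu_0,\mu_1)$ toward $\frac12$ is supposed to increase the noise in $I_N$. In this design that mechanism is false. Under fair-coin assignment, each observation contributes $(2T_j-1)(y_j-\frac12)=+\frac12$ with probability $\frac12\mu_1+\frac12(1-\mu_0)=\frac{1+\Delta}{2}$ at \emph{every} state. So $I_N+\frac N2\sim\mathrm{Binomial}(N,\frac{1+\Delta}{2})$, and the law of $I_N$ does not depend on $\mu_0+\mu_1$ at all; in particular its variance is $N\frac{1-\Delta^2}{4}$. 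Falling back on Stoye's Proposition 1(ii) for this step is also circular, since that proposition is exactly the statement you are asked to prove.

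The missing idea is simply that the computation you already did ``along the symmetric line'' holds everywhere. This is the content of the paper's Lemma~\ref{lemma_difference}. For $\Delta>0$ it gives $R(\delta^*,\theta)=\Delta\,[P(A_N<\frac N2)+\frac12P(A_N=\frac N2)]$ with $A_N\sim\mathrm{Binomial}(N,\frac{1+\Delta}{2})$, and the case $\Delta<0$ follows by the map $A_N\mapsto N-A_N$. Hence $R(\delta^*,\theta)=G_N(|\mu_1-\mu_0|)$ identically on $\Theta$, and $\sup_\theta R(\delta^*,\theta)=\max_{d\in[0,1]}G_N(d)$. Setting $d=d^*$, this maximum is attained at both support points of your $\pi^*$. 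Moreover $d^*\in(0,1)$ because $G_N(0)=G_N(1)=0$ for $N\ge 1$, so your Step 2 applies with $a=\frac{1+d^*}{2}>\frac12$. With this substitution, the reduction in Step 1 becomes unnecessary and your chain of inequalities closes the proof.
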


The least favorable prior used in \cites{Stoye_2009_JoE} proof of the minimax optimality of $\delta^*$ is  $\pi^* = \frac{1}{2}\delta_{(a,1-a)} + \frac{1}{2}\delta_{(1-a,a)}$ for some $a > \frac{1}{2}$.  Even though the least favorable prior is not unique, least favorability imposes a common restriction.

\begin{proposition} \label{prop_treatment_LFP}
Assume $N \geq 1$. There exists a unique constant $d_N \in (0,1)$ such that every least favorable prior $\pi^*$ places probability one on states satisfying 
\begin{align*}
    |\mu_1 - \mu_0| = d_N.
\end{align*}
\end{proposition}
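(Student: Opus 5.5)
The plan is to combine the saddle-point characterization of least favorable priors with an explicit computation of the regret of the rule $\delta^*$ in Fact~\ref{fact_stoye}. Let $(\delta^*,\pi^*)$ be the equilibrium pair, with $\delta^*$ minimax regret and $\pi^*$ least favorable. Zero-sum Nash equilibria are interchangeable, so $(\delta^*,\pi^*)$ is itself a saddle point. Nature's best response then forces $\pi^*$ to be supported on $\arg\max_{\theta\in\Theta} R(\delta^*,\theta)$. It therefore suffices to show two things: $R(\delta^*,\theta)$ depends on $\theta$ only through $\Delta\equiv\mu_1-\mu_0$, via a function $r(|\Delta|)$; and $r$ has a unique maximizer $d_N\in(0,1)$ on $[0,1]$.

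\textbf{Step 1 (reduce regret to a function of $|\Delta|$).} Write $I_N$ as a sum of per-observation contributions of $\pm\tfrac12$. A treated success or a control failure contributes $+\tfrac12$, and a treated failure or a control success contributes $-\tfrac12$. Under fair randomization each observation contributes $+\tfrac12$ independently with probability $\tfrac12\mu_1+\tfrac12(1-\mu_0)=\tfrac{1+\Delta}{2}$. Hence $I_N=S-\tfrac N2$ with $S\sim\mathrm{Bin}\bigl(N,\tfrac{1+\Delta}{2}\bigr)$. For $\Delta>0$ the oracle treats, so
\begin{align*}
R(\delta^*,\theta)=\Delta\, g_N(p),\qquad g_N(p)\equiv \Pr\bigl(S<\tfrac N2\bigr)+\tfrac12\Pr\bigl(S=\tfrac N2\bigr),\qquad p=\tfrac{1+\Delta}{2}.
\end{align*}
By the symmetry $S\mapsto N-S$ the case $\Delta<0$ gives the same expression in $|\Delta|$, and $R(\delta^*,\theta)=0$ when $\Delta=0$. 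So $R(\delta^*,\theta)=r(|\Delta|)$. Moreover $r(0)=0$, and $r(1)=0$ because $S=N$ almost surely when $p=1$. Since $r>0$ on $(0,1)$, any maximizer lies in the open interval $(0,1)$.

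\textbf{Step 2 (reduce even $N$ to odd $N$).} For even $N$, condition on the last observation to verify the identity $g_N(p)=\Pr\bigl(S_{N-1}<\tfrac{N-1}{2}\bigr)$; the case $N=2$ gives $(1-p)^2+p(1-p)=1-p$. Thus in all cases $g_N(p)=\Pr(S_M\le k)$ for an odd $M$ and some integer $k$.

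\textbf{Step 3 (uniqueness of the maximizer), the main obstacle.} I would use the incomplete-beta representation
\begin{align*}
\Pr(S_M\le k)=\int_p^1 c\,t^{k}(1-t)^{M-1-k}\,dt .
\end{align*}
The integrand is a log-concave density. Upper tails of log-concave densities are log-concave (Prékopa), so $\log g_N$ is concave in $p$, and hence concave in $\Delta$ because $p$ is affine in $\Delta$. Since $\log\Delta$ is strictly concave, $\log r$ is strictly concave on $(0,1)$. Therefore $r$ has a unique maximizer $d_N\in(0,1)$, and every least favorable prior is supported on $\{|\mu_1-\mu_0|=d_N\}$.

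The delicate points are the tie-breaking term for even $N$, which Step 2 handles, and confirming the boundary cases (for example $k=M-1$) of the log-concavity argument. If the Prékopa route stalls, the fallback is to check directly that the first-order condition $g+\Delta g'=0$ has a single root. This uses the explicit derivative $g'(p)\propto -p^{k}(1-p)^{M-1-k}$ and the monotonicity of $-\Delta g'(\Delta)/g(\Delta)$.
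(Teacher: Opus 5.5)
Your proof is correct. Steps 1 and 2 match the paper's Lemma~A.1: the same representation $I_N=S-\frac N2$ with $S\sim\mathrm{Bin}(N,\frac{1+\Delta}{2})$, and the same conditioning on the last observation to absorb the tie-breaking term for even $N$. Your support argument via interchangeability of saddle points is also the one in the paper's closing paragraph.

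The genuine difference is Step 3. The paper also starts from the incomplete-beta integral, but symmetrizes it via $t=\frac12+u$. This gives $G_N\propto H(r)=r\int_r^{1/2}(\frac14-u^2)^k\,du$ with $r=d/2$. It then computes $H''(r)=2(\frac14-r^2)^{k-1}\bigl((k+1)r^2-\frac14\bigr)$ and combines this sign pattern with $H'(0)>0$ and $H'<0$ near $\frac12$ to show that $H'$ vanishes exactly once. You instead note that the kernel $t^k(1-t)^{M-1-k}$ is log-concave, so its upper tail is log-concave (Pr\'ekopa). Adding the strictly concave $\log\Delta$ then makes $\log r$ strictly concave on $(0,1)$.

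Your route is shorter and more structural. It needs no endpoint analysis of $H'$ and does not use the symmetry $M-1-k=k$, but it imports the log-concavity-of-tails result. The paper's calculus is self-contained. As a by-product, it places the maximizer below the inflection point $r=1/(2\sqrt{k+1})$, i.e., $d_N<1/\sqrt{k+1}$ (about $0.043$ for $N=1100$).

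Your worry about boundary cases such as $k=M-1$ is moot. Here $k=(M-1)/2$ always, so the kernel is the Beta$(k+1,k+1)$ density. For $k=0$ it is uniform with $g_N(p)=1-p$, so the fallback first-order-condition argument is not needed.
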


\begin{proof}
    See Appendix~\ref{Appendix_prop_LFP}.
\end{proof}

\subsection{Interim Decision}

By Proposition~\ref{prop_posterior_rep}, the  posterior update of any least favorable prior rationalizes the  action chosen by any interim optimality criteria that is dynamically consistent with minimax regret. As discussed in the previous section, the diagnostic exercise regarding the credibility of $\delta^*$ centers on whether this posterior provides a ``reasonable" interim rationale. In the context of our treatment choice setup, Proposition~\ref{prop_treatment_LFP}  implies that any least favorable posterior is supported on states with $|\mu_1-\mu_0|=d_N$, regardless of the realized data. 

For concreteness, consider the example in the conclusion of \cite{Stoye_2009_JoE}. Suppose $N=1100$, with $N_0=1000$ observations assigned to  $t=0$ and $N_1=100$ observations assigned to $t=1$. Assume further that $n_0=550$ and $n_1=99$, such that $\bar y_0=0.55$ and $\bar y_1=0.99$.   The estimated treatment effect is $\bar y_1-\bar y_0=0.44,$  with standard error approximately $\sqrt{
\frac{0.99(0.01)}{100}
+
\frac{0.55(0.45)}{1000}
}
\approx 0.019.$

A conventional comparison of sample success rates therefore provides overwhelming evidence in favor of treatment $1$. However, the minimax regret rule $\delta^*$ assigns treatment $0$, since
\begin{align*}
I_N = 100 \left (0.99- \frac{1}{2} \right)-1000 \left(0.55- \frac{1}{2} \right) =  -1.
\end{align*}
We therefore see that the  action $\delta^*( z) = 0$ prescribed by the ex ante optimal rule disagrees with the action that most DMs would take after observing $ z$.

The posterior rationale for the action $\delta^*(z) = 0$ is difficult to defend. Here, direct evaluation of the expression for $d_N$ in the proof of Proposition~\ref{prop_treatment_LFP} yields  $d_N\approx 0.023$ (see Example~\ref{ex_difference_treatment} in  Appendix~\ref{appendix_proof}). Any least favorable posterior then assigns probability one to states in which  $|\mu_1 - \mu_0 | \approx 0.023$. The realized data, however, estimate a treatment effect of $0.44$ with standard error less than $0.02$. Thus, every least favorable posterior is concentrated on effect sizes that are far from the empirical estimate.\footnote{In the conclusion of \cite{Stoye_2009_JoE}, a similar discussion is presented using a specific least favorable prior. Proposition~\ref{prop_treatment_LFP} allows us to extend the argument to \emph{all} least favorable priors.}

This example suggests that the ex ante minimax regret rule $\delta^*$ is unlikely to be dynamically consistent. After observing $ z$, a decision maker who evaluates the evidence using a conventional comparison of sample success rates would regard treatment $t=1$ as strongly favored. The posterior induced by any least favorable prior rationalizes the prescription $\delta^*(z) = 0$ only by concentrating mass on treatment-effect magnitudes that the realized data make highly implausible.  Hence, the minimax rule $\delta^*$ lacks credibility as a decision rule that would be adhered to in the interim period, absent external commitment technologies.

\begin{remark}[Conservatism] \label{remark_conservatism}
    Minimax-regret optimal treatment rules are often less ``conservative'' than conventional frequentist procedures, such as hypothesis testing at the $\alpha = 0.05$ level. \cite{Manski_2004_ECTA} interprets this as illustrating the inadequacies of the latter approach to statistical decision making. Our example has the opposite character: the disagreement with conventional frequentist procedures arises from the minimax-regret optimal rule being ``too conservative."
    
    Although both are instances of dynamic inconsistency within our framework, their interpretations differ sharply.  Given the degree of conservatism already built into minimax regret as a worst-case, prior-free criterion, dynamic inconsistencies that arise in the former  setting point to a deficiency of the interim frequentist procedure. By contrast,  the concern we raise here is that minimax regret may be too conservative in certain settings, and its interim prescriptions can only be rationalized by “unreasonable” beliefs. In this case, the dynamic inconsistency  casts doubt on the credibility of ex ante procedure instead. 

    Figure~\ref{fig_treatment_disagreement} illustrates that the latter form of disagreement we discuss in this section occurs with non-negligible probability, particularly when the average potential outcomes $\mu_0,\mu_1$ are close to 0 or 1. Section~\ref{section_empirics} also provides examples of this phenomenon occurring in empirical practice.
\end{remark}

\begin{figure}[t]
\centering
\includegraphics[width=0.65\linewidth]{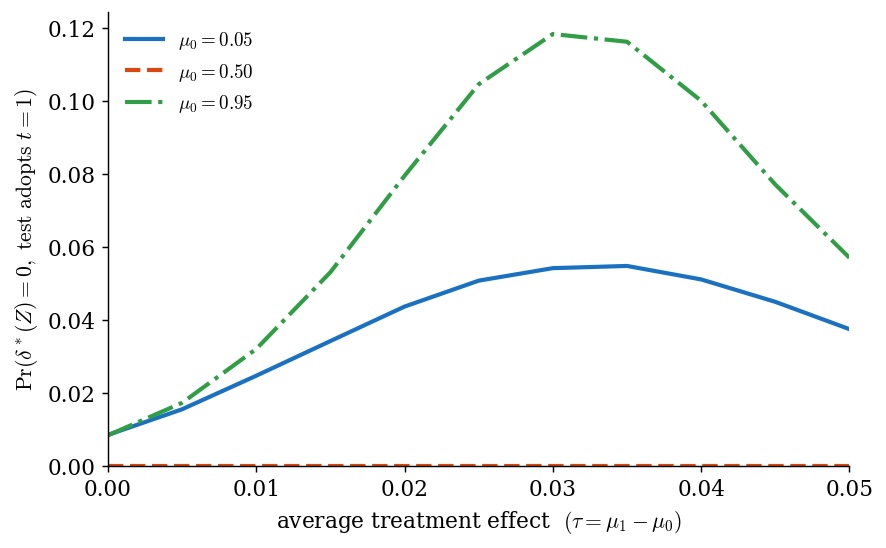}
\caption{Probability that the minimax-regret rule assigns $t = 0$, while a one-sided test  with $\alpha = 0.05$ assigns $t=1$. The $x$-axis is the average treatment effect, and each curve fixes the mean potential outcome $\mu_0$. Calculation uses $N = 1000$ observations under independent treatment assignment and $10^6$ Monte Carlo draws.}
\label{fig_treatment_disagreement}
\end{figure}

\section{Application: Evidence Aggregation} \label{section_Yata}

\subsection{Setup}

Next, we consider the evidence aggregation problem, following \cite{ishihara2021evidence} and \cite{yata2021optimal}.  A DM must decide whether to introduce a new policy in a target population. There are two study populations, indexed by $i \in \{1,2\}$, and the DM observes $ z=( z_1, z_2)\in Z\equiv\mathbb R^2$. The two studies are drawn independently, so under state $\theta=(\theta_1,\theta_2,\theta_3)$, we have $ z \sim N \left((\theta_1,\theta_2), \sigma^2 I_2\right).$  

Here, $\theta_i$ is the welfare effect of the policy in study population $i \in \{1,2\}$, while $\theta_3$ is the welfare effect in the target population. Consider the state space $\Theta = \{ \theta\in\mathbb R^3:
|\theta_1-\theta_3|\leq C_1,\
|\theta_2-\theta_3|\leq C_2 \}$,  where  $C_i$ are known constants that measure how different study population $i$ may be from the target population.  The action space is $\mathcal A=[0,1]$, where $a\in[0,1]$ is the probability of introducing the policy, and a decision rule is the mapping $\delta: Z\to\mathcal A$. Normalizing the payoff of  not introducing the policy to zero, the expected payoff of $\delta$ at $\theta$ is
\begin{align*}
U(\delta,\theta)
=
\theta_3\E_\theta[\delta( z)].
\end{align*}
If $\theta$ were known, the DM would introduce the policy whenever $\theta_3>0$, reject it whenever $\theta_3<0$, and be indifferent when $\theta_3=0$.

\subsection{Ex Ante Decision}

Suppose $C_1 > C_2$, such that study population $2$ is more externally valid than study population $1$, and define $\epsilon^* = \arg\max_{\epsilon\geq 0} (\epsilon+C_2)\Phi\left(-\frac{\epsilon}{\sigma}\right) $. The following fact restates the application of Theorem 1 to Example 1  in \cite{yata2021optimal}.

\begin{fact}[\cite{yata2021optimal}] \label{fact_yata_evidence}
If $C_1>C_2$, $2\phi(0)C_2<\sigma$, and $\epsilon^*<C_1-C_2$, then the optimal policy rule according to the minimax regret criterion is
\begin{align*}
\delta^*( z)
=
\begin{cases}
0, &  z_2<0,\\
1, &  z_2\geq 0.
\end{cases}
\end{align*}
\end{fact}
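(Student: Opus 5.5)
The plan is to exhibit a saddle point of the zero-sum game from Section~\ref{section_interim}: a two-point prior $\pi^*$ against which $\delta^*$ is Bayes, and under which $\pi^*$ is supported on states where $\delta^*$ attains its maximal regret. The standard chain then gives minimax regret optimality:
\begin{align*}
\sup_\theta R(\delta^*,\theta) = \E_{\pi^*}[R(\delta^*,\theta)] \le \E_{\pi^*}[R(\delta,\theta)] \le \sup_\theta R(\delta,\theta)
\end{align*}
for every $\delta\in\mathcal D$. The middle inequality holds because $\delta^*$ is Bayes against $\pi^*$, and minimizing Bayes regret is equivalent to maximizing $\E_{\pi^*}[U(\delta,\theta)]$.

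\textbf{Step 1 (worst-case regret of $\delta^*$).} If $\theta_3>0$, the regret is $\theta_3\Phi(-\theta_2/\sigma)$, since $P_\theta(z_2<0)=\Phi(-\theta_2/\sigma)$. This is decreasing in $\theta_2$, and the constraint $\theta_2\ge\theta_3-C_2$ binds. Writing $\theta_2=\epsilon$ and $\theta_3=\epsilon+C_2$, the regret becomes
\begin{align*}
f(\epsilon)=(\epsilon+C_2)\Phi(-\epsilon/\sigma), \qquad \epsilon\ge -C_2.
\end{align*}
The value of $\theta_1$ is irrelevant, since $\delta^*$ ignores $z_1$, and any $\theta_1$ within distance $C_1$ of $\theta_3$ is feasible. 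The case $\theta_3<0$ is symmetric, with regret $-\theta_3\Phi(\theta_2/\sigma)$. The case $\theta_3=0$ gives zero regret.

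I then need $\sup_{\epsilon\ge -C_2} f=f(\epsilon^*)$, where $\epsilon^*$ maximizes over $\epsilon\ge0$. On $[-C_2,0]$ we have
\begin{align*}
f'(\epsilon)=\Phi(-\epsilon/\sigma)-\frac{(\epsilon+C_2)}{\sigma}\,\phi(\epsilon/\sigma)\ \ge\ \tfrac12-\frac{C_2\phi(0)}{\sigma}>0
\end{align*}
by the assumption $2\phi(0)C_2<\sigma$. So $f$ is increasing there, and the global supremum is attained on $\epsilon\ge0$. The same derivative bound at $0$ shows $\epsilon^*>0$ and that $\epsilon^*$ is an interior maximizer. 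Hence $\sup_\theta R(\delta^*,\theta)=R^*\equiv(\epsilon^*+C_2)\Phi(-\epsilon^*/\sigma)$.

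\textbf{Step 2 (least favorable prior).} Take $\pi^*=\tfrac12\delta_{\theta^+}+\tfrac12\delta_{\theta^-}$, where
\begin{align*}
\theta^\pm=\bigl(0,\ \pm\epsilon^*,\ \pm(\epsilon^*+C_2)\bigr).
\end{align*}
Setting $\theta_1=0$ in both states makes $z_1$ uninformative. Feasibility requires $|0\mp(\epsilon^*+C_2)|\le C_1$, which is exactly the hypothesis $\epsilon^*<C_1-C_2$. I expect this to be the main subtle point: it is the condition ensuring Nature cannot exploit study $1$.

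Each support point attains regret $R^*$ under $\delta^*$ by Step 1 and symmetry, so $\E_{\pi^*}[R(\delta^*,\theta)]=R^*$. Because $z_1$ has the same law under both states, the posterior depends only on $z_2$. The posterior mean of $\theta_3$ is proportional to
\begin{align*}
(\epsilon^*+C_2)\bigl[\phi((z_2-\epsilon^*)/\sigma)-\phi((z_2+\epsilon^*)/\sigma)\bigr],
\end{align*}
which is positive if and only if $z_2>0$, using $\epsilon^*>0$. Hence $\delta^*$ maximizes the posterior expected payoff for almost every $z$, and so it is Bayes against $\pi^*$, which closes the saddle-point argument.

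The main obstacle is Step 1's global verification. The conditions $2\phi(0)C_2<\sigma$ and $\epsilon^*<C_1-C_2$ are used precisely to rule out boundary maximizers and to make the two-point prior feasible. If time permits, I would also remark on the boundary case $\epsilon^*=C_1-C_2$, which is excluded by the strict inequality and does not affect the argument.
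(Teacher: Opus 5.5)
Your proof is correct. The paper does not prove this fact itself; it imports it from Theorem~1 of \cite{yata2021optimal}, a general result beyond this example. The only related argument in the paper is the regret computation for $\delta^*$ in the proof of Proposition~\ref{prop_Yata_LFP}.

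You instead verify the saddle point directly. Your Step~1 is essentially that computation. The paper fixes $\theta_2$ and pushes $\theta_3$ up to $\theta_2+C_2$, while you fix $\theta_3$ and push $\theta_2$ down to $\theta_3-C_2$. Your two-point prior is consistent with the support restriction of Proposition~\ref{prop_Yata_LFP}: one atom in $\Theta^+$ and its reflection in $\Theta^-$, both with $\theta_1=0$.

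What the citation buys is generality. What your route buys is a self-contained argument that shows the role of each hypothesis:
\begin{itemize}
\item $2\phi(0)C_2<\sigma$ forces the worst case to have $\theta_2$ of the same sign as $\theta_3$, with $\epsilon^*>0$. Hence the posterior mean of $\theta_3$ under $\pi^*$ changes sign exactly at $z_2=0$.
\item $\epsilon^*\le C_1-C_2$ is what lets both atoms share the same $\theta_1$, so study~$1$ carries no information under $\pi^*$.
\end{itemize}
As you note, the weak inequality suffices for your argument.

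Since the statement says \emph{the} optimal rule, you could add one line for uniqueness. Any minimax rule $\delta'$ satisfies $\E_{\pi^*}[R(\delta',\theta)]\le R^*$, so it is also Bayes against $\pi^*$. Because the posterior mean of $\theta_3$ is nonzero whenever $z_2\neq 0$, $\delta'$ must then agree with $\delta^*$ almost everywhere.
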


That is, the minimax regret rule ignores the first study and bases the policy decision entirely on the more externally valid study. The following result characterizes the common support restriction that all least favorable priors must satisfy. 

\begin{proposition} \label{prop_Yata_LFP}
Define   $\Theta^+ \equiv \left\{
\theta\in\Theta:
\theta_2=\epsilon^* , \,\, 
\theta_3=\epsilon^* +C_2
\right\}$ and $ \Theta^- \equiv -\Theta^+.$ Under the conditions of Fact~\ref{fact_yata_evidence}, every least favorable prior $\pi^*$ places probability one on states in $\Theta^+\cup\Theta^-$. 
\end{proposition}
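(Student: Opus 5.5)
The plan is to combine the saddle-point property of a least favorable prior with a direct computation of the regret function of the rule $\delta^*$ from Fact~\ref{fact_yata_evidence}. The key observation is that a least favorable prior must be a best response of Nature to $\delta^*$. It must therefore concentrate on the set of states at which $R(\delta^*,\theta)$ attains its supremum. So it suffices to show that this argmax set is exactly $\Theta^+\cup\Theta^-$.

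\textbf{Step 1 (support on the argmax of regret).} Let $V=\sup_{\theta}R(\delta^*,\theta)$ be the minimax regret value. Since $(\delta^*,\pi^*)$ is a Nash equilibrium of the zero-sum game, we have
\begin{align*}
V=\sup_\theta R(\delta^*,\theta)\ge \E_{\pi^*}[R(\delta^*,\theta)]=\inf_{\delta\in\mathcal D}\E_{\pi^*}[R(\delta,\theta)]= V .
\end{align*}
Hence $\E_{\pi^*}[V-R(\delta^*,\theta)]=0$ with a nonnegative integrand. It follows that $\pi^*$ puts probability one on $\{\theta: R(\delta^*,\theta)=V\}$. This argument would also go through for any minimax $\delta^*$, since any least favorable prior pairs with any minimax rule in a saddle point.

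\textbf{Step 2 (regret of $\delta^*$).} Constant rules are feasible, so $\sup_d U(d,\theta)=\max(\theta_3,0)$. Since $\E_\theta[\delta^*(z)]=\Phi(\theta_2/\sigma)$, we get $U(\delta^*,\theta)=\theta_3\Phi(\theta_2/\sigma)$. Therefore
\begin{align*}
R(\delta^*,\theta)=\theta_3\,\Phi(-\theta_2/\sigma)\ \text{ if }\theta_3\ge 0,\qquad R(\delta^*,\theta)=-\theta_3\,\Phi(\theta_2/\sigma)\ \text{ if }\theta_3< 0 .
\end{align*}
The map $\theta\mapsto-\theta$ preserves $\Theta$ and swaps these two expressions. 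It therefore suffices to treat $\theta_3>0$; the case $\theta_3=0$ gives zero regret, and $V>0$ will rule it out. The regret does not depend on $\theta_1$. Moreover, the constraint $|\theta_1-\theta_3|\le C_1$ is always satisfiable, so $\theta_1$ is free subject to membership in $\Theta$. For fixed $\theta_2$, regret is increasing in $\theta_3$, so the binding constraint is $\theta_3=\theta_2+C_2$. Writing $\epsilon=\theta_2$, the problem reduces to maximizing
\begin{align*}
f(\epsilon)=(\epsilon+C_2)\Phi(-\epsilon/\sigma)\quad\text{over } \epsilon>-C_2 .
\end{align*}

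\textbf{Step 3 (unique maximizer is $\epsilon^*$).} This is the main obstacle, since Fact~\ref{fact_yata_evidence} defines $\epsilon^*$ only over $\epsilon\ge 0$. For $\epsilon\in(-C_2,0]$ we have
\begin{align*}
f'(\epsilon)=\Phi(-\epsilon/\sigma)-(\epsilon+C_2)\phi(\epsilon/\sigma)/\sigma\ \ge\ \tfrac12-C_2\phi(0)/\sigma>0,
\end{align*}
using the hypothesis $2\phi(0)C_2<\sigma$. So $f$ is strictly increasing there, and the maximum lies in $\epsilon>0$. On $(-C_2,\infty)$, $f$ is strictly log-concave, being the product of the strictly log-concave $\epsilon+C_2$ and the log-concave $\Phi(-\epsilon/\sigma)$. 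It also tends to $0$ as $\epsilon\to\infty$. Hence $f$ has a unique maximizer, which must be $\epsilon^*>0$, and $V=f(\epsilon^*)>0$.

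Consequently, the argmax set of regret among states with $\theta_3>0$ is $\{\theta\in\Theta:\theta_2=\epsilon^*,\ \theta_3=\epsilon^*+C_2\}=\Theta^+$. By symmetry, the argmax set among states with $\theta_3<0$ is $\Theta^-$. Combined with Step 1, every least favorable prior is supported on $\Theta^+\cup\Theta^-$. The hypothesis $\epsilon^*<C_1-C_2$ is not needed for the support statement itself. It only serves, through Fact~\ref{fact_yata_evidence}, to guarantee that $\delta^*$ is minimax, so that Step 1 applies.
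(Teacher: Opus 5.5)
Your proof is correct and follows the same route as the paper: you compute $R(\delta^*,\theta)=\theta_3\Phi(-\theta_2/\sigma)$ for $\theta_3\ge 0$, push $\theta_3$ to $\theta_2+C_2$, use $2\phi(0)C_2<\sigma$ to get a positive derivative on $[-C_2,0]$, and invoke symmetry for $\Theta^-$. Your Step 1 saddle-point argument and the strict log-concavity argument only make explicit two points the paper takes for granted: that Nature's best responses concentrate on the regret-maximizing set, and that $\epsilon^*$ is a unique, well-defined argmax.
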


\begin{proof}
    See Appendix~\ref{Appendix_prop_Yata}.
\end{proof}

As in the previous section, we see that  the class of least favorable priors share a support restriction. In particular, all least favorable priors are  supported on states in which the more externally valid study is separated from the target population by exactly $C_2$, and the target welfare effect has magnitude $|\theta_3|=\epsilon^*+C_2$. 

\subsection{Interim Decision}

By Proposition~\ref{prop_posterior_rep}, the posterior update of any least favorable prior rationalizes the action chosen by any interim optimality criterion that is dynamically consistent with minimax regret. As before, the diagnostic question is whether the class of such posteriors provides a reasonable interim rationale. In the context of our evidence aggregation problem, Proposition~\ref{prop_Yata_LFP} implies that any least favorable posterior remains supported on $\Theta^+\cup\Theta^-$, regardless of the realized data.

For concreteness, suppose that $\sigma=1$, $C_1 = 2$, and $C_2 = 1$, in which case $\epsilon^* \approx 0.132$.  Now consider the data realization  $ z_1=5$, $ z_2=-0.05$, such that the minimax rule assigns  $ \delta^*( z) = \1\{ z_2\geq 0 \} = 0$, i.e., the policy is not implemented.  

On the other hand,  most conventional analyses of the data would regard the first study as strong evidence in favor of implementing the policy. Even allowing for the external validity bound $C_1=2$, the realization $z_1=5$ suggests that the  effect  in the target population is positive by a wide margin. For example, a conservative lower confidence calculation yields
\begin{align*}
    z_1-1.96 \sigma - C_1 = 1.04 > 0,
\end{align*}
while the second study is only slightly negative and statistically insignificant. In other words, the action $\delta^*(z)$ prescribed by the ex ante optimal rule disagrees with the action that most practitioners would take after observing $z$.

As in the previous section, we address this disagreement using the diagnostic exercise proposed in Section~\ref{section_interim}. First, observe that every least favorable posterior is supported on states that satisfy  $|\theta_3| = \epsilon^*+C_2
\approx 1.132$ and
\begin{align*}
    \theta_1 & \leq \theta_3 + C_1 \approx \begin{cases}
        3.132 & \text{if } \theta_3 \geq 0 \\
        0.868  & \text{if } \theta_3 < 0.
    \end{cases}
\end{align*}
However, the realized data from the first study (a random sample from $N(\theta_1, 1)$)  is $ z_1 = 5$. The collection of least favorable posteriors then rationalizes rejecting the policy by treating the first study as a tail event nearly two standard deviations above the largest admissible value of $\theta_1$, making the posterior rationale for the minimax action  difficult to defend. 

This example  illustrates that the minimax regret rule  $\delta^*$ is unlikely to be dynamically consistent. A DM who regards $ z_1=5$ as compelling evidence about the target population would have strong reason to deviate from the ex ante minimax prescription and choose $a=1$ instead. As in the treatment choice example, we see that minimax rule is ``too conservative" and lacks credibility as a decision rule that would be adhered to in the interim period.

\section{Dynamically Consistent Optimality Criteria} \label{section_DC_criteria}

\subsection{Preliminaries}


The previous sections illustrated that decision rules justified by ex ante minimax guarantees may fail to provide credible interim prescriptions.  In this section, we address this issue by proposing a class of ex ante optimality criteria that are dynamically consistent with their corresponding interim problems. We also axiomatize these criteria, thereby providing a behavioral characterization of the preferences and choice correspondences they induce.
 


 Let $\mu\in\Delta( Z)$ denote the marginal distribution over the data space. If the DM has a prior $\pi\in\Delta(\Theta)$, then $\mu$  corresponds to the prior predictive $P_\pi\equiv\int_\Theta P_\theta d\pi(\theta)$ over $ Z$.  For each realization of $ z\in Z$, let $\mathcal{Q}_z \subseteq\Delta(\Theta)$ denote the collection of ``beliefs" the DM anticipates using at the interim stage, assuming without loss that each $\mathcal{Q}_z$ are closed and convex. We emphasize that  the set $\mathcal{Q}_z$ need not have a Bayesian interpretation. It may, for instance, correspond to the confidence region or identified set  for the parameter of interest.

We now have sufficient notation to formally introduce our two dynamically consistent optimality criteria. Their  game-theoretic interpretation is that after the DM chooses $\delta$ and the data $z$ realizes, a malevolent Nature chooses the worst-case distribution over $\Theta$ from $\mathcal{Q}_z$. 

\begin{definition}[Dynamically Consistent Minimax Loss]
A rule $\delta_{\mathrm{DC\text{-}MML}}$ is a dynamically consistent minimax loss (maximin payoff) rule if
\begin{align*}
    \delta_{\mathrm{DC\text{-}MML}}  \in  \argmax_{\delta\in\mathcal D}   \E_{\mu}
    \left[
    \inf_{\pi\in\mathcal Q_ z}
    \E_{\pi}
    \left[
    u(\delta( z),\theta)
    \right]
    \right]
\end{align*}
for some $\mu\in\Delta( Z)$ and correspondence $ z\mapsto\mathcal Q_ z\subseteq\Delta(\Theta)$.
\end{definition}

\begin{definition}[Dynamically Consistent Minimax Regret]
A rule $\delta_{\mathrm{DC\text{-}MMR}}$ is a dynamically consistent minimax regret rule if
\begin{align*}
\delta_{\mathrm{DC\text{-}MMR}}
\in
\argmin_{\delta\in\mathcal D}
\E_{\mu}
\left[
\sup_{\pi\in\mathcal Q_ z}
\E_{\pi}
\left[
\sup_{a'\in\mathcal D( z)}
u(a',\theta)
-
u(\delta( z),\theta)
\right]
\right]
\end{align*}
for some $\mu\in\Delta( Z)$ and correspondence $ z\mapsto\mathcal Q_ z\subseteq\Delta(\Theta)$.
\end{definition}


Both criteria are dynamically consistent by construction. Conditional on observing $ z$, the corresponding interim minimax loss and regret  problem is to solve
\begin{align*}
a_{\mathcal{Q}_ z\text{-}MML} & \in  \argmax_{a\in\mathcal D( z)}
\inf_{\pi\in\mathcal Q_ z}
\E_{\pi}
\left[
u(a,\theta)
\right] \\ 
a_{\mathcal{Q}_ z\text{-}MMR} & \in  \argmin_{a\in\mathcal D( z)}
\sup_{\pi\in\mathcal Q_ z}
\E_{\pi}
\left[
\sup_{a'\in\mathcal D( z)}
u(a',\theta)
-
u(a,\theta)
\right],
\end{align*}
and the ex ante function aggregates these interim objectives using the marginal $\mu$. 

\subsection{Discussion}

A DM making decisions according to either criteria is \emph{sophisticated}, in the sense of \cite{Strotz_ReStud_1955}. She anticipates, at the ex ante stage, the procedure she will use in the interim stage after the data realization, and she chooses a decision rule whose prescriptions are optimal according to that anticipated interim procedure.

The marginal distribution $\mu$ can be interpreted in two ways. In the classical Bayesian setting, the DM has a prior $\pi\in\Delta(\Theta)$, and $\mu = P_\pi$ is the prior predictive  over $Z$. On the other hand, the DM may have subjective beliefs directly over $Z$ instead, without having a prior over $\Theta$. As discussed in the introduction, the latter scenario is more realistic when  $\Theta$ is high-dimensional or lacks a concrete interpretation, while the empirical phenomenon itself is familiar enough to the DM to support judgments about the marginal distribution of sample outcomes.\footnote{See Chapter 3.5.2 of \cite{Berger_1985_SDT} for a more detailed discussion on the various sources of information about the marginal distribution $\mu$.} Our optimality criteria only require the existence of a subjective marginal distribution $\mu$, together with an anticipated interim procedure induced by the mapping $z \mapsto\mathcal \mathcal{Q}_z$.  Bayesian decision making is then a  special case with $\mu=P_\pi$ and $\mathcal Q_ z= \pi(\cdot\mid z)$.

In the following two examples, we show that the as-if optimization of \cite{Manski_2021_ECTA} and Gamma${}^*$-minimax criteria of \cite{Lim_2026_PIA} can be nested as special cases of our optimality criteria. For illustrative purposes, we use minimax regret in the former and  minimax loss in the latter; analogous arguments also hold in reverse.

\begin{example}[As-if Optimization]  In what he calls  ``as-if decisions  with a set estimate," \cite{Manski_2021_ECTA} formalizes a common form of plug-in decision making.\footnote{Recent work that study and operationalize this framework include \cite{Andrews_Chen_2025}, \cite{Chernozhukov_et_al_2025}, and \cite{ben2025safe}.} After observing $ z$, the DM constructs a data-dependent set of states $\hat\Theta( z)\subseteq\Theta$, and chooses an action \emph{as if} the true state lies in $\hat\Theta(z)$. When decisions are only based on point estimates, $\hat\Theta( z)$ is a singleton.   In payoff notation, the as-if minimax action is defined pointwise by
\begin{align*}
\delta_{a\text{-}MMR}( z)
\in
\argmin_{a\in\mathcal D( z)} \sup_{\theta \in \hat \Theta(z)}
\left\{ \sup_{a' \in D(z)}
u(a',\theta) - u(a,\theta) \right\}. 
\end{align*}
For instance, if $\hat\Theta( z)$ is a confidence region satisfying $\inf_{\theta\in\Theta}
P_\theta \{\theta\in\hat\Theta( Z)\}
\geq
1-\alpha,$  then $\delta_{a\text{-}MMR}$ minimizes the worst-case regret over states not rejected at level $\alpha$. 

Observe that the interim problem given by the as-if criterion can be nested by our dynamically consistent minimax regret criterion by taking $\mathcal Q_z=\Delta(\hat\Theta( z)),$  where $\Delta(\hat\Theta( z))$ is the  simplex supported on $\hat\Theta( z)$. For any $a\in\mathcal D( z)$, we have $\sup_{\pi\in\mathcal Q_ z}
\E_\pi\left[ \sup_{a' \in \mathcal D( z)} u(a',\theta) - u(a,\theta) \right]
=
\sup_{\theta \in \hat \Theta( z)} \left\{ \sup_{a' \in \mathcal D( z)} u(a',\theta) - u(a,\theta) \right\},$  and the ex ante criterion becomes
\begin{align*}
\delta_{\mathrm{DC\text{-}MMR}}
\in
\argmin_{\delta \in \mathcal{D}} \E_\mu \left[ \sup_{\theta \in \hat \Theta( z)}   
\left\{ \sup_{a' \in \mathcal D( z)}
u(a',\theta) - u(\delta( z),\theta) \right\} 
\right], 
\end{align*}
such that $\delta_{\mathrm{DC\text{-}MMR}}$ selects an as-if minimax action, $\mu$-almost surely. 
\end{example}

\begin{example}[Gamma${}^*$-Minimax]

In the setting of robust Bayesian analysis,  \cite{Lim_2026_PIA} proposes the Gamma${}^*$-minimax criterion to address the dynamic inconsistency of Gamma-minimax rules.  Let $\Pi \subseteq\Delta(\Theta)$ be an ex ante set of priors. Under the consistency condition in \cite{Lim_2026_PIA}, every prior $\pi\in  \Pi$ induces the same prior predictive $P_\pi = \int_\Theta P_\theta \, d\pi(\theta)$ over $ Z$, in which case we can set $\mu \equiv P_\pi$ without loss.

After observing $z$, the DM updates prior-by-prior to obtain the posterior set $\Pi_z$. The corresponding interim problem is the conditional Gamma-minimax problem,
\begin{align*}
    \delta_{\Gamma^*\text{-MML}} ( z) \in \argmax_{a \in \mathcal{D}( z)} \inf_{\pi \in \Pi_ z} \E_\pi [ u(a,\theta)].
\end{align*}
The Gamma${}^*$-minimax criterion first defines the set of joint distributions over $\Theta \times Z$ given by  $\Gamma^* =
\left\{
\gamma\in\Delta(\Theta\times Z): \gamma(d\theta, d z) = \mu(d z) \pi_ z ( d\theta) \right\}$, where $z \mapsto \pi_z$ is a measurable selection satisfying $\pi_z \in \Pi_z$,  $\mu$-almost surely. Then, it solves 
\begin{align*}
\delta_{\Gamma^*\text{-}MML}
& \in
\argmax_{\delta\in\mathcal D}
\inf_{\gamma\in\Gamma^*}
\E_\gamma
\left[
u(\delta( z),\theta)
\right] \\ 
& = \argmax_{\delta \in \mathcal{D}} \E_\mu \left[
\inf_{\pi\in\Pi_ z}
\E_\pi \left[
u(\delta( z),\theta)
\right] \right],
\end{align*}
which coincides with $\delta_{\text{DC-MML}}$ with $\mathcal{Q}_z = \Pi_z$.\footnote{Formally, the identity holds because  $\Gamma^*$ is rectangular, in the sense of \cite{ES_recursive}, with respect to the partition $\{ \Theta \times \{z\} \}_{z \in Z}$.} Here, the interim belief set $\mathcal{Q}_z$  carries an explicit  Bayesian interpretation as the prior-by-prior update of $\Pi$. 
\end{example}

\subsection{Axiomatics}

We now turn to providing a behavioral characterization of our  two optimality criteria. Fixing some utility function $u: \mathcal{A} \times \Theta \to \R$, we abuse notation by associating each decision rule $\delta : Z \to \mathcal{A}$ with the \emph{utility act} $\delta \in \R^{Z \times \Theta}$ defined by $\delta(z,\theta) = u(\delta(z), \theta)$.\footnote{The utility function can be recovered from preferences if we instead associate decision rules with Anscombe-Aumann acts that map from $Z \times \Theta$ to some mixture space.} If $\delta( \cdot, \theta) = \delta(\cdot, \theta')$ for all $\theta, \theta' \in \Theta$, we say that the utility act $\delta$  is \emph{Z-measurable}. Likewise, if $\delta(z, \cdot) = \delta(z', \cdot)$ for all $z,z' \in Z$, then $\delta$  is \emph{$\Theta$-measurable}.  Throughout the remainder of this section, we assume for simplicity that both $Z$ and $\Theta$ are finite.

Most axiomatizations of statistical optimality criteria proceed by taking preferences or choice correspondences over \emph{risk acts} as the primitive.\footnote{See, e.g., \cite{Stoye_2012_AR} for a review. A recent exception is the axiomatic work of \cite{andrews2026misspecification}, who consider the full space of utility (i.e., loss) acts for a different reason.}  That is, they fix a likelihood function $\ell: \Theta \to \Delta(Z)$ and associate each decision rule with the (negative) risk act $f_\delta \in \R^\Theta$ defined by $f_\delta(\theta) = \E_{\ell(\cdot|\theta)} [u(\delta(Z), \theta)]$.   This construction reduces decision rules to  $\Theta$-measurable  acts. As observed by \cite{Lim_2026_PIA}, this reduction is \emph{with} loss when considering dynamically consistent optimality criteria.

 \subsubsection{Dynamically Consistent Minimax Loss}

To axiomatize the dynamically consistent minimax loss criterion, we take the preference $\succsim$ over utility acts in $\mathcal{D} \cong \R^{Z \times \Theta}$ as the primitive relation.  

Our first axiom requires that $\succsim$ satisfies the subjective expected utility (SEU) axioms over the collection of $Z$-measurable acts.  Intuitively, this will allow us to recover a unique marginal $\mu \in \Delta(Z)$ that captures the DM's subjective beliefs about the marginal distribution over the data space. We also assume for simplicity that the recovered SEU measure $\mu$ has full support.

\begin{axiom}[$Z$-SEU] \label{axiom_seu}
    $\succsim$ is full-support SEU over $Z$-measurable acts. 
\end{axiom}

For $\delta, \beta \in \mathcal{D}$ and some $z' \in Z$, define $\delta_{ \{z' \} } \beta$ as the spliced act that satisfies $ (\delta_{ \{z' \} } \beta ) (z, \theta) =  \delta(z,\theta)$ when $z = z'$ and $ (\delta_{ \{z' \} } \beta ) (z, \theta) = \beta(z, \theta)$  when $z \neq z'$.  The next axiom imposes singleton separability  with respect to this splicing operation over the data space $Z$.

\begin{axiom}[$Z$-Separability] \label{axiom_z_sep_loss}
            $\succsim$ satisfies $Z$-Separability if for every $\delta, \beta, \psi, \phi \in \mathcal{D}$,
        \begin{align*}
            \delta_{ \{z\} } \psi \succsim  \beta_{ \{z\} } \psi \hspace{10pt} \iff \hspace{10pt} \delta_{ \{z\} } \phi \succsim \beta_{ \{z\} } \phi 
        \end{align*}
\end{axiom}

When comparing two acts that map to different actions only when the data realization $\{Z=z\}$ occurs, the $Z$-separability axiom requires that the ranking must not depend on what happens outside of $\{Z=z\}$. It can thus be interpreted as a weakening of the Sure Thing Principle, imposed only over data realizations. Equivalently, if the DM anticipates evaluating actions after observing $z$, then prescriptions at unrealized data values are counterfactual and should not affect her comparison of the actions prescribed at $z$.

Let $\succsim_z$ be the induced preference over $\Theta$-measurable acts defined by $\delta \succsim_z \beta \Leftrightarrow \delta_{ \{z\} } \phi \succsim \beta_{ \{z\} } \phi$, where $\phi$ is any utility act. Note that $\succsim_z$ is well-defined under $Z$-separability, and it can naturally be interpreted as the  conditional preference relation given the realization of $z$. The final axiom requires that every $\succsim_z$ satisfies the minimax expected utility (MEU) axioms of \cite{GS_MEU}. 

\begin{axiom}[$\Theta$-MEU] \label{axiom_meu}
   $\succsim_z$ is MEU for every $z \in Z$.
\end{axiom}

The $\Theta$-MEU axiom provides a behavioral interpretation of the interim belief correspondence $z \mapsto \mathcal \mathcal{Q}_z$. At each realized $z$, the induced preference $\succsim_z$ behaves as if the DM evaluates $\Theta$-measurable acts by their worst-case expected utility according to some set of beliefs over $\Theta$. The axiom  permits ambiguity at the interim stage, while requiring that it only enters conditional on each data realization.

As shown by the following theorem, the above three axioms are equivalent to the DM having a  dynamically consistent minimax loss representation.

 \begin{theorem} \label{thm_rep_loss}
    $\succsim$ satisfies Axioms~\ref{axiom_seu}-\ref{axiom_meu} if and only if it admits a  dynamically consistent minimax loss representation with unique $(\mu, \{\mathcal{Q}_z\}_{z \in Z})$. 
\end{theorem}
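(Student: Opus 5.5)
The plan is to prove necessity by direct verification and to spend most of the effort on sufficiency. Suppose $\succsim$ is represented by $V(\delta)=\sum_{z}\mu(z)\min_{\pi\in\mathcal Q_z}\E_\pi[\delta(z,\theta)]$ with $\mu$ of full support. On $Z$-measurable acts, $\min_{\pi}\E_\pi[\delta(z,\cdot)]=\delta(z)$, so $V$ reduces to $\sum_z\mu(z)\delta(z)$, which gives Axiom~\ref{axiom_seu}. $V$ is additively separable across $z$, so when two spliced acts are compared, the common part $\psi$ or $\phi$ cancels. This yields Axiom~\ref{axiom_z_sep_loss}. Because $\mu(z)>0$, the induced relation $\succsim_z$ on $\Theta$-measurable acts is represented by $\min_{\pi\in\mathcal Q_z}\E_\pi[\cdot]$, which is MEU, so Axiom~\ref{axiom_meu} holds.

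For sufficiency, I would proceed as follows.
\begin{enumerate}
\item Apply Axiom~\ref{axiom_meu} together with \cite{GS_MEU} to obtain, for each $z$, a unique closed convex $\mathcal Q_z$. With the utility normalization fixed by the utility-act setting, $I_z(f)=\min_{\pi\in\mathcal Q_z}\E_\pi[f]$ represents $\succsim_z$ on $\R^\Theta$, and $I_z(c)=c$ for constants.
\item Map each act $\delta\in\R^{Z\times\Theta}$ to its certainty-equivalent profile $c_\delta(z)=I_z(\delta(z,\cdot))$, which is a $Z$-measurable act.
\item Show that $\delta\sim c_\delta$. Replace $\delta$ one coordinate $z$ at a time by the constant $c_\delta(z)$. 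At each step the two acts agree off $\{z\}$, and their restrictions at $z$ are $\sim_z$-indifferent by construction. Axiom~\ref{axiom_z_sep_loss} makes $\succsim_z$ independent of the common part, so each step preserves indifference. Since $Z$ is finite, transitivity gives $\delta\sim c_\delta$.
\item Apply Axiom~\ref{axiom_seu} to obtain a unique full-support $\mu$ with $c\succsim c'$ iff $\sum\mu(z)c(z)\ge\sum\mu(z)c'(z)$ on $Z$-measurable acts.
\item Conclude that $\delta\succsim\beta$ iff $c_\delta\succsim c_\beta$ iff $V(\delta)\ge V(\beta)$.
\end{enumerate}

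Uniqueness has two parts. $\mu$ is pinned down by its restriction to $Z$-measurable acts, which is SEU uniqueness. Each $\mathcal Q_z$ is pinned down by $\succsim_z$, which is MEU uniqueness. The full support of $\mu$ ensures every $\succsim_z$ is non-degenerate, so no $\mathcal Q_z$ is left unidentified.

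The main obstacle is a compatibility issue. The utility scale used by each $\succsim_z$ in its MEU representation must coincide with the one used by the SEU representation over $Z$-measurable acts. Otherwise $I_z(c)$ need not equal the constant $c$ in the units $\mu$ aggregates, and step 3 would fail. Because the primitive acts here are utility acts, I would argue as follows. Axiom~\ref{axiom_seu} already forces $\succsim$ to be monotone and linear over constants. On constant $\Theta$-measurable acts at $z$, the restriction of $\succsim_z$ is therefore represented by the identity. MEU representations are unique up to positive affine transformation, so fixing $I_z(c)=c$ on constants aligns the two scales.

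A secondary point is the well-definedness of $\succsim_z$ and of the splicing with constants. This follows directly from Axiom~\ref{axiom_z_sep_loss}, together with $\mathcal D\cong\R^{Z\times\Theta}$ being closed under pasting.
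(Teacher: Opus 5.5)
Your proposal is correct and follows essentially the same route as the paper. The shared steps are: obtain $\mathcal{Q}_z$ from $\Theta$-MEU; replace $\delta$ coordinate by coordinate with its $\succsim_z$-certainty equivalents, using $Z$-Separability and transitivity; aggregate the resulting $Z$-measurable act with the $Z$-SEU measure $\mu$; and inherit uniqueness from SEU and MEU uniqueness. Your explicit normalization step, which uses full-support $Z$-SEU to show that $\succsim_z$ ranks constants by their value so the MEU index can be taken to be the identity, makes precise a point the paper uses implicitly when it asserts $\delta_z \sim_z \mathcal{L}_z(\delta_z)$.
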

\begin{proof}
    See Appendix \ref{appendix_proof_loss}.
\end{proof}

\subsubsection{Dynamically Consistent Minimax Regret}

It is well-known that optimality criteria based on minimax regret do not satisfy the  Independence of Irrelevant Alternatives (IIA) axiom \citep[e.g.,][]{Hayashi_2008_JET, Stoye_2011_JET}. For this reason, we must take choice correspondences $C(\cdot)$ that map finite menus $M \subsetneq \mathcal{D}$ to $C(M) \subseteq M$ as the primitive relation.  We write $C_{(z,\theta)}(M)$ to denote the choice made after learning $(z,\theta)$ has realized.\footnote{Equivalently, $C_{(z,\theta)}(M) = C(M_{z,\theta})$, where $M_{z,\theta}$ contains constant acts $\delta(z,\theta)$ with $\delta \in M$.} Mixtures between menus are defined with respect to Minkowski addition, such that $\lambda M + (1-\lambda) N = \{ \lambda m + (1-\lambda) n : m \in M, \, n \in N\}$.  We say that a menu has \emph{state-independent outcome distributions} if $\{ p \in \R : \delta(z, \theta) = p \text{ for some } \delta \in M\}$ does not vary with $(z, \theta)$. That is, the collection of feasible utilities is constant across states.

Our axiomatization of the dynamically consistent minimax regret criterion proceeds by introducing additional structure on the endogenous prior minimax regret representation of \cite{Stoye_2011_JET}. The axioms that we adopt without modification are as follows. We refer the reader to Section 2 of \cite{Stoye_2011_JET} for their interpretation.

\begin{axiom}[\cite{Stoye_2011_JET}] \label{axiom_stoye}
    $C(\cdot)$ satisfies the following conditions.
    \begin{enumerate} 
        \item \emph{(Non-triviality)}  There exists some $M \subseteq \mathcal{D}$ such that $C(M) \subsetneq M$.
        \item \emph{(Monotonicity)}  If $\delta \in M$,  $\beta \in C(M)$, and $\delta \in C_{(z,\theta)}(\{\delta,\beta\})$ for all $(z,\theta) \in Z \times \Theta$, then $\delta \in C(M)$.
        \item \emph{(Independence)} $C(\lambda M + (1-\lambda) \delta)  = \lambda C(M) + (1-\lambda) \delta$ for any $\lambda \in (0,1)$.  
    \item \emph{(Independence of Never Strictly Optimal Alternatives (INA))}      If $C_{(z,\theta)}(M\cup N) \cap M \neq \emptyset$ for all $(z,\theta) \in Z \times \Theta$, then $C(M\cup N) \cap M \in \{C(M), \emptyset\}.$ 
    \item \emph{(Mixture Continuity)}  Suppose $C(M \cup \{\delta\}) = \{\delta\}$ and $\delta \not \in M$. For any $\beta \in M$ and $\beta' \in \mathcal{D}$, there exists $\lambda \in (0,1)$ such that $  C(M \cup \{\lambda \delta + (1-\lambda) \beta'\} )  = \{\lambda \delta + (1-\lambda) \beta'\}$ and  $ \lambda \beta + (1-\lambda) \beta'  \not \in C(M \cup \{\lambda \beta + (1-\lambda) \beta' \} ) $.

    \item \emph{(Ambiguity Aversion)} For any $\delta, \beta \in \mathcal{D}$,  $\lambda \in [0,1]$, and $M \supseteq \{\delta, \beta, \lambda \delta + (1-\lambda) \beta\}$,   $ \{\delta, \beta \} \subseteq C(M)$ implies  $\lambda \delta + (1-\lambda) \beta \in C(M)$.  

        \item \emph{(C-betweenness)} For any $\delta \in \mathcal{D}$, $p \in \R$, $\lambda \in (0,1)$ and $M \supseteq \{p, \delta, \lambda \delta +(1-\lambda) p \}$ with state-independent outcome distributions, $  p,\delta \not \in C(M)$ implies $\lambda \delta + (1-\lambda) p \not \in C(M)$.   
    \end{enumerate}
\end{axiom}

We need two additional axioms to characterize the dynamically consistent minimax regret criterion. The first axiom imposes IIA on $Z$-measurable acts, strengthening \cites{Stoye_2011_JET} IIA for constant acts.  Intuitively, the axiom will require that $C(\cdot)$ induces a binary relation over the collection of $Z$-measurable acts, in which case monotonicity, independence, and mixture continuity axioms will allow us to recover the unique SEU measure $\mu \in \Delta(Z)$ as in the proof of Theorem \ref{thm_rep_loss}. Again, we assume for  simplicity that the recovered SEU measure $\mu$ has full support.

\begin{axiom}[IIA on $Z$-measurable Acts]
    For menus of $Z$-measurable acts $M,N$, 
    \begin{align*}
        C (M \cup N) \cap M \in \{ C(M), \emptyset\}.
    \end{align*}
\end{axiom}

Finally, we need to adapt the $Z$-Separability axiom from the case of minimax loss to the setting of choice correspondences. The following axiom does exactly this, requiring that the choice correspondence $C(\cdot)$ respect the Sure Thing Principle when splicing acts over data realizations. Its interpretation is analogous to that of $Z$-Separability in the previous section.

\begin{axiom}[$Z$-Separable Choice] \label{axiom_sep_choice}
    For every $M$ with $\delta_{ \{z\}} \phi, \delta_{ \{z\}} \phi' \in C(M)$, we have
    \begin{align*}
        \delta_{ \{z\} } \phi \in C \left( M \cup \left\{ \beta_{ \{z\} } \phi \right\} \right)   \hspace{10pt}\Leftrightarrow  \hspace{10pt}  \delta_{ \{z\} } \phi' \in C \left( M \cup \left\{ \beta_{ \{z\} } \phi' \right\} \right).
    \end{align*}
\end{axiom}

The following theorem illustrates that the above three axioms behaviorally characterize the dynamically consistent minimax regret representation.

\begin{theorem} \label{thm_rep_regret}
     $C(\cdot)$ satisfies Axioms \ref{axiom_stoye}-\ref{axiom_sep_choice} if and only if it admits a dynamically consistent minimax regret representation with unique $(\mu, \{\mathcal{Q}_z\}_{z \in Z})$.
\end{theorem}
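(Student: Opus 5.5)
The plan is to build on Stoye's (2011) endogenous-prior minimax regret representation theorem and then use the two new axioms to impose the recursive structure. Axiom~\ref{axiom_stoye} is exactly Stoye's axiom list, applied with the state space $S \equiv Z \times \Theta$. Invoking his theorem, there is a closed convex set $\Gamma \subseteq \Delta(Z\times\Theta)$ of joint distributions. Then $C(M)$ minimizes, over $\delta \in M$, the quantity $\sup_{\gamma\in\Gamma}\E_\gamma[\max_{\beta\in M}\beta(z,\theta)-\delta(z,\theta)]$. Here regret is taken relative to the menu, statewise in $(z,\theta)$. Necessity of the axioms for the DC-MMR form is the routine direction. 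The representation is of this form with $\Gamma^* = \{\mu(dz)\pi_z(d\theta):\pi_z\in\mathcal Q_z\}$, as in the Gamma$^*$ example. For this rectangular set the supremum decomposes $z$ by $z$, as $\E_\mu[\sup_{\pi\in\mathcal Q_z}\E_\pi[\cdot]]$. From that decomposition one checks IIA on $Z$-measurable acts and $Z$-Separable Choice directly. For a $Z$-measurable act the inner regret is constant in $\theta$, so the criterion becomes an SEU ranking under $\mu$. Splicing at $z$ changes only the $z$-summand.

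For sufficiency, the remaining task is to show that $\Gamma$ must have the form $\Gamma^*$. This splits into two steps.

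First, identify the marginal. Take the restriction of $C$ to menus of $Z$-measurable acts. By IIA on $Z$-measurable acts it is rationalized by a complete, transitive binary relation, obtained by revealed preference from binary menus. Monotonicity, Independence and Mixture Continuity then make this relation an Anscombe--Aumann SEU preference over $\R^Z$, with unique measure $\mu$, which we assume has full support. In the representation, a $Z$-measurable menu yields $\sup_{\gamma\in\Gamma}\E_{\gamma_Z}[\cdot]$. This is an SEU ranking only if every $\gamma\in\Gamma$ has the same $Z$-marginal $\gamma_Z=\mu$. I would prove this by contradiction. Take two $Z$-marginals in $\Gamma$ that differ. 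Then construct a two-act menu in which Stoye's regret criterion violates IIA once a third $Z$-measurable act is added. Stoye's own argument that a nonsingleton prior set violates IIA can be reused here on the coordinate $Z$.

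Second, show rectangularity. Disintegrate each $\gamma\in\Gamma$ as $\mu(dz)\gamma_z(d\theta)$. Let $\mathcal Q_z$ be the closed convex hull of $\{\gamma_z:\gamma\in\Gamma\}$, which is well defined because $Z$ is finite and $\mu$ has full support. The claim is $\Gamma=\Gamma^*$, meaning $\Gamma$ is closed under pasting conditionals across different $z$. The inclusion $\Gamma\subseteq\Gamma^*$ is immediate. For the reverse, $Z$-Separable Choice is the key input. Suppose $\Gamma\subsetneq\Gamma^*$. Separation gives a utility act whose worst-case evaluation under $\Gamma$ differs from the sum of the $z$-wise worst cases. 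Using that act, I would build $\delta,\beta$ and two continuations $\phi,\phi'$. With $\phi$, the worst-case prior selected at $z$ is coupled to one configuration off $z$; with $\phi'$, it is coupled to a different one. This makes $\delta_{\{z\}}\phi$ survive the addition of $\beta_{\{z\}}\phi$ while $\delta_{\{z\}}\phi'$ does not, contradicting Axiom~\ref{axiom_sep_choice}. Once $\Gamma=\Gamma^*$, the supremum of the expectation equals the expectation of the pointwise suprema, by finiteness of $Z$. Since regret is statewise relative to the menu, the representation is exactly the DC-MMR form. Uniqueness of $(\mu,\{\mathcal Q_z\})$ follows from the uniqueness of $\Gamma$ in Stoye's theorem and of $\mu$ in SEU.

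I expect the main obstacle to be this rectangularity step. Regret is menu-dependent, and the hypothesis of Axiom~\ref{axiom_sep_choice} requires both spliced acts to already lie in $C(M)$. The menu $M$ must therefore be engineered carefully: include constant high-utility acts that fix the statewise maxima, and tune the continuations so that both $\delta_{\{z\}}\phi$ and $\delta_{\{z\}}\phi'$ are initially chosen. I would first try menus of the form $\{\delta_{\{z\}}\phi,\delta_{\{z\}}\phi',\text{ins}\}$, where the insurance act equalizes regrets off $z$. Mixture Continuity would then pin down the tie.
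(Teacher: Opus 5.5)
Your first half matches the paper's proof. You invoke Stoye's representation with $\Gamma\subseteq\Delta(Z\times\Theta)$, recover $\mu$ from $Z$-measurable menus, and conclude that every $\gamma\in\Gamma$ has $Z$-marginal $\mu$. The paper obtains the last point from uniqueness of Stoye's representation on the restricted problem rather than from a new IIA-violation argument; either works. Stoye's results also need IIA on constant acts, which you must take from IIA on $Z$-measurable acts, not from Axiom~\ref{axiom_stoye}.

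\textbf{The gap is the rectangularity step.} This is the real content of sufficiency, and you state it only as an intention. A separating act for $\gamma^*\in\Gamma^*\setminus\Gamma$ involves every coordinate, while Axiom~\ref{axiom_sep_choice} speaks only to a change at a single $z$ under a fixed continuation, so you must localize. Within your route this can be done. By induction over the finite $Z$, it suffices that $\Gamma$ be closed under pasting one conditional. Take $\gamma,\gamma'\in\Gamma$ and let $\tilde\gamma$ equal $\gamma'$ on $\{z\}\times\Theta$ and $\gamma$ elsewhere. If $\tilde\gamma\notin\Gamma$, a nonnegative separating act $r=a_{\{z\}}h$ yields $I(a_{\{z\}}h)<\mu(z)v_z(a)+\max_{\gamma\in\Gamma}\sum_{z'\neq z,\theta}h(z',\theta)\gamma(z',\theta)=I(v_z(a)_{\{z\}}h)$. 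At the same time $I(a_{\{z\}}\mathbf 0)=\mu(z)v_z(a)=I(v_z(a)_{\{z\}}\mathbf 0)$, and both equalities use the common marginal. Together these show that the regret functional fails $Z$-separability. The paper runs this in the opposite direction. It proves that separability as a lemma, then substitutes the constant equivalents $v_z(r_z)$ slice by slice to obtain $I(r)=\sum_z\mu(z)v_z(r_z)$ directly, never forming $\Gamma^*$.

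Either way, you still have to turn a comparison of regret acts into a menu on which the hypothesis of Axiom~\ref{axiom_sep_choice} holds, and the devices you suggest do not do this. A constant act attaining the statewise maximum has zero regret. Then $\delta_{\{z\}}\phi$ and $\delta_{\{z\}}\phi'$ could lie in $C(M)$ only if their own regret vanished, and the axiom would have no bite.
\begin{itemize}
\item The frontier must instead be pinned by acts that attain it at a single state and are heavily penalized elsewhere. The paper uses $e^s_K$, equal to $0$ at $s$ and $-K$ off $s$ for large $K$. Then the menu act $-r$ has regret exactly $r$, and the pinning acts are never chosen.
\item You also need exact ties, and Mixture Continuity does not deliver them. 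The paper adds $\lambda\mathbf 1_{-z}$ to one continuation. This raises $I$ of any act by exactly $\lambda(1-\mu(z))$, again because of the common marginal. It therefore equalizes $I(a_{\{z\}}\bar h)=I(a_{\{z\}}\bar k)$ while preserving each comparison of $a$ with $b$ under a fixed continuation.
\item Adding $-b_{\{z\}}\bar h$, which lies weakly below the frontier, leaves all other regrets unchanged.
\end{itemize}
With these ingredients your contradiction closes; as written, sufficiency is not yet proved.
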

\begin{proof}
    See Appendix \ref{appendix_proof_regret}. 
\end{proof}

\section{Comparison of Optimality Criteria} \label{section_empirics}

\subsection{Preliminaries}

 
In this section, we turn to evaluating the optimality criteria considered  thus far. Our first application uses \cites{bursztyn2020misperceived} study of information treatments on social norms to show that ex ante minimax regret can generate irrational interim prescriptions in treatment choice, while  dynamically consistent minimax regret avoids these issues. The second application does the same using simulated data in the context of an evidence aggregation problem with limited external validity. 

To evaluate the interim credibility of decision rules, we study the interim prescriptions generated by each rule. For the dynamically consistent minimax regret criterion, this means that only the correspondence $z \mapsto \mathcal{Q}_z$ is relevant. The DM's prior beliefs about the marginal distribution of data do not affect the interim action, provided that the ex ante  choice domain is the set of all decision rules. 

Specifically, we use the as-if minimax regret (as-if MMR) criterion of \cite{Manski_2021_ECTA} as the interim objective, taking  confidence sets $\hat \Theta(z)$ as set estimates of the parameter $\theta$ and setting $\mathcal{Q}_z = \Delta(\hat \Theta(z))$. This approach follows recent work that applies  \cites{Manski_2021_ECTA} as-if  framework to decision problems with confidence sets \citep[e.g.,][]{Andrews_Chen_2025, Chernozhukov_et_al_2025, ben2025safe}.  Since dynamically consistent minimax regret provides the ex ante foundations of such as-if criteria, the results in this section can also be interpreted as illustrating the appeal of the as-if framework.

\subsection{Application: Treatment Choice} \label{subsection_empirics_treatment} 

\subsubsection{Data}

\citet{bursztyn2020misperceived} study whether  Saudi Arabian men's willingness to help their wives search for jobs outside the home is constrained by misperceived social norms. The authors find that most men privately support women working outside the home yet underestimate how many other men do. In their main experiment, the authors correct this misperception for a randomly selected half of the 500 participants.\footnote{The authors electronically randomized treatment values across all one thousand three-digit combinations before the experiment and assigned each man the value corresponding to the last three digits of his phone number (\citealp{bursztyn2020misperceived}, footnote 17).} We refer to this belief correction as the \emph{information treatment} and to the no-information treatment as the \emph{status quo}.

The outcome variable is an incentivized choice at the end of the session: whether to sign one's wife up for a job-matching service instead of receiving a cash bonus. At the aggregate level, the information treatment increases the sign-up rate from $0.235$ to $0.320$, a difference of $0.085$ ($z=2.1$). To study the minimax regret treatment choice problem with covariates, we form covariate subgroups from the twelve pre-treatment characteristics in the experiment.\footnote{These include the men's elicited second-order beliefs about how many other men support women working (in general, in semi-segregated environments, and on a minimum wage), the wedge between each belief and the truth, the confidence attached to it, and education, employment, and the wife's employment.} We form cells from covariates and their two- and three-way interactions, retaining only cells with at least five participants in each treatment arm (see  Appendix~\ref{app:protocol} for details).


\subsubsection{Decision Rules}

For each covariate cell, the DM must decide whether to \emph{provide} the information treatment $(a=1)$ or \emph{withhold} it $(a=0)$ as a function of the data.\footnote{The DM may be, for example, a policymaker deciding whether to institute targeted information campaigns to correct Saudi Arabian men's misperceptions about other men's approval of women working outside the home.} Given the covariate partition $\mathcal{X}$, the DM's ex ante problem is then to choose a treatment rule $\delta: Z \to [0,1]^\mathcal{X}$. Let $z_x$ denote the data from individuals in covariate cell $x \in \mathcal{X}$, and let $\delta_x(z)$ denote the treatment probability assigned to them after the realization of $z \in Z$.  The conditional average treatment effect  $\tau_x$  denotes the effect of the information treatment on job-matching service sign-up.

Below, we consider the interim prescriptions of various decision rules after observing the realization of \cites{bursztyn2020misperceived} study.

\paragraph{Minimax Regret} The minimax regret rule from Section~\ref{section_treatment_choice} extends naturally to the setting with covariates: Proposition 3 of \cite{Stoye_2009_JoE} implies that applying the rule separately within each covariate cell is minimax-regret optimal. In other words, if $\delta$ denotes the minimax regret rule from Fact~\ref{fact_stoye}, then the rule $\delta^*: Z \to \{0,1\}^\mathcal{X}$ defined by $\delta_x^*(z) = \delta(z_x)$ for every $x \in \mathcal{X}$ is minimax regret optimal. 


\paragraph{Hypothesis Testing} A conventional one-sided size-$\alpha$ hypothesis testing rule with $H_0:\tau_x \le 0$  provides the information treatment only when there is statistically significant evidence of benefit within each covariate cell. We denote this rule by $T_\alpha$, where  $(T_\alpha)_x(z) = 1$ if the test rejects  in  cell $x$, while $(T_\alpha)_x(z) = 0$ otherwise.

\paragraph{As-If MMR} The as-if MMR rule  minimizes regret over a confidence set for the cell-specific treatment effect. Let this confidence set be $\hat \Theta_x(z)=[\hat\tau_x -c \cdot \widehat{\mathrm{se}},\hat\tau_x +c \cdot \widehat{\mathrm{se}}]$. Within each cell, the regret of providing treatment is $\max\{0,-\tau_x\}$, while the regret of withholding treatment is $\max\{0,\tau_x\}$. Over the confidence set $\hat \Theta_x(z)$, the worst-case  regrets are  $ \max\{0, -\hat\tau_x + c \cdot \widehat{\mathrm{se}}\}$ and $\max\{0,\hat\tau_x +c \cdot \widehat{\mathrm{se}}\}$, respectively. The former is strictly smaller if and only if $\hat\tau_x>0$, with indifference when $\hat\tau_x=0$. Thus, the as-if MMR rule coincides with the conditional empirical success rule of \cite{Manski_2004_ECTA}: it assigns treatment to individuals in cell $x$ if and only if the estimated treatment effect in that cell is positive, i.e.,  $\hat\tau_x=\bar Y_{1x}-\bar Y_{0x}>0$.\footnote{Note that the  as-if MMR rule is  distinct from the conditional empirical success rule in general. The evidence aggregation application we consider in Section \ref{subsection_ev_aggregation} is one such example.}

\paragraph{}  Since none of the above three rules pool data across covariate cells, we henceforth suppress the subscript $x$ and treat all decisions as cell-by-cell.

Observe that the minimax regret and as-if MMR rules are closely related. Writing $\Delta \equiv N_1-N_0$,  the minimax regret index $I_N$ from Fact~\ref{fact_stoye} can be written as $I_N=N_0\hat\tau+\Delta(\bar Y_1-\tfrac12)$. The two rules then  coincide whenever the treatment arms are balanced, i.e., $\Delta = 0$, and they differ only when the latter term overturns the sign of $\hat\tau$, i.e., $|\hat\tau|\lesssim |\Delta|\,|\bar Y_1-\tfrac12|/N_0$. Under fair-coin assignment, $|\Delta|$ is of order $\sqrt N$, so the two rules disagree only for estimated treatment effects in an $O_p(N^{-1/2})$ neighborhood of zero. Because this band is of the same stochastic order as the standard error itself, such disagreements can nevertheless coincide with statistically significant estimates.

Moreover, the following lemma illustrates the as-if MMR rule provides  treatment whenever the  hypothesis testing rule does. This eliminates the type of interim disagreement studied in Section~\ref{section_treatment_choice}, where minimax regret recommends withholding treatment despite statistically significant evidence in favor of treatment.

\begin{lemma}\label{lem:asif-Ta}
For any level $\alpha\in(0,\tfrac12)$, if $T_\alpha$ provides the information treatment, then the as-if MMR rule also provides.
\end{lemma}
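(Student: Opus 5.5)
The argument reduces both rules to statements about the sign of $\hat\tau$ within a cell and compares them directly. Fix a covariate cell and suppress the subscript $x$, as in the text.

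First, recall from the derivation preceding the lemma that the as-if MMR rule provides treatment exactly when $\hat\tau=\bar Y_1-\bar Y_0>0$. This holds because the worst-case regret of providing, $\max\{0,-\hat\tau+c\cdot\widehat{\mathrm{se}}\}$, is strictly smaller than that of withholding, $\max\{0,\hat\tau+c\cdot\widehat{\mathrm{se}}\}$, if and only if $\hat\tau>0$. It therefore suffices to show that whenever $T_\alpha$ rejects $H_0:\tau\le 0$, we have $\hat\tau>0$.

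Second, write the one-sided size-$\alpha$ test in its standard form. It rejects when the studentized statistic satisfies $\hat\tau/\widehat{\mathrm{se}}>z_{1-\alpha}$, where $z_{1-\alpha}=\Phi^{-1}(1-\alpha)$ is the critical value. The assumption $\alpha\in(0,\tfrac12)$ is exactly what guarantees $z_{1-\alpha}>0$, because $\Phi(0)=\tfrac12$ and $\Phi$ is strictly increasing.

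Third, given rejection and $\widehat{\mathrm{se}}>0$, we obtain $\hat\tau>z_{1-\alpha}\,\widehat{\mathrm{se}}>0$. Hence the as-if MMR rule provides treatment.

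The main obstacle is a degenerate edge case: a cell where $\widehat{\mathrm{se}}=0$, for example when all outcomes in both arms are identical. Here I would adopt the convention that the test does not reject. This is natural because $0/0$ is undefined, and if the outcomes are constant then $\hat\tau$ is determined exactly. Alternatively, the convention could be that the test rejects only if $\hat\tau>0$. Under either convention the implication still holds. If the paper's test uses a more general form, such as an exact or permutation test, I would instead rely on the property that a level-$\alpha$ one-sided test with $\alpha<\tfrac12$ has a rejection region contained in $\{\hat\tau>0\}$. This follows from the test statistic being monotone in $\hat\tau$ and centered at zero under the boundary null $\tau=0$.
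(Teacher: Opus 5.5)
Your proof is correct and follows the paper's own argument: reduce the as-if MMR rule to ``provide iff $\hat\tau>0$,'' then use $z_\alpha=\Phi^{-1}(1-\alpha)>0$ for $\alpha<\tfrac12$ together with $\widehat{\mathrm{se}}>0$ to get $\hat\tau>z_\alpha\,\widehat{\mathrm{se}}>0$ whenever the Wald test rejects. The paper simply assumes $\widehat{\mathrm{se}}>0$ and uses the Wald test, so your edge-case discussion is not needed. Your fallback claim for general exact or permutation tests is only sketched, and it would need a real argument if you relied on it.
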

\begin{proof}
See Appendix~\ref{appendix_lem_asif_Ta}.
\end{proof}

\subsubsection{Results}

We begin by focusing on disagreements between the minimax regret and hypothesis testing rules of the kind discussed in Section~\ref{section_treatment_choice}, where the minimax regret rule $\delta^*$  is ``too conservative'' and withholds treatment. As shown in Table~\ref{tab:bursztyn-ta}, across 18 distinct covariate subgroups,  $\delta^\ast$ withholds treatment even though the hypothesis testing rule $T_\alpha$ provides; 2 of these disagreements occur at the $\alpha=0.05$ level and 16 more at the $\alpha = 0.10$ level.\footnote{These subgroups are the distinct samples behind the cell counts reported in Appendix~\ref{app:disagreement}; Appendix~\ref{app:cells} explains the overlapping structure of the cells and the scope of the exercise.} Many of these disagreements arise in the presence of treatment arm imbalance (i.e., $|\Delta| \gg 0$). Moreover, as predicted by Lemma~\ref{lem:asif-Ta}, the as-if MMR rule agrees with $T_\alpha$ for every specification reported in the table.

\begin{table}[t]
\centering
\caption{Subgroups for which $\delta^\ast$ withholds while  $T_\alpha$ provides.}
\label{tab:bursztyn-ta}
\scriptsize
\setlength{\tabcolsep}{3pt}
\begin{tabular}{lcccccc}
\toprule
& & & & \multicolumn{2}{c}{$T_\alpha$} & \\
\cmidrule(lr){5-6}
Covariate cell & $N_1{:}N_0$ & $\bar Y_1/\bar Y_0$ & $\delta^\ast$ & $.05$ & $.10$ & as-if \\
\midrule
wife not emp.; few friends; low seg.\ guess & $30{:}16$ & $0.233/0.062$ & $\times$ & $\checkmark$ & $\checkmark$ & $\checkmark$ \\
employed; mid WWOH wedge; underest.\ seg. & $26{:}9$ & $0.346/0.111$ & $\times$ & $\checkmark$ & $\checkmark$ & $\checkmark$ \\
underest.\ seg.\ support; low WWOH guess; low wage guess & $23{:}12$ & $0.261/0.083$ & $\times$ & $\times$ & $\checkmark$ & $\checkmark$ \\
low educ.; low conf.; underest.\ seg.$^{\dagger}$ & $30{:}21$ & $0.167/0.048$ & $\times$ & $\times$ & $\checkmark$ & $\checkmark$ \\
no insur.\ belief; low seg.\ guess & $22{:}9$ & $0.318/0.111$ & $\times$ & $\times$ & $\checkmark$ & $\checkmark$ \\
wife not emp.; few friends; low wage guess & $29{:}20$ & $0.172/0.050$ & $\times$ & $\times$ & $\checkmark$ & $\checkmark$ \\
wife emp.; mod.\ conf.; low wage guess & $22{:}11$ & $0.273/0.091$ & $\times$ & $\times$ & $\checkmark$ & $\checkmark$ \\
low educ.; low conf.; underest.\ seg.$^{\dagger}$ & $31{:}21$ & $0.161/0.048$ & $\times$ & $\times$ & $\checkmark$ & $\checkmark$ \\
no insur.\ belief; underest.\ wage & $26{:}13$ & $0.231/0.077$ & $\times$ & $\times$ & $\checkmark$ & $\checkmark$ \\
mod.\ conf.; low wage guess & $34{:}20$ & $0.235/0.100$ & $\times$ & $\times$ & $\checkmark$ & $\checkmark$ \\
employed; high educ.; high conf. & $21{:}8$ & $0.333/0.125$ & $\times$ & $\times$ & $\checkmark$ & $\checkmark$ \\
wife not emp.; few friends; underest.\ seg. & $44{:}27$ & $0.227/0.111$ & $\times$ & $\times$ & $\checkmark$ & $\checkmark$ \\
wife not emp.; few friends; underest.\ wage & $26{:}16$ & $0.192/0.062$ & $\times$ & $\times$ & $\checkmark$ & $\checkmark$ \\
underest.\ seg.\ support; overest.\ wage; high WWOH guess & $22{:}7$ & $0.364/0.143$ & $\times$ & $\times$ & $\checkmark$ & $\checkmark$ \\
few friends; mod.\ conf.; low wage guess & $20{:}9$ & $0.300/0.111$ & $\times$ & $\times$ & $\checkmark$ & $\checkmark$ \\
no insur.\ belief; underest.\ WWOH; underest.\ wage & $22{:}10$ & $0.273/0.100$ & $\times$ & $\times$ & $\checkmark$ & $\checkmark$ \\
no insur.\ belief; underest.\ seg.\ \& wage support & $24{:}11$ & $0.250/0.091$ & $\times$ & $\times$ & $\checkmark$ & $\checkmark$ \\
underest.\ seg.\ support; mid seg.\ guess; low wage guess & $26{:}12$ & $0.231/0.083$ & $\times$ & $\times$ & $\checkmark$ & $\checkmark$ \\
\bottomrule
\end{tabular}
\par\smallskip
\footnotesize\justifying
Notes. $\bar Y_1$ and $\bar Y_0$ correspond to the cell-specific sign-up rates among those with and without the information treatment. Labels abbreviate the covariate bins (``underest.'' = underestimates others' support; ``seg.'' = semi-segregated work; ``WWOH'' = women working outside the home); full definitions are in the replication code. $\checkmark$ refers to provide, and $\times$ refers to withhold. The two rows marked $\dagger$ apply two education cutoffs to nearly the same subgroup and are therefore not independent. 
\end{table}

Consider, for instance, the first row, which corresponds to the covariate cell containing men whose wives are not employed, who know few other participants, and who place others' support for women's work outside the home in the lowest range. In this subgroup, the average sign-up rate for the job-matching service is  $0.233$  $(N_1=30)$ and $0.062$ $(N_0=16)$  for those with and without the information treatment, respectively. The corresponding one-sided hypothesis test has $p$-value $0.041$, so $T_\alpha$  at level $\alpha=0.05$ provides the treatment. By Lemma~\ref{lem:asif-Ta}, the as-if MMR rule also provides, while the minimax regret rule $\delta^*$ withholds.

As in Section~\ref{section_treatment_choice}, we interpret this disagreement as an empirically relevant instance of the interim \emph{incredibility} of $\delta^*$. After seeing evidence that is strong enough to justify information provision under both hypothesis testing and as-if MMR, a reasonable DM may wish to deviate from the ex ante minimax-regret prescription.

Figure~\ref{fig:bursztyn-cells} provides a broader view of the disagreement among the three rules. Each point is a covariate cell, plotted by cell-specific standardized arm imbalance $(N_1-N_0)/\sqrt{N}$ and Wald statistic $\hat\tau/\widehat{\mathrm{se}}$. The dashed and dotted horizontal lines mark the rejection thresholds for  $T_\alpha$ at $\alpha=0.05$ and $0.10$, respectively. 

The left panel illustrates that most disagreements between $T_\alpha$ and minimax regret arise because hypothesis testing is conservative: many blue points lie below the $T_\alpha$ threshold, so minimax regret provides treatment while $T_\alpha$ withholds. This is the phenomenon emphasized by \cite{Manski_2004_ECTA} and discussed in Remark~\ref{remark_conservatism}, highlighting the conservatism of the hypothesis testing rule. However, we also see that minimax regret can provide treatment even when the estimated treatment effect is negative. This occurs in cells with arm imbalance in favor of the status quo, such that $(N_1-N_0)/\sqrt{N} \ll 0$. Such prescriptions are difficult to justify at the interim stage: after observing a negative estimated treatment effect, the DM is nevertheless instructed to provide treatment. 

The kinds of disagreements reported in Table~\ref{tab:bursztyn-ta} occur when arm imbalance favors the information treatment, such that $(N_1-N_0)/\sqrt{N} \gg 0$. The minimax-regret  boundary where the DM is indifferent between providing and withholding treatment slopes upward with realized arm imbalance. As a result,  treatment-heavy cells in the upper-right region of the left panel can be assigned to withhold even when the Wald statistic exceeds the $T_\alpha$ threshold. 

\begin{figure}[t]
\centering
\includegraphics[width=\linewidth]{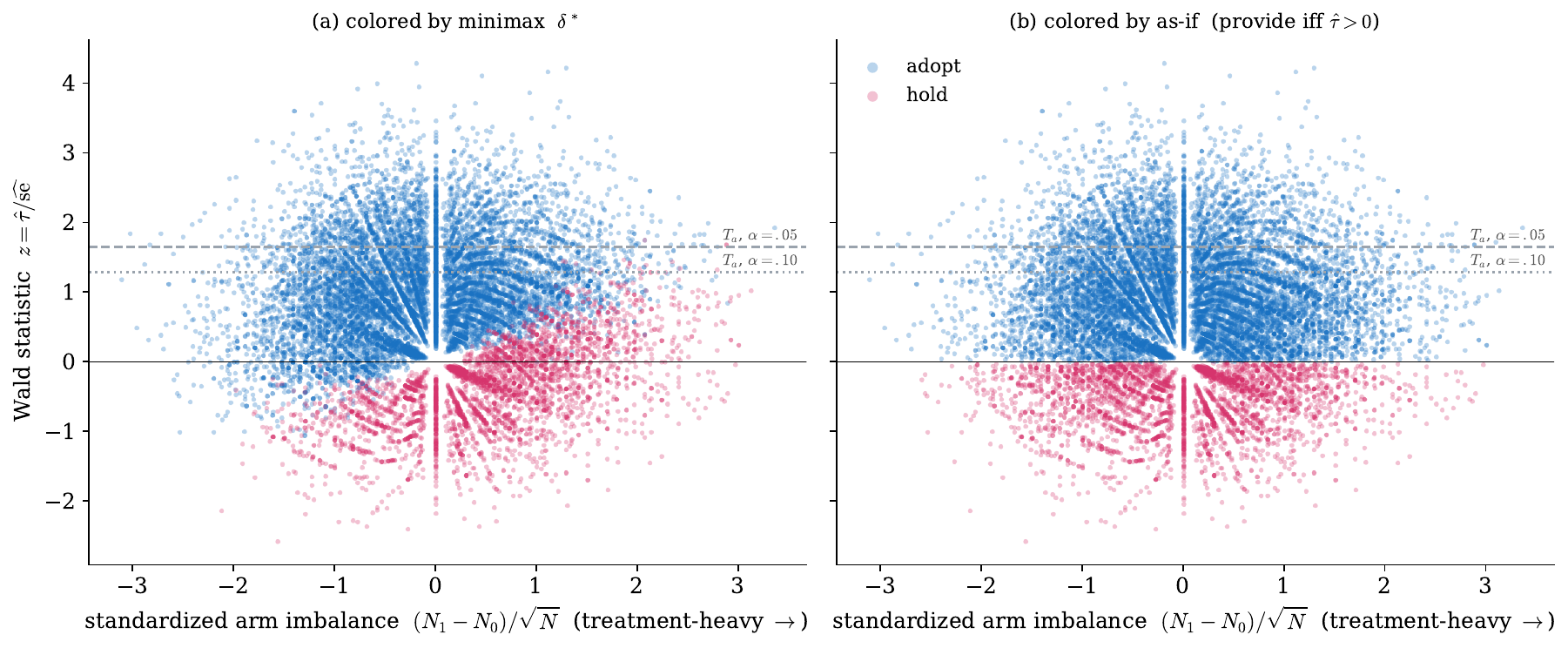}
\caption{Every covariate cell of the twelve-characteristic partition (at least five men per arm), placed by standardized arm imbalance $(N_1-N_0)/\sqrt N$ and Wald statistic $\hat\tau/\widehat{\mathrm{se}}$, colored by the minimax-regret action $\delta^\ast$ (left) and by the as-if MMR rule (right). Dashed and dotted lines denote the approval threshold for the  hypothesis testing rule $T_\alpha$ at $\alpha=0.05$ and $0.10$, respectively.} 
\label{fig:bursztyn-cells}
\end{figure}

The right panel shows that the as-if MMR rule eliminates this dependence on realized arm imbalance. As discussed previously, as-if MMR coincides with the conditional empirical success rule in this binary treatment-choice problem, and therefore provides treatment exactly when $\hat\tau>0$.  In this sense, as-if MMR avoids both the excessive aggressiveness and the excessive conservatism induced by arm imbalance under the minimax-regret rule.\footnote{Of course, the same argument holds for the conditional empirical success rule, which we know from \cite{Manski_2004_ECTA} to be asymptotically minimax-regret optimal. The evidence aggregation example we consider in the following section illustrates the value of the as-if MMR approach in settings where there is no obvious way to formalize an empirical success rule.} 



\subsection{Evidence Aggregation} \label{subsection_ev_aggregation}

\subsubsection{Decision Rules}

We now return to the evidence aggregation problem introduced in Section~\ref{section_Yata}, where the DM must decide whether to \emph{adopt} $(a=1)$ or \emph{reject} $(a=0)$ the policy after observing the realization of $z=(z_1,z_2)$. Recall that $z_i\sim N(\theta_i,\sigma^2)$, where $\theta_i$ is the welfare effect of the policy in study population $i$, and $\theta_3$ is the welfare effect in the target population. The external validity restrictions imply $|\theta_i-\theta_3|\leq C_i$ for $i\in\{1,2\}$. We will continue to maintain the assumptions stated in Fact~\ref{fact_yata_evidence}, such that study 2 is  more externally valid  than study 1.

Next, consider the interim prescriptions of the three decision rules.

\paragraph{Minimax Regret} Recall from Fact~\ref{fact_yata_evidence} that \cites{yata2021optimal}   ex ante minimax regret rule is $\delta^*(z)=\1\{z_2\geq 0\}$, i.e., only the  more externally valid study is used.

\paragraph{Hypothesis Testing} A conventional one-sided size-$\alpha$ hypothesis testing rule for $H_0: \theta_3\leq 0$ introduces the policy only when there is statistically significant evidence that the target-population welfare effect is positive. To construct the relevant lower confidence bound, first note that each study $i \in \{1,2\}$ induces the interval $[z_i-c\sigma-C_i,\, z_i+c\sigma+C_i]$, where $c$ is the appropriate critical value. Combining the two studies yields the confidence set
\begin{align*}
   [L,U] & \equiv 
    [\, z_1-c\sigma-C_1,\ z_1+c\sigma+C_1\,]
    \cap
    [\, z_2-c\sigma-C_2,\ z_2+c\sigma+C_2\,]  \\
    & = \Big[ \max\{z_1-c\sigma-C_1,\ z_2-c\sigma-C_2\}, \,\,  \min\{z_1+c\sigma+C_1,\ z_2+c\sigma+C_2\} \Big], 
\end{align*}
as long as the intersection is non-empty.\footnote{When the two study-implied intervals are disjoint, we expand the level to the smallest value of $c$ at which the intersection is nonempty and then apply the same rule to the resulting reconciled value of $\theta_3$. This keeps the rule defined for all data realizations and, in particular, keeps it using study $1$ precisely in the region where the two studies disagree and the ex ante minimax rule discards it. Such disjointness is rare: the two intervals fail to intersect only when $|z_1-z_2|>2c\sigma+C_1+C_2$. Even when the studies lie at opposite edges of their external validity bounds, this event occurs with probability $\Pr(Z>c\sqrt{2})$ ($\approx 0.3\%$ for a $95\%$ confidence set), and it is negligible otherwise.}  The hypothesis testing rule, denoted $T_\alpha$, introduces the policy if and only if $L>0$.\footnote{
Because $T_\alpha$ rejects when \emph{either} study's lower bound is positive, a single-study critical value would inflate the size of the combined test. Since the studies are independent, taking $c=\Phi^{-1}(\sqrt{1-\alpha})$ gives each study one-sided coverage $\sqrt{1-\alpha}$, so the lower bound $L=\max\{z_1-c\sigma-C_1,\,z_2-c\sigma-C_2\}$ covers $\theta_3$ with probability $1-\alpha$ and $T_\alpha$ is an exact level-$\alpha$ test of $H_0:\theta_3\le 0$. Figure~\ref{fig:yata-asif} shows the boundary $L=0$ at $\alpha=0.05$ ($c\approx1.95$) and $\alpha=0.10$ ($c\approx1.63$).} In other words, $T_\alpha$ introduces the policy only when the confidence set rules out non-positive target effects.

\paragraph{As-If MMR} The as-if MMR rule minimizes regret over the confidence set $\hat\Theta_3(z)=[L,U]$. For a given value of the target effect $\theta_3$, the regret of introducing the policy is $\max\{0,-\theta_3\}$, while the regret of rejecting the policy is $\max\{0,\theta_3\}$. Over the confidence set $[L,U]$, the worst-case regrets of adopting or rejecting the policy are then $\max\{0,-L\}$ and $\max\{0,U\}$, respectively.  The as-if MMR rule introduces the policy whenever the former is smaller than the latter. Equivalently, for a nonempty interval $[L,U]$, it introduces the policy if and only if the midpoint is positive (i.e., $(L+U)/2>0$), with indifference when $L+U=0$.

\paragraph{} Unlike the minimax regret rule $\delta^*$, the hypothesis testing and as-if MMR rules use information from both studies. The hypothesis testing rule is more conservative: it introduces the policy only when the lower endpoint $L$ is positive. By contrast, the as-if MMR rule introduces the policy whenever the confidence set is centered above zero. Hence, whenever the hypothesis testing rule adopts the policy, the as-if MMR rule does so as well, proving the following lemma.

\begin{lemma}\label{lem:yata-Ta}
For any level $\alpha \in (0,\frac{1}{2})$, if $T_\alpha$ adopts the policy, then the as-if MMR rule also adopts the policy. 
\end{lemma}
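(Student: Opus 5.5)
The argument is a direct comparison of the two adoption conditions on the same confidence set $[L,U]$. First I handle the main case in which the two study-implied intervals intersect, so $[L,U]$ is the nonempty intersection defined above and $L\le U$. Suppose $T_\alpha$ adopts, i.e.\ $L>0$. Since $U\ge L>0$, we get $L+U\ge 2L>0$, so the midpoint $(L+U)/2$ is strictly positive.

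By the characterization of the as-if MMR rule given just before the lemma, it introduces the policy exactly when the midpoint is positive. I would also recheck this through the worst-case regrets. When $L>0$, the worst-case regret of adopting is $\max\{0,-L\}=0$. The worst-case regret of rejecting is $\max\{0,U\}=U>0$. So adopting is strictly preferred, with no indifference, and the as-if MMR rule adopts.

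Second, I handle the disjoint case in the footnote, where the level is expanded to the smallest $c'\ge c$ at which the intervals meet. Both rules are then evaluated on the same reconciled interval $[L',U']$ built with $c'$. At the minimal $c'$ this interval is a single point, so $L'=U'$. The same argument applies: $L'>0$ implies $L'+U'>0$ and zero worst-case adoption regret, so both rules agree to adopt. The only care needed is that $T_\alpha$ and as-if MMR use the same reconciled set; this is how the footnote defines the rule (``apply the same rule to the resulting reconciled value'').

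Nothing here is genuinely hard. The main point to state explicitly is the well-definedness convention in the disjoint case: the argument needs only $L\le U$ for whichever set both rules use. The restriction $\alpha\in(0,\tfrac12)$ ensures $c>0$, so the intervals are genuine confidence bands, but the implication itself does not rely on it.
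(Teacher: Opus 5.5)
Your proof is correct and follows the paper's argument: $T_\alpha$ adopts only when $L>0$, and since $U\ge L$ the confidence set is then centered above zero, so the as-if MMR rule adopts. Your explicit treatment of the disjoint case, using the footnote's reconciliation convention (where $L'=U'$), and your direct check that the worst-case regret of adopting is zero are useful additions that the paper leaves implicit.
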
 


\subsubsection{Results}

Figure~\ref{fig:yata-asif} visualizes the interim prescriptions of the three decision rules over the $(z_1,z_2)$ plane when $\sigma = 1$, $C_1 = 2$, and $C_2 = 1$. Because the minimax regret rule $\delta^*$ ignores the less externally valid study, its prescription depends only on $z_2$. It adopts the policy when $z_2 \geq 0$, i.e., in the region above the dashed black line. By contrast, the as-if MMR rule adopts the policy  when $(L+U)/2 \geq 0$, corresponding to the region above the solid black line. The hypothesis testing rule $T_\alpha$ adopts treatment when the lower confidence bound $L$ is nonnegative. In the figure, this is the upper-right region relative to the green boundaries, where the solid line corresponds to $\alpha=0.05$ and the dashed line corresponds to $\alpha=0.10$.

\begin{figure}[t]
\centering
\includegraphics[width=0.9\linewidth]{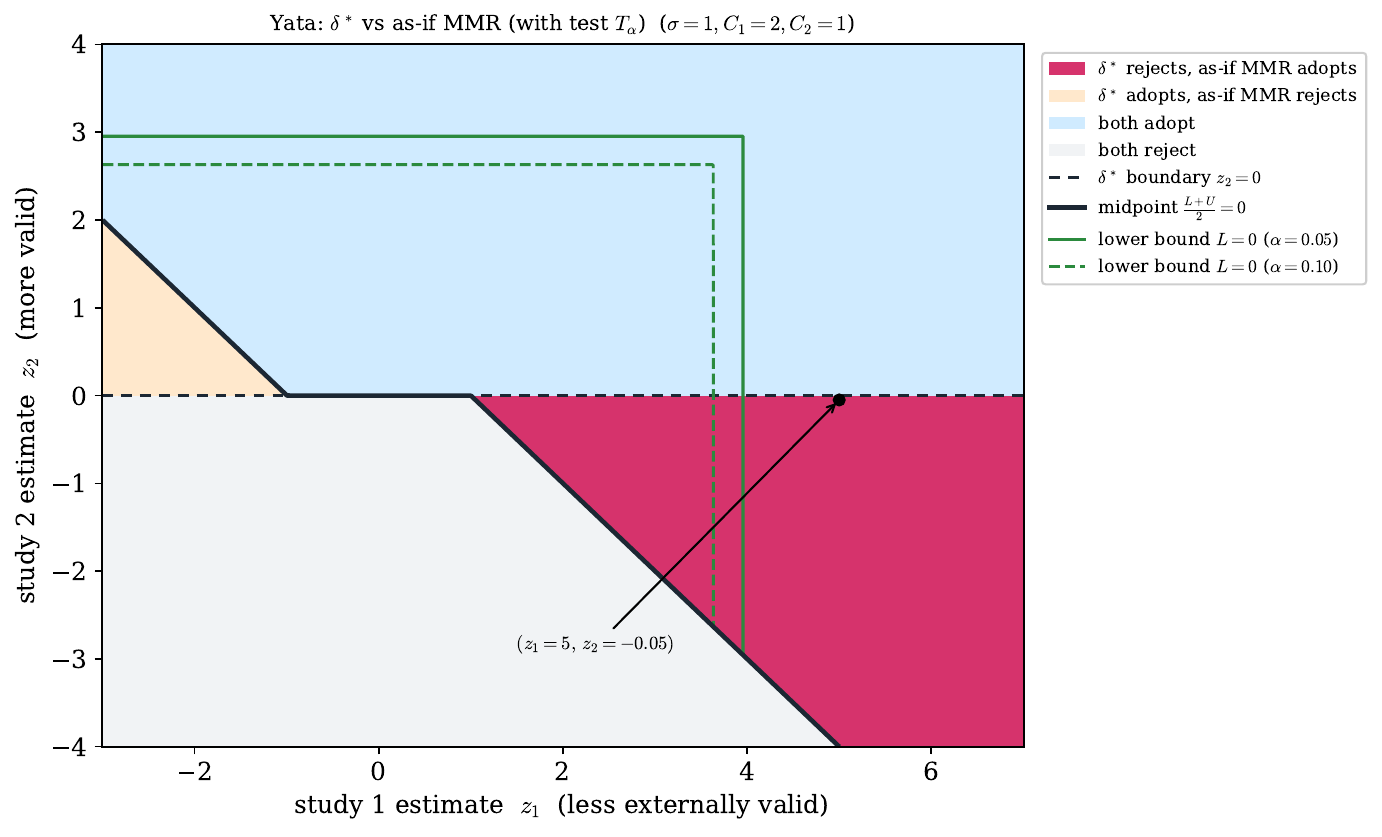}
\caption{ Color-coded regions over the $(z_1,z_2)$ plane based on the interim prescriptions of the $\delta^*$, $T_\alpha$, and as-if MMR rules  ($\sigma=1$, $C_1=2$, $C_2=1$). }
\label{fig:yata-asif}
\end{figure}

The interim \emph{incredibility} of the minimax regret rule discussed in Section~\ref{section_Yata}, where $\delta^*$ is ``too conservative'' relative to $T_\alpha$, corresponds to the region in which $T_\alpha$ adopts but $\delta^*$ rejects. In the figure, this is the pink region lying in the adoption region of $T_\alpha$ and below the dashed black line $z_2=0$.  For example, $(z_1,z_2)=(5,-0.05)$ is the data realization considered in Section~\ref{section_Yata}. Here, the minimax regret rule rejects the policy because $z_2<0$, despite significant evidence in favor of the policy from the less externally valid study.  As predicted by Lemma~\ref{lem:yata-Ta}, the as-if MMR rule adopts whenever $T_\alpha$ adopts, thus avoiding this form of excessive conservatism.

The figure also illustrates the opposite form of conservatism, in the spirit of \cites{Manski_2004_ECTA} critique that hypothesis testing can be too conservative in treatment choice. This occurs in the beige and blue regions in the interior of the green boundaries, where $T_\alpha$ rejects even though the minimax regret rule adopts. The as-if MMR rule distinguishes between the two cases. In the blue region, it adopts despite rejection by $T_\alpha$, consistent with the  prescription of $\delta^*$. In the beige region, it instead corrects the excessive aggressiveness of $\delta^*$ and rejects. Although $z_2 \geq 0$ leads $\delta^*$ to adopt, the evidence from the less externally valid study is sufficiently negative that the as-if MMR rule rejects.

Taken together, our results  illustrate that the as-if MMR rule balances the competing concerns underlying minimax regret and hypothesis testing. It avoids the excessive conservatism of $\delta^*$ when strong evidence from the less externally valid study favors adoption, while also avoiding the excessive aggressiveness of $\delta^*$ when that evidence points against adoption. We interpret this as demonstrating the interim credibility of the as-if MMR prescription, thus supporting the rationale for the dynamically consistent minimax regret criterion.


\section{Conclusion} \label{section_conclusion}

This paper studies the dynamic consistency of statistical decision rules justified by ex ante optimality criteria. We show that, for frequentist minimax criteria, dynamically consistent interim preferences can be defined by updating a least favorable prior. This observation yields a diagnostic for assessing whether ex ante minimax rules remain credible after the data realization. In applications to treatment choice and evidence aggregation, this diagnostic reveals that minimax-regret rules can prescribe actions that are difficult to justify at the interim stage  in empirically relevant settings.  To address this issue, we propose and axiomatize a class of dynamically consistent minimax criteria that  provide credible prescriptions in the interim period. Applications to both real and simulated data illustrate their potential value as a tool for guiding statistical decision making.

\appendix

\setcounter{lemma}{0}%
\renewcommand{\thelemma}{A.\arabic{lemma}}
\setcounter{example}{0}%
\renewcommand{\theexample}{A.\arabic{example}}

\setcounter{table}{0}
\renewcommand{\thetable}{A.\arabic{table}}

\section{Proofs} \label{appendix_proof}

\subsection{Proof of Proposition~\ref{prop_posterior_rep}} \label{appendix_proof_LFP}

We have by construction that $\delta^* \in \argmax_{\delta \in \mathcal{D}} \E_{\pi^*}[U(\delta,\theta)]$. By iterated expectations, for any $\delta\in\mathcal D$,
\begin{align*}
\E_{\pi^*}[U(\delta,\theta)]
 &  = 
\int_ Z
\E_{\pi^*(\cdot\mid z)}[u(\delta( z),\theta)]
\, P_{\pi^*}(d  z).
\end{align*}
Suppose, for a contradiction, that there exists some set $B\subseteq Z$ with $P_{\pi^*}(B)>0$ such that $\delta^*( z)$ is not posterior optimal on $B$. For each $ z\in B$, there then exists $a( z)\in\mathcal D( z)$ such that $\E_{\pi^*(\cdot\mid z)}[u(a( z),\theta)]
>
\E_{\pi^*(\cdot\mid z)}[u(\delta^*( z),\theta)].$   Since $\mathcal{D}$ is closed under measurable pasting, we see that the rule $\tilde\delta \in \mathcal D$ that agrees with $\delta^*$ on $B^c$ and satisfies $\tilde\delta( z)=a( z)$ on $B$ attains strictly higher $\pi^*$-expected payoff, in contradiction of the fact that $\delta^*$ best responds to $\pi^*$.   \hfill $\blacksquare$

\subsection{Proof of Proposition~\ref{prop_treatment_LFP}}  \label{Appendix_prop_LFP}

We first prove the following lemma, which gathers and builds on the results from the proofs of Proposition 1(ii) and Corollary 1 in \cite{Stoye_2009_JoE}.

\begin{lemma} \label{lemma_difference}
For the minimax regret rule $\delta^*$  in Fact~\ref{fact_stoye}, regret depends on the state $(\mu_0,\mu_1)$ only through the treatment-effect magnitude $d=|\mu_1-\mu_0|$. In particular,
\begin{align*}
R(\delta^*,(\mu_0,\mu_1) ) = G_N(|\mu_1-\mu_0|),
\end{align*}
where, letting $M$ be the largest odd integer weakly smaller than $N$ and $k \equiv \frac{M-1}{2}$,
\begin{align*}
G_N(d)
\equiv
d
\sum_{j=0}^{k}
\binom{M}{j}
\left(\frac{1+d}{2}\right)^j
\left(\frac{1-d}{2}\right)^{M-j}.
\end{align*}
\end{lemma}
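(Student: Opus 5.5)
The plan is to reduce the data to a single sufficient statistic whose law depends on $\theta$ only through $\mu_1-\mu_0$, and then to compute the probability of a wrong treatment assignment directly.

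First, I would rewrite $I_N$ as a sum of i.i.d.\ signals. For each observation $n$, set $s_n = y_n-\tfrac12$ if $t_n=1$ and $s_n = -(y_n-\tfrac12)$ if $t_n=0$. Then $I_N=\sum_{n=1}^N s_n$ and $s_n\in\{\pm\tfrac12\}$. Under fair-coin assignment,
\begin{align*}
\Pr\left(s_n=+\tfrac12\right)=\tfrac12\mu_1+\tfrac12(1-\mu_0)=\tfrac{1+\mu_1-\mu_0}{2}\equiv p .
\end{align*}
Hence $K\equiv\#\{n:s_n=+\tfrac12\}\sim\mathrm{Bin}(N,p)$ and $I_N=K-N/2$. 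Consequently the distribution of $\delta^*(z)$ depends on $\theta$ only through $\Delta=\mu_1-\mu_0$.

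Next, I would compute the regret. The oracle payoff is $\max\{\mu_0,\mu_1\}$, so $R(\delta^*,\theta)$ equals $d=|\Delta|$ times the expected probability mass that $\delta^*$ places on the inferior arm. If $\Delta=0$, both sides vanish. If $\Delta=d>0$, then
\begin{align*}
R = d\left[\Pr(K<N/2)+\tfrac12\Pr(K=N/2)\right]\quad\text{with } K\sim\mathrm{Bin}\left(N,\tfrac{1+d}{2}\right).
\end{align*}
If $\Delta=-d<0$, the regret is $d\left[\Pr(K>N/2)+\tfrac12\Pr(K=N/2)\right]$ with $p=\tfrac{1-d}{2}$. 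The relabeling $K\mapsto N-K$ maps this to the previous expression. So regret is a function $G_N$ of $d$ alone.

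The remaining step, which is the only real obstacle, is to match the closed form with the largest odd $M\le N$. When $N$ is odd, $M=N$ and there are no ties. The event $K<N/2$ is $\{K\le k\}$, and writing $\Pr(K\le k)=\sum_{j\le k}\binom{M}{j}p^j(1-p)^{M-j}$ with $p=\tfrac{1+d}{2}$ gives $G_N$ directly.

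When $N=2m$ is even, I would condition on the first $N-1$ signals. Write $K=K'+X$ with $K'\sim\mathrm{Bin}(2m-1,p)$ and $X\sim\mathrm{Bern}(p)$ independent. Expanding gives
\begin{align*}
\Pr(K<m)+\tfrac12\Pr(K=m)-\Pr(K'\le m-1)=\tfrac12\left[(1-p)\Pr(K'=m)-p\Pr(K'=m-1)\right].
\end{align*}
This bracket is zero because $\binom{2m-1}{m}=\binom{2m-1}{m-1}$, so both terms equal $\binom{2m-1}{m}p^m(1-p)^m$. Thus the even case reduces to $\Pr(\mathrm{Bin}(M,p)\le k)$ with $M=N-1$ and $k=m-1=\tfrac{M-1}{2}$. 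This yields the stated $G_N(d)$ and completes the argument.
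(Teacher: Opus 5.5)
Your proof is correct and follows the paper's argument essentially step for step. Your count $K$ is the paper's $A_N\sim\mathrm{Binomial}(N,\frac{1+\Delta}{2})$, and your even-case reduction via $K=K'+X$ with the cancellation $\binom{2m-1}{m}=\binom{2m-1}{m-1}$ is exactly the paper's. Your relabeling $K\mapsto N-K$ for $\Delta<0$ also spells out the ``by symmetry'' step that the paper leaves implicit.
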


\begin{proof}
Let $A_N = \sum_{j=1}^N \left( \1\{T_j=1,y_j=1\} + \1\{T_j=0,y_j=0\} \right)$, such that $I_N=A_N- \frac{N}{2}$ and $\delta^*$ assigns treatment $1$ iff $A_N>N/2$, assigns treatment $0$ iff $A_N<N/2$, and randomizes iff $A_N=N/2$. Fix a state $\theta = (\mu_0,\mu_1)$, and write $\Delta \equiv \mu_1-\mu_0.$  Under random assignment,  
\begin{align*}
P_{(\mu_0,\mu_1)}
\left(
\{T_j=1,y_j=1\}\cup\{T_j=0,y_j=0\}
\right) = \frac{1}{2}\mu_1+\frac{1}{2}(1-\mu_0) =
\frac{1+\Delta}{2},
\end{align*}
in which case $A_N \sim \mathrm{Binomial}(N, \frac{1+\Delta}{2})$. Denote the corresponding probability mass function  $P_\Delta$. The treatment  assigned by $\delta^*$ depends on $\Theta$ only through $\Delta$.

Now consider the expression for regret. If $\Delta>0$, treatment $1$ is optimal under oracle knowledge of the state. The regret from assigning treatment $0$ is $\Delta$, and the regret from randomizing equally is $\Delta/2$. The same result holds with $|\Delta|$ when $\Delta < 0$ by symmetry, and the regret equals 0 when $\Delta = 0$. Since $d \equiv |\Delta|$, we have 
\begin{align*}
R(\delta^*,(\mu_0,\mu_1))
=
d
\left[
P_{\Delta}\left(A_N<\frac{N}{2}\right)
+
\frac{1}{2}
P_{\Delta}\left(A_N=\frac{N}{2}\right)
\right].
\end{align*}

It remains to write this expression in terms of $G_N(\cdot)$. Let $p \equiv \frac{1+d}{2}$, and first consider the case where $N$ is odd, such that $M = N$. Define $k \equiv \frac{N-1}{2}$, such that  $\{A_N<N/2\} =  \{A_N\leq k \}$ and $\{A_N = \frac{N}{2}\} = \emptyset$. It follows that
\begin{align*}
R(\delta^*,(\mu_0,\mu_1)) =  
d
\sum_{j=0}^{k}
\binom{N}{j}
p^j(1-p)^{N-j} = G_N(d).
\end{align*}

Next, if $N$ is even, define $m \equiv \frac{N}{2}$, such that 
\begin{align*}
R(\delta^*,(\mu_0,\mu_1)) = d
\left[
P_\Delta(A_N<m)
+
\frac{1}{2}P_\Delta(A_N=m)
\right].
\end{align*}
Let $C_{N-1}\sim\mathrm{Binomial}(N-1,p)$, independent of a final Bernoulli draw $B \sim \Bern(p)$. We can then write
\begin{align*}
P_\Delta(A_N<m)
&=
P(C_{N-1}\leq m-2)
+
(1-p)P(C_{N-1}=m-1) \\ 
P_\Delta(A_N=m)
& =
pP(C_{N-1}=m-1)
+
(1-p)P(C_{N-1}=m).
\end{align*}
Since $N-1=2m-1$, we have 
\begin{align*}
    (1-p)P(C_{N-1}=m) &=  \binom{2m-1}{m} p^m (1-p)^m \\
    & = \binom{2m-1}{m-1} p^m (1-p)^m   = pP(C_{N-1}=m-1).
\end{align*}
Substituting yields $P(A_N<m)
+
\frac{1}{2}P(A_N=m)
=
P(C_{N-1}\leq m-1)$, and 
\begin{align*}
R(\delta^*,(\mu_0,\mu_1))
=
d
\sum_{j=0}^{m-1}
\binom{N-1}{j}
p^j(1-p)^{N-1-j}.
\end{align*}
This expression equals $G_N(d)$, with $M=N-1$ and $k=(M-1)/2=m-1$. 
\end{proof}

Recall that $\theta = (\mu_0, \mu_1) \in [0,1]^2$, in which case $|\mu_1 - \mu_0| \in [0,1]$ for all $\theta \in \Theta$. The following lemma shows that  $G_N(\cdot)$ has a unique maximum over the collection of possible $|\mu_1 - \mu_0|$.

\begin{lemma}
The function $G_N(\cdot)$ has a unique maximizer on $[0,1]$.
\end{lemma}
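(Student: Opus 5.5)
The plan is to write $G_N(d)=d\,F(d)$, where
\begin{align*}
F(d)\equiv \sum_{j=0}^{k}\binom{M}{j}p^j(1-p)^{M-j},\qquad p=\frac{1+d}{2},
\end{align*}
is $P(\mathrm{Binomial}(M,p)\le k)$. The aim is to show that $G_N'$ changes sign exactly once on $(0,1)$. Since $N\ge 1$, we have $M\ge 1$ and $0\le k\le M-1$.

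\textbf{Boundary values.} $G_N(0)=0$. At $d=1$ we have $p=1$, and $F(1)=0$ because $k<M$, so $G_N(1)=0$. On $(0,1)$ both $d>0$ and $F(d)>0$, so $G_N>0$ there. Hence any maximizer on $[0,1]$ is interior and satisfies $G_N'(d)=0$.

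\textbf{Derivative of $F$.} I will use the standard binomial--beta identity
\begin{align*}
\frac{d}{dp}P(\mathrm{Bin}(M,p)\le k)=-M\binom{M-1}{k}p^k(1-p)^{M-1-k}.
\end{align*}
Equivalently, as a function of $p$, $F$ is the survival function of a $\mathrm{Beta}(k+1,M-k)$ random variable. Since both shape parameters are at least $1$, this Beta density $p^k(1-p)^{M-1-k}$ is log-concave on $(0,1)$. A log-concave density has a log-concave survival function, so the hazard $h(p)=-\partial_pF/F$ is nondecreasing in $p$. Passing to $d$ via $p=(1+d)/2$, the function $r(d)\equiv -F'(d)/F(d)=\tfrac12 h\big((1+d)/2\big)$ is positive and nondecreasing on $(0,1)$.

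\textbf{Single crossing.} We have
\begin{align*}
G_N'(d)=F(d)\big(1-d\,r(d)\big).
\end{align*}
Because $d$ is strictly increasing and $r$ is positive and nondecreasing, $d\,r(d)$ is strictly increasing on $(0,1)$. It equals $0$ at $d=0^+$, and it tends to $+\infty$ as $d\to1$, since the Beta hazard blows up at the right endpoint: $F\to0$ while $-\partial_pF$ stays bounded away from zero when $k=M-1$, or else the ratio still diverges. So $1-d\,r(d)$ is positive and then negative, with exactly one zero $d_N\in(0,1)$. Therefore $G_N$ is strictly increasing on $(0,d_N]$ and strictly decreasing on $[d_N,1)$, which makes $d_N$ the unique maximizer. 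As a sanity check, when $M=1$ we get $G_N=d(1-d)/2$, maximized at $d=1/2$.

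\textbf{Main obstacle.} The delicate step is justifying that $r$ is monotone. I would do this either by citing log-concavity of Beta survival functions (Prékopa / Bagnoli--Bergstrom), or directly: write $F(p)=c\int_p^1 t^k(1-t)^{M-1-k}\,dt$, and note that log-concavity of the integrand on $(0,1)$ gives log-concavity of the upper-tail integral.
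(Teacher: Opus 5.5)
Your proof is correct, but it takes a different route from the paper's.

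\textbf{The paper's route.} It uses $M=2k+1$, so that $F$ is the upper tail of the symmetric $\mathrm{Beta}(k+1,k+1)$ law. After centering the Beta integral at $\tfrac12$, it reduces the problem to maximizing $H(r)=r\int_r^{1/2}(\tfrac14-u^2)^k\,du$ over $r=d/2\in[0,\tfrac12]$. It then computes $H''(r)=2(\tfrac14-r^2)^{k-1}\big((k+1)r^2-\tfrac14\big)$ explicitly and does a sign analysis. $H'$ first decreases and then increases; it is positive at $0$ and tends to $0$ from below at $\tfrac12$. Hence it crosses zero exactly once.

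\textbf{Your route.} You observe that $G_N'=F\,(1-d\,r(d))$, where $r$ is a rescaled Beta hazard, which is nondecreasing because the Beta density is log-concave. Equivalently, $G_N$ is the product of the strictly log-concave function $d$ and the log-concave survival function $F$. So $G_N$ is strictly log-concave on $(0,1)$ and hence strictly unimodal.

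\textbf{Trade-offs.} Your argument is shorter and more general. It never uses $M=2k+1$, so it covers any cutoff $0\le k\le M-1$. It does rely on the Pr\'ekopa/Bagnoli--Bergstrom fact that a log-concave density has a log-concave survival function. That fact is elementary in one dimension, but it is an outside input. The paper's proof is self-contained calculus tailored to the symmetric case.

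\textbf{A small simplification.} You do not need the somewhat hand-waved claim that $d\,r(d)\to\infty$ as $d\to1$. You have already shown that $G_N$ vanishes at both endpoints and is positive on $(0,1)$. So a maximizer exists in the interior and satisfies $d\,r(d)=1$. Strict monotonicity of $d\,r(d)$ then gives uniqueness on its own.
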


\begin{proof}
Let $p \equiv \frac{1+d}{2}$ and $r \equiv p- \frac{1}{2} = \frac{d}{2}$, such that
\begin{align*}
    G_N(d) & = 2r \sum_{j=0}^k \binom{2k+1}{j} p^j (1-p)^{2k+1 - j} \\ 
    & = 2r \cdot \frac{(2k+1)!}{k!k!}
\int_0^{1-p} t^k(1-t)^k  \, dt \\
& = 2r \cdot \frac{(2k+1)!}{k!k!}
\int_r^{1/2}
\left(\frac{1}{4}-u^2\right)^k  du \tag{$t = \frac{1}{2} + u$} \\
& \propto
\underbrace{r \int_r^{1/2} \left(\frac{1}{4}- u^2\right)^k du}_{\equiv H(r) },
\end{align*}
where the second step follows from the  fact that for $X \sim \text{Binomial}(n,p)$, $\pr(X \leq k) = (n-k) \binom{n}{k} \int_0^{1-p} t^{n-k-1} (1-t)^k \, dt$.  Hence, we see that  maximizing $G_N(d)$ over $d \in[0,1]$ is equivalent to maximizing $H(r)$ over $r\in[0,\frac{1}{2}]$. 

If $k=0$, then $H(r)=r( \frac{1}{2} -r),$  which has a unique maximizer, as desired. Now suppose $k\geq 1$. Define $g(r) \equiv  (\frac{1}{4}-r^2 )^k$ and  $ S(r) \equiv  \int_r^{1/2}g(u) \, du$, such that  $H(r)=rS(r)$ and  $H'(r)=S(r)-rg(r)$.  Differentiating once more, we obtain
\begin{align*}
H''(r) &  = -2g(r)-rg'(r) =  2 \left(\frac{1}{4}-r^2\right)^{k-1} \left((k+1)r^2-\frac{1}{4}\right). 
\end{align*}

Since $r \in [0,\frac{1}{2}]$, the  first term is positive, and the second term  determines the sign of $H''(r)$.  We then see that $H'$ is strictly decreasing on $(0,\frac{1}{2\sqrt{k+1}})$ and  strictly increasing on $(\frac{1}{2\sqrt{k+1}},\frac{1}{2})$. Moreover, $H'(0)>0$, $H(0)=H(\frac{1}{2}) = 0$, $H(r) > 0$ for $r \in (0,\frac{1}{2})$, and  $H'(r)<0$ for $r$ sufficiently close to $1/2$. Since $H'$ is strictly increasing near the right endpoint and converges to $0$ there from below, it remains negative on the second interval. In other words, $H'$ crosses zero exactly once, in which case we have that $H$  has a unique maximizer, as desired.
\end{proof}

Let $d_N$ denote the unique maximizer of $G_N$. We know from Lemma~\ref{lemma_difference} that $R(\delta^*,(\mu_0,\mu_1)) = G_N(|\mu_1-\mu_0|)$. Since  every least favorable prior $\pi^*$ places probability one on states that maximize the regret of $\delta^*$, it follows that  $|\mu_1-\mu_0|=d_N$ $\pi^*\text{-almost surely}$, as desired.\footnote{Recall that in the zero-sum game with Nature, any least favorable $\pi'$ that best-responds to the minimax rule $\delta'$ also best responds to any other minimax rule $\delta^*$.} \hfill $\blacksquare$  \\

We conclude by deriving  the maximizer $d_N$ in  the example in Section \ref{section_treatment_choice}.

\begin{example} \label{ex_difference_treatment}
    Suppose $N=1100$, such that  $M = 1099$, $k = 549$, and
    \begin{align*}
        G_{1100}(d) & = d \sum_{j=0}^{549} \binom{1099}{j} \left( \frac{1 + d}{2} \right)^j \left( \frac{1-d}{2} \right)^{1099- j}.
    \end{align*}
    Numerical optimization yields $d_{1100} \equiv  \argmax_{d \in [0,1]} G_{1100}(d) \approx 0.023$.  
\end{example}

\subsection{Proof of Proposition~\ref{prop_Yata_LFP}} \label{Appendix_prop_Yata}

Since the payoff from policy $0$ is normalized to zero and the payoff from policy $1$ is $\theta_3$, the oracle chooses policy $1$ when $\theta_3\geq 0$ and policy $0$ when $\theta_3<0$.

First consider states with $\theta_3\geq 0$. Since $ z_2\sim N(\theta_2,\sigma^2)$ and $\delta^*( z)=\1 \{ z_2\geq 0\}$,
\begin{align*}
R(\delta^*,\theta)
=
\theta_3
P_\theta( z_2<0)
=
\theta_3
\Phi\left(-\frac{\theta_2}{\sigma}\right).
\end{align*}
For fixed $\theta_2$, this expression is increasing in $\theta_3$. The constraint $|\theta_2-\theta_3|\leq C_2$ over $\Theta$ then implies that regret is maximized by setting $\theta_3=\theta_2+C_2$, conditional on $\theta_2$. Choosing the state $\theta$ that best responds to $\delta^*$ then reduces to maximizing
\begin{align*}
h(s)
=
(s+C_2)\Phi\left(-\frac{s}{\sigma}\right)
\end{align*}
over values $s=\theta_2$ for which $s+C_2\geq 0$. 

Observe that for  $s\in[-C_2,0]$,
\begin{align*}
h'(s) & =
\Phi\left(-\frac{s}{\sigma}\right)
-
\frac{s+C_2}{\sigma}
\phi\left(\frac{s}{\sigma}\right)  \geq
\frac{1}{2}
-
\frac{C_2}{\sigma}\phi(0),
\end{align*}
which is strictly positive under our assumption that $2\phi(0)C_2<\sigma$. The optimal $s$ is thus non-negative,  in which case the definition of $\epsilon^*$ allows us to conclude that  $\theta_2=\epsilon^*$ and $ 
\theta_3=\epsilon^* +C_2.$ Regret does not depend on $\theta_1$, so we see that the collection of regret-maximizing states is $\left\{  \theta \in \Theta : \theta_2=\epsilon^*, \,\,  \theta_3=\epsilon^* +C_2 \right\} = \Theta^+.$  

In the case where $\theta_3 < 0$, we have $R(\delta^*,\theta) = -\theta_3 P_\theta( z_2\geq 0) = -\theta_3 \Phi\left(\frac{\theta_2}{\sigma}\right),$  and a symmetric argument then reveals that the collection of regret-maximizing states is $\Theta^-$.  All together, it follows that any strategy $\pi^*$ (i.e., least favorable prior) chosen by Nature in the zero-sum game against the DM must be supported on $\Theta^+ \cup \Theta^-$, as desired. \hfill $\blacksquare$

\subsection{Proof of Theorem \ref{thm_rep_loss}}  \label{appendix_proof_loss}

Necessity of the axioms is routine; we only prove sufficiency here.

    By $\Theta$-MEU, the induced preference $\succsim_z$ over $\Theta$-measurable acts is represented by $\mathcal{L}_z(\delta ) = \min_{q_z \in \mathcal{Q}_z} \sum_{\theta \in \Theta} u(\delta(z), \theta) q_z(\theta)$ for unique closed and convex $\mathcal{Q}_z \subseteq \Delta(\Theta)$.  Let $\delta_z \in \R^\Theta$ denote the $\Theta$-measurable act induced by any $\delta$ at $z$. Observe that $\delta_z \sim_z \mathcal{L}_z(\delta_z)$, where $\mathcal{L}_z(\delta_z)$ is the constant act that maps to $\mathcal{L}_z(\delta)$. 

    Let $Z = \{z_1,...,z_{|Z|}\}$ without loss. By definition of the induced preference $\succsim_{z_1}$,  
    \begin{align*}
        \delta_{z_1} \sim_{z_1} \mathcal{L}_{z_1}(\delta_{z_1}) \hspace{10pt} \Rightarrow \hspace{10pt} \delta  \sim  ( \mathcal{L}_{z_1}(\delta_{z_1}) )_{ \{z_1\} } \delta.
    \end{align*}
    Likewise, taking $\phi = ( \mathcal{L}_{z_1}(\delta_{z_1}) )_{ \{z_1\} } \delta$ in the definition of $\succsim_{z_2}$,  we have
    \begin{align*}
        \delta_{z_2} \sim_{z_2} \mathcal{L}_{z_2}(\delta_{z_2}) \hspace{10pt} \Rightarrow \hspace{10pt}  ( \mathcal{L}_{z_1}(\delta_{z_1}) )_{ \{z_1\} } \delta  \sim  ( \mathcal{L}_{z_1}(\delta_{z_1}) )_{ \{z_1\} } (\mathcal{L}_{z_2}(\delta_{z_2}))_{ \{z_2\} }  \delta
    \end{align*}
    Repeating inductively, we can construct the $Z$-measurable act $\bar \delta$  that satisfies 
    \begin{align*}
          \delta \sim (\mathcal{L}_{z_1}(\delta_{z_1}))_{ \{z_1\} } \delta \sim  (\mathcal{L}_{z_1}(\delta_{z_1}))_{ \{z_1\} } (\mathcal{L}_{z_2}(\delta_{z_2}))_{ \{z_2\} }  \delta \sim \hdots \sim   \bar \delta.
    \end{align*}

    By $Z$-SEU, we know that $\succsim$ restricted to $Z$-measurable acts is represented by $\mathcal{L}(\delta) = \sum_{z \in Z} \delta(z,\theta) \mu(z)$ for some unique $\mu \in \Delta(Z)$. For any $\delta \in \mathcal{D}$, the above construction identifies a $Z$-measurable equivalent $\bar \delta$. Thus, $\succsim$ is represented by
    \begin{align*}
        U(\delta)  = \mathcal{L}(\bar \delta) & = \sum_{z \in Z} \mu(z) \bar \delta(z,\theta)  \\ 
        & = \sum_{z \in Z}  \mu(z)  \min_{q_z \in \mathcal{Q}_z} \sum_{\theta \in \Theta} u(\delta(z), \theta) q_z(\theta)  = \E_\mu\left[ \min_{q \in \mathcal{Q}_z } \E_q\left[ u(\delta(z), \theta) \right]\right] ,
    \end{align*}
    as desired.  \hfill $\blacksquare$

\subsection{Proof of Theorem~\ref{thm_rep_regret}} \label{appendix_proof_regret} 

Necessity of the axioms is routine; we only prove sufficiency here.

   Suppose that $C(\cdot)$ satisfies Axioms 4-6.  For any finite menu $M$ and $\delta\in M$, define the regret act
\begin{align*}
     r_\delta^M(z,\theta)
    \equiv
    \max_{\beta\in M}\beta(z,\theta)-\delta(z,\theta).
\end{align*}
Let $\mathbb R^{Z\times\Theta}_+$ denote the set of non-negative regret acts over $Z\times\Theta$.

We use the following two results from \cite{Stoye_2011_JET}, restated using our notation. Because IIA on $Z$-measurable acts imply IIA over constant acts, the premise of both results are satisfied under Axioms~\ref{axiom_stoye}-\ref{axiom_sep_choice}.

\begin{lemma}[Lemma 2 of \cite{Stoye_2011_JET}]
    If  $C(\cdot)$ satisfies non-triviality, monotonicity, independence, INA, mixture continuity, and IIA for constant acts, then there exists a functional $I:\mathbb R^{Z \times \Theta}_+\to \mathbb R_+$ 
such that 
\begin{align*}
     C(M)
    =
    \arg\min_{\delta\in M} I(r_\delta^M).
\end{align*}
\end{lemma}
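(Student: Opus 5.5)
The plan is to follow the structure of Stoye's argument. First we reduce every menu to a menu of nonpositive acts whose pointwise maximum is zero. Then we calibrate each regret act against constant acts to define the index $I$.

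\emph{Step 1 (normalization).} Fix a finite menu $M$ and let $m(z,\theta)\equiv\max_{\beta\in M}\beta(z,\theta)$. Since utility acts range over all of $\mathbb R^{Z\times\Theta}$, we can apply Independence with the single act $g=-m$ and $\lambda=\tfrac12$. This gives $C(\tfrac12 M+\tfrac12 g)=\tfrac12 C(M)+\tfrac12 g$, and every element of the mixed menu has the form $\tfrac12(\delta-m)=-\tfrac12 r^M_\delta$. Independence applied once more, scaling by a constant act, removes the factor $\tfrac12$. Hence the set of chosen indices of $M$ is determined by the finite set of regret acts $\{r^M_\delta\}_{\delta\in M}$. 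It therefore suffices to construct $I$ on menus of the form $-R=\{-r:r\in R\}$, where $R\subset\mathbb R^{Z\times\Theta}_+$ and $\min_{r\in R}r(z,\theta)=0$ for every $(z,\theta)$.

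\emph{Step 2 (reduction to pairwise comparisons).} A normalized menu $-R$ need not contain a single act attaining the pointwise maximum $0$. We therefore use INA to insert auxiliary ex post optimal acts, for instance the constant act $0$ or small perturbations of it, into auxiliary menus. These acts change which elements are ex post best but not the regrets of the original acts. INA guarantees that $C(-R\cup N)\cap(-R)\in\{C(-R),\emptyset\}$ whenever the original acts are never strictly dominated ex post. Combining this with Monotonicity, which says an act with pointwise zero regret relative to a chosen act is itself chosen, we show that whether $-r$ is chosen depends only on how $r$ compares with the other regret acts in the menu. Concretely, $-r\in C(-R)$ if and only if $r$ is not beaten by any $r'\in R$ in a fixed binary relation $\succsim^*$ on $\mathbb R^{Z\times\Theta}_+$.

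\emph{Step 3 (calibration by constants).} For $r\in\mathbb R^{Z\times\Theta}_+$, define $I(r)$ as the unique $c\ge0$ at which $-r$ and the constant act $-c$ are both chosen from a normalized menu containing them. Existence of $c$ follows from Mixture Continuity applied along the segment between $-r$ and the constants $0$ and $-\max r$; uniqueness follows from Monotonicity together with IIA for constant acts, which ranks constants by their value. Nonnegativity is immediate. We then verify that $\succsim^*$ agrees with the order induced by $I$, so that $C(M)=\arg\min_{\delta\in M}I(r^M_\delta)$.

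The main obstacle is showing that this calibration is menu-independent, so that $\succsim^*$ is complete and transitive and $I$ does not depend on which auxiliary menu was used. Regret choice violates IIA in general, so we cannot freely move acts between menus. The approach I would try first is to place all competing acts into one common normalized menu via INA, then use Independence to rescale and IIA for constant acts to chain the comparisons with $-c$ transitively across menus. This is exactly where Stoye's Lemma 2 does its work, and I would follow his construction at this step.
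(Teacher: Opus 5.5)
The paper gives no proof of this statement. It imports it from \cite{Stoye_2011_JET} and restates it in its own notation, so ending with ``follow Stoye's construction'' is, in substance, the paper's route too. Read as a proof, however, your proposal has a genuine gap exactly where you locate it.

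Only Step 1 is actually established. Independence with $g=-m$ and $\lambda=\tfrac12$, followed by Independence with the constant act $0$, gives $C(M)=m+C(-R^M)$ with $R^M=\{r^M_\delta\}_{\delta\in M}$, so choice depends only on the set of regret acts. Everything else is asserted rather than proved. This includes the claim that on normalized menus $-r\in C(-R)$ iff no $r'\in R$ strictly beats $r$ under one fixed, complete and transitive relation on $\mathbb R^{Z\times\Theta}_+$. It also includes the claim that your calibrated $c$ does not depend on the menu used. Step~2 asserts these (``we show'') without argument, and your last paragraph hands them back to Stoye's Lemma~2, which is the very statement being proved. Step~3 inherits the problem: IIA for constant acts only constrains menus made entirely of constant acts. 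Your calibrating menus contain $-r$ and auxiliary acts, so IIA for constants cannot deliver uniqueness of $c$ across menus.

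The concrete scaffolding you propose for Step~2 would also fail. Inserting the constant act $0$, or a small perturbation of it, into a normalized menu $-R$ keeps the frontier at $0$. But $0$ weakly dominates every element of $-R$, so by Monotonicity it is chosen whenever anything is. As a zero-regret act it then crowds out every $-r$ that is not itself as good as $0$. INA returns the $\emptyset$ branch, and nothing is learned about $-r$ versus $-r'$. Small perturbations of $0$ have small regret and behave the same way.

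What you need are frontier-completing acts that are themselves unattractive, together with a proof that they are never selected. An example is the acts $e^s_K$ (equal to $0$ at $s$ and $-K$ elsewhere) used in the proof of Theorem~\ref{thm_rep_regret}. The paper obtains ``never selected'' from the already available representation: every $\gamma\in\Gamma$ has $Z$-marginal $\mu$ with $\mu(z)<1$. Before any representation exists, this needs its own argument, and it is not automatic. Take expected-utility choice under a point-mass prior at some $s_0$, so that $I(r)=r(s_0)$ represents $C$. Then any act equal to $0$ at $s_0$ is optimal in a normalized menu. Hence no normalized menu ever selects $-r$ with $r(s_0)>0$, or a constant $-c$ with $c>0$, and the calibrating $c$ of your Step~3 does not exist.

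The natural way to close the gap starts from genuinely unchosen frontier-completing sets $E$. Restricted to menus of the form $N\cup E$ with $N$ nonpositive, INA becomes Arrow's choice axiom, which yields a weak order on each such family. You would still need to show that this order does not depend on $E$. You would also need to derive the continuity required for constant-act calibration from Mixture Continuity, and to treat the degenerate case above separately.
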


\begin{theorem}[Theorem 4 of \cite{Stoye_2011_JET}]
     If  $C(\cdot)$ satisfies Axiom~\ref{axiom_stoye} and IIA for constant acts, then there exists a unique compact
convex $\Gamma\subseteq \Delta(Z \times \Theta)$ such that
\begin{align}
     I(r)
    =
    \max_{\gamma\in\Gamma}\sum_{(z,\theta) \in Z \times \Theta}r(z,\theta)\gamma(z,\theta), \label{eq_stoye_I}
\end{align}
and therefore
\begin{align}
    C(M)
    =
    \arg\min_{\delta\in M}
    \max_{\gamma\in\Gamma}
    \sum_{(z,\theta) \in Z \times \Theta}
    \left(
        \max_{\beta\in M}\beta(z,\theta)-\delta(z,\theta)
    \right)\gamma(z,\theta). \label{eq_stoye_rep}
\end{align}
\end{theorem}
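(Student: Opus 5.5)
The plan is to follow the Gilboa--Schmeidler route: establish the functional properties of the index $I$ delivered by Lemma 2 of \cite{Stoye_2011_JET}, and then invoke a support-function (Hahn--Banach) argument to obtain the max-over-priors form in \eqref{eq_stoye_I}; \eqref{eq_stoye_rep} is then immediate by substitution. The device throughout is to work with menus $M$ that have state-independent outcome distributions, for instance menus containing a constant act $p$ that weakly dominates every act in $M$. For such menus $\max_{\beta\in M}\beta(z,\theta)$ is a constant, so every regret act $r^M_\delta$ is a constant minus $\delta$. Axioms phrased in terms of acts in $M$ can then be translated into statements about $I$ evaluated on arbitrary nonnegative regret acts.

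The steps run as follows.
\begin{enumerate}
\item \emph{Normalization and monotonicity.} Use IIA on constant acts, together with the rescaling freedom left by Lemma 2, to normalize $I$ so that $I(c)=c$ for constant regret acts $c\ge 0$. Monotonicity of $C$ (pointwise dominance at every $(z,\theta)$) then gives $r\le r' \Rightarrow I(r)\le I(r')$.
\item \emph{Positive homogeneity.} Independence, applied with $\lambda M+(1-\lambda)\delta$ for a constant $\delta$, scales all regret acts by $\lambda$ and preserves the choice. Combined with mixture continuity, this lets one choose $I$ so that $I(\lambda r)=\lambda I(r)$.
\item \emph{Constant additivity.} C-betweenness, applied to menus $\{p,\delta,\lambda\delta+(1-\lambda)p\}$ with state-independent outcome distributions, yields $I(r+c)=I(r)+c$. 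The idea is that mixing with a constant shifts regret by a constant, and betweenness pins down the indifference level.
\item \emph{Convexity.} Ambiguity aversion says that if $\delta,\beta\in C(M)$ then their mixture is also chosen. In a dominated menu the regret of the mixture is the mixture of the regrets, so $I(\lambda r+(1-\lambda)r')\le \max\{I(r),I(r')\}$. Together with homogeneity and constant additivity, this upgrades to $I(\lambda r+(1-\lambda)r')\le\lambda I(r)+(1-\lambda)I(r')$, by the standard GS trick of first shifting $r'$ by a constant so that $I(r)=I(r')$.
\end{enumerate}

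With these properties in hand, $I$ is a monotone, sublinear, constant-additive functional. Extend it to all of $\mathbb R^{Z\times\Theta}$ by translation, $I(r)=I(r+c)-c$. Then define
\[
\Gamma=\{\gamma : \gamma\cdot r\le I(r)\ \text{for all } r\}.
\]
Hahn--Banach, applied to the sublinear $I$ on a finite-dimensional space, gives $I(r)=\max_{\gamma\in\Gamma}\gamma\cdot r$. Monotonicity forces each $\gamma\ge 0$, and $I(\pm 1)=\pm1$ forces $\sum\gamma=1$, so $\Gamma\subseteq\Delta(Z\times\Theta)$. The set $\Gamma$ is compact and convex because it is a closed, bounded intersection of half-spaces. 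Uniqueness follows because a compact convex set is determined by its support function; mixture continuity and non-triviality guarantee that $I$, and hence $\Gamma$, is pinned down by $C$ and not merely up to some degenerate case.

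I expect the main obstacle to be steps 3--4: verifying that the menu-level axioms really produce constant additivity and convexity of $I$ on \emph{arbitrary} nonnegative regret acts. Regret is menu-dependent, so one must exhibit, for each target pair of regret acts, a menu realizing exactly those regrets. I would try first the menus $\{\bar p,\ \bar p-r,\ \bar p-r'\}$, with $\bar p$ a constant exceeding all payoffs, using INA to add the dominating constant without disturbing choices among the other acts.
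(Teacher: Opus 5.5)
\paragraph{Verdict: genuine gap.} The paper gives no proof here; it imports the result from \cite{Stoye_2011_JET}. Your Gilboa--Schmeidler outline is a reasonable reconstruction, and the closing support-function step is fine: translation extension, sublinearity, $\gamma\ge 0$ from monotonicity, and $\sum\gamma=1$ from $I(\pm\mathbf 1)=\pm 1$. The gap is in the step you flag as the main obstacle, and steps 1--4 all depend on it. Every axiom is a statement about $C(M)$. To turn one into a statement about $I(r)$ versus $I(r')$, you need a menu in which both regret acts are realized and every other element is strictly worse than both.

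\paragraph{Why the proposed menus fail.} The menus $\{\bar p,\bar p-r,\bar p-r'\}$ cannot serve this purpose.
\begin{itemize}
\item The dominating constant has regret act $\mathbf 0$. By Monotonicity, $\bar p\in C(M)$ always. So $C(M)$ only reveals whether $I(r)=I(\mathbf 0)$; it never ranks $I(r)$ against $I(r')$.
\item For the same reason, Ambiguity Aversion can never be applied to $\bar p-r$ and $\bar p-r'$ unless both already have $I=I(\mathbf 0)$.
\item INA cannot remove $\bar p$. Its premise needs some element of $\{\bar p-r,\bar p-r'\}$ to be optimal at every $(z,\theta)$ of the enlarged menu, which fails unless $\min\{r,r'\}=0$ pointwise. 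Even when the premise holds, the conclusion allows $C(M\cup N)\cap M=\emptyset$, which is exactly what happens here.
\item A dominating constant gives a constant frontier, but not state-independent outcome distributions: that condition requires the whole set of attained utilities to coincide across states. So C-betweenness cannot even be applied to these menus, contrary to your opening paragraph.
\end{itemize}

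\paragraph{The missing idea.} You need to attain the frontier with acts that are themselves never chosen. The normalizing acts $e^s_K$ in the paper's proof of the lemma inside Theorem~\ref{thm_rep_regret} do this. With frontier $0$, the menu $\{-r,-r'\}\cup\{e^s_K\}_s$ realizes exactly the regrets $r$ and $r'$, while each $e^s_K$ has regret $K\mathbf 1_{\neq s}$. Two further things are then required.
\begin{itemize}
\item You must show from the axioms that $I(K\mathbf 1_{\neq s})$ eventually exceeds $\max\{I(r),I(r')\}$. This is not automatic: it fails if $C$ simply maximizes utility at a single state $s$, and that case needs separate treatment.
\item To use C-betweenness, you must pad the menu with further never-chosen acts so that every utility value appears in every state. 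Acts equal to a given value at one state and $-K$ elsewhere would do.
\end{itemize}

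\paragraph{Two smaller points, once that realization lemma is in place.}
\begin{itemize}
\item C-betweenness alone only gives $I(\lambda r+(1-\lambda)c)\ge\min\{I(r),c\}$. The reverse inequality at $c=I(r)$ comes from Ambiguity Aversion. Homogeneity then turns $I(\lambda r+(1-\lambda)I(r))=I(r)$ into $I(r+k)=I(r)+k$ (for $I(r)>0$, $k\ge 0$). So your step 3 must come after the quasi-convexity half of step 4.
\item The same lemma is what actually justifies your uniqueness claim. $\Gamma$ is pinned down because each certainty equivalent $r\sim c$ is revealed in such a menu, not by Mixture Continuity and Non-triviality as such.
\end{itemize}
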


Define the induced preference $\preceq_R$ over regret acts by taking $r\preceq_R r'$ if $I(r)\le I(r')$ and
$r\sim_R r'$ if $I(r)=I(r')$. Let  $\Gamma_Z
    \equiv
    \left \{\gamma_Z(\cdot) = \sum_{\theta} \gamma(\cdot,\theta) :\gamma\in\Gamma \right\}
    \subseteq \Delta(Z)$  denote the collection of $Z$-marginals induced by $\Gamma$. 

First, consider the restriction of $C$ to menus of $Z$-measurable utility acts. For such menus, write $\delta(z,\theta)=\widehat\delta(z)$. The
representation in \eqref{eq_stoye_rep} then reduces to
\begin{align}
    C(M)
    =
    \arg\min_{\delta\in M}
    \max_{\eta\in\Gamma_Z}
    \sum_{z\in Z}
    \left(
        \max_{\beta\in M}\widehat\beta(z)
        -
        \widehat\delta(z)
    \right) \eta(z).
    \label{eq_rep_Z}
\end{align}
By Axiom 5, this restricted choice correspondence satisfies IIA on $Z$-measurable acts. Together with the monotonicity, independence, and mixture continuity conditions inherited from Axiom 4, the usual finite Anscombe-Aumann argument recovers a unique SEU measure $\mu\in\Delta(Z)$, such that
choice over  $Z$-measurable acts is represented by $\widehat\delta \mapsto
    \sum_{z\in Z}\hat \delta(z) \mu(z)$.  We can thus rewrite \eqref{eq_rep_Z} as 
\begin{align*}
    C(M)
    =
    \arg\min_{\delta\in M}
    \sum_{z\in Z}
    \left(
        \max_{\beta\in M}\widehat\beta(z)
        -
        \widehat\delta(z)
    \right)\mu(z). 
\end{align*}

By uniqueness of \cites{Stoye_2011_JET} endogenous prior minimax regret representation applied to the restricted $Z$-measurable problem, it follows that $  \Gamma_Z=\{\mu\}.$  That is, every $\gamma\in\Gamma$ has the $Z$-marginal $\mu$, such that
\begin{align}
    I(r)
    =
    \sum_{z\in Z}\mu(z)\widehat r(z). \label{eq_our_I}
\end{align}
for every non-negative $Z$-measurable regret act $r(z,\theta)=\widehat r(z)$.

For each $z$ with $\mu(z)>0$, define
\begin{align*}
    \mathcal{Q}_z
    \equiv
    \left\{
        q_z\in\Delta(\Theta):
        \exists \gamma\in\Gamma
        \text{ such that }
        q_z(\theta)=\frac{\gamma(z,\theta)}{\mu(z)}
        \text{ for every }\theta\in\Theta
    \right\}. 
\end{align*}
Since $\Gamma$ is compact and convex and the map
$\gamma\mapsto \gamma(\cdot\mid z)$ is continuous and affine over the set
of priors with $Z$-marginal $\mu$, each $\mathcal{Q}_z$ is compact and
convex.\footnote{If $\mu(z)=0$, the choice of $\mathcal{Q}_z$ is irrelevant for the
ex ante representation. When discussing uniqueness properties of the representation, we assume the $\mu$ has full support; this can  easily be motivated axiomatically.}

For any non-negative $\Theta$-measurable regret act $a:\Theta\to\mathbb R_+$, define
\begin{align*}
    v_z(a)
    \equiv
    \max_{q_z\in \mathcal{Q}_z}
    \sum_{\theta\in\Theta} a(\theta)q_z(\theta). 
\end{align*}
Let $\mathbf 0$ denote the zero regret act on $Z\times\Theta$. The definition of $\mathcal{Q}_z$ and \eqref{eq_stoye_I} implies
\begin{align*}
    I(a_{\{z\}}\mathbf 0) & = \max_{\gamma \in \Gamma} \sum_{\theta \in \Theta} a(\theta) \gamma(z,\theta)  =\max_{q_z \in \mathcal{Q}_z} \sum_{\theta \in \Theta} a(\theta) \mu(z) q_z(\theta)  =  \mu(z)v_z(a). 
\end{align*}
Since we also have $ I(v_z(a)_{\{z\}}\mathbf 0)  =  \mu(z)v_z(a),$ it follows that 
\begin{align}
    a_{\{z\}}\mathbf 0
    \sim_R
    v_z(a)_{\{z\}}\mathbf 0. \label{eq_indiff}
\end{align}

Next, we show that $Z$-separable choice (Axiom~\ref{axiom_sep_choice}) implies that  $\preceq_R$ satisfies the $Z$-separability axiom from the setting of minimax loss (Axiom~\ref{axiom_z_sep_loss}). 

\begin{lemma}
The induced regret relation $\preceq_R$ satisfies
\begin{align*}
    a_{\{z\}}h \preceq_R b_{\{z\}}h
    \quad\Longleftrightarrow\quad
    a_{\{z\}}k \preceq_R b_{\{z\}}k 
\end{align*}
for every $z \in Z$,  $a,b : \theta \to \R_+$, and $h,k : Z \times \Theta \to \R_+$.
\end{lemma}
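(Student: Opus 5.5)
We will translate comparisons under $\preceq_R$ into choice behavior by building menus whose regret acts are exactly the spliced regret acts in question, and then apply $Z$-Separable Choice (Axiom~\ref{axiom_sep_choice}).

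\textbf{Step 1: build a menu that realizes given regret acts.} Fix $z$, nonnegative $a,b:\Theta\to\R_+$ and nonnegative $h:Z\times\Theta\to\R_+$. Choose a sufficiently large constant $K$ and define utility acts
\begin{align*}
\delta \equiv (K-a)_{\{z\}}(K-h), \qquad \beta \equiv (K-b)_{\{z\}}(K-h), \qquad \omega \equiv K.
\end{align*}
The constant act $\omega$ weakly dominates both pointwise. In the menu $M_h=\{\delta,\beta,\omega\}$ we therefore get $r^{M_h}_\delta=a_{\{z\}}h$, $r^{M_h}_\beta=b_{\{z\}}h$, and $r^{M_h}_\omega=\mathbf 0$.

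Since $I(\mathbf 0)=0\le I(\cdot)$, the act $\omega$ is always chosen. Hence the ranking of $a_{\{z\}}h$ versus $b_{\{z\}}h$ under $\preceq_R$ is not directly read off from $C(M_h)$. This is the main obstacle, and the fix is to perturb $\omega$.

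\textbf{Step 2: use a calibrating constant instead of a dominating act.} Replace $\omega$ by a constant act $c$ with $c\le K$, so that the maximal utility in the menu is still $K$ wherever $\delta$ or $\beta$ attains it. When $a$ and $b$ are not both positive this needs care. Mixing the menu with a constant via Independence preserves regret comparisons up to scaling, so we may rescale $a,b,h$ to be small relative to $K-c$.

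A cleaner route avoids $c$ altogether and works with the two-element menu $\{\delta,\beta\}$:
\[
r_\delta=\max(a,b)_{\{z\}}h'-\ldots
\]
This mixes the two components, so I would instead rely on Stoye's representation~\eqref{eq_stoye_rep} directly. That representation gives $C(M)$ as the minimizers of $I$ applied to the regret acts, with $I$ sublinear and monotone. So I encode $a_{\{z\}}h\preceq_R b_{\{z\}}h$ as the statement ``$\delta\in C(M)$ whenever $\beta\in C(M)$'' for the menu $M=\{\delta,\beta,\omega\}$, after adjusting $\omega$'s value outside $z$.

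The adjustment works as follows. Set $\omega=K$ off $z$, and at $z$ set $\omega$ equal to $K-\max(a,b)$. Then both $\delta$ and $\beta$ are positioned at their off-$z$ regret $h$. They are also lifted by $\delta_{\{z\}}\phi$-type splicing so that the premise of Axiom~\ref{axiom_sep_choice} holds: both spliced acts lie in $C(M)$ when we include enough copies. I would tune $\omega$ by mixture continuity so that $\delta_{\{z\}}\phi,\delta_{\{z\}}\phi'\in C(M)$, meaning the reference act sits exactly at indifference.

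\textbf{Step 3: transfer across the tails.} With the menus $M_h$ and $M_k$ built identically at $z$, the statement $\delta_{\{z\}}h\in C(M_h\cup\{\beta_{\{z\}}h\})$ is equivalent to $I(a_{\{z\}}h)\le I(b_{\{z\}}h)$. Axiom~\ref{axiom_sep_choice} turns this into the same statement with $k$, which is the claim. Swapping the roles of $a$ and $b$ gives the reverse direction.

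The delicate part throughout is ensuring that the menu maxima, and hence the regret acts, are exactly the spliced ones and are unaffected by the calibration act. I would handle this by making all calibration acts constant equal to the menu maximum off $z$.
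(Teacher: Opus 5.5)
Your proposal has a genuine gap where you invoke $Z$-Separable Choice (Axiom~\ref{axiom_sep_choice}).

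\textbf{The missing idea: equalizing the tails.} The axiom compares $C(M\cup\{\beta_{\{z\}}\phi\})$ and $C(M\cup\{\beta_{\{z\}}\phi'\})$ for a \emph{single} base menu $M$ in which \emph{both} $\delta_{\{z\}}\phi$ and $\delta_{\{z\}}\phi'$ are chosen. Your Step 3 uses two separate menus $M_h$ and $M_k$, which the axiom does not link. To put an act with regret $a_{\{z\}}h$ and one with regret $a_{\{z\}}k$ in the same menu and have both chosen, you need $I(a_{\{z\}}h)=I(a_{\{z\}}k)$, and this is false in general. Tuning a shared calibration act $\omega$ cannot deliver this for the given tails $h$ and $k$: it moves the menu frontier, and hence both regret acts, by the same function.

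What you need is the fact, established just before the lemma from IIA on $Z$-measurable acts, that every $\gamma\in\Gamma$ has $Z$-marginal $\mu$. Let $\mathbf 1_{-z}$ be $0$ on $\{z\}\times\Theta$ and $1$ elsewhere. The marginal restriction gives the translation property $I(r+\lambda\mathbf 1_{-z})=I(r)+\lambda(1-\mu(z))$. You can then replace $h,k$ by $\bar h=h+\lambda_h\mathbf 1_{-z}$ and $\bar k=k+\lambda_k\mathbf 1_{-z}$, with $\lambda_h,\lambda_k\ge 0$ chosen so that $I(a_{\{z\}}\bar h)=I(a_{\{z\}}\bar k)$. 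Neither comparison in the statement changes, since both sides of each comparison shift by the same constant. Your argument never uses the restriction on the $Z$-marginals of $\Gamma$, and without it the statement does not reduce to the axiom.

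\textbf{The menu construction.} Your explicit choice of $\omega$, equal to $K-\max(a,b)$ at $z$ and $K$ elsewhere, makes the frontier at $z$ equal to $K-\min(a,b)$. The regret acts of $\delta$ and $\beta$ are then $(a-\min(a,b))_{\{z\}}h$ and $(b-\min(a,b))_{\{z\}}h$. Because $I$ is a maximum of linear functionals, subtracting a common $\theta$-dependent function at $z$ can reverse the ranking, so this menu does not encode $a_{\{z\}}h\preceq_R b_{\{z\}}h$.

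You correctly note that a single act attaining the frontier everywhere has zero regret and is always chosen. The remedy is to pin the frontier with several acts, each attaining it on only part of $Z\times\Theta$ and very low elsewhere. The paper uses, for each $s\in Z\times\Theta$, the act $e^s_K$ equal to $0$ at $s$ and $-K$ elsewhere. Its regret has $I$-value at least $K(1-\mu(z_s))$, so for large $K$ none of these acts is chosen.

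Set $f_h=-a_{\{z\}}\bar h$, $f_k=-a_{\{z\}}\bar k$ and $M_K=\{f_h,f_k\}\cup\{e^s_K\}_s$. Then both $f_h,f_k\in C(M_K)$. Adding $g_h=-b_{\{z\}}\bar h\le 0$ leaves the frontier at $0$, so $f_h\in C(M_K\cup\{g_h\})$ if and only if $I(a_{\{z\}}\bar h)\le I(b_{\{z\}}\bar h)$, and likewise for $k$ with $g_k=-b_{\{z\}}\bar k$. Applying the axiom to this one menu and then undoing the shifts via the translation property yields the lemma.
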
 
\begin{proof}
Consider the regret act defined by
\begin{align*}
    \mathbf 1_{-z}(z',\theta)
    \equiv
    \begin{cases}
        0, & z'=z,\\
        1, & z'\ne z.
    \end{cases}
\end{align*}
Since every $\gamma\in\Gamma$ has $Z$-marginal $\mu$, we have 
\begin{align}
    I(r+\lambda \mathbf 1_{-z})
    &=
    \max_{\gamma\in\Gamma}
    \sum_{(z',\theta)\in Z\times\Theta}
    \left(r(z',\theta)+\lambda\mathbf 1_{-z}(z',\theta)\right)
    \gamma(z',\theta) \nonumber \\
    &=
    I(r)+\lambda(1-\mu(z)) \label{eq_lambda}
\end{align}
for every $r\in\mathbb R^{Z\times\Theta}_+$ and $\lambda\geq 0$. 


Choose $\lambda_h,\lambda_k\geq 0$, such that $ I(a_{\{z\}}(h+\lambda_h\mathbf 1_{-z}))
    =
    I(a_{\{z\}}(k+\lambda_k\mathbf 1_{-z}))$.\footnote{ If $ I(a_{\{z\}}h)\leq I(a_{\{z\}}k)$, this can be done by defining $\lambda_h
    =
    \frac{I(a_{\{z\}}k)-I(a_{\{z\}}h)}{1-\mu(z)}$, $
    \lambda_k=0$,  and likewise if the inequality holds in the opposite direction.}  Define $ \bar h\equiv h+\lambda_h\mathbf 1_{-z},$ $
    \bar k\equiv k+\lambda_k\mathbf 1_{-z},$  
and set $m \equiv
    I(a_{\{z\}}\bar h)
    =
    I(a_{\{z\}}\bar k).$  Note that $a_{ \{z\} } h + \lambda_h \mathbf{1}_{-z} = a_{ \{z\} } (h + \lambda_h \mathbf{1}_{-z}) = a_{ \{z\} } \bar h$, in which case we have
\begin{align}
    a_{\{z\}}h \preceq_R b_{\{z\}}h
   & \spiff I(a_{\{z\}}  h) \leq I(b_{\{z\}} h )  \nonumber \\
   & \spiff I(a_{\{z\}} h + \lambda_h \mathbf{1}_{-z} ) \leq I(b_{\{z\}} h+ \lambda_h \mathbf{1}_{-z} ) \tag{by \eqref{eq_lambda}}   \\
   & \spiff
    a_{\{z\}}\bar h \preceq_R b_{\{z\}}\bar h.
  \label{eq_bar_equiv1} 
\end{align}
The same argument with $h$ replaced with $k$ yields
\begin{align}
    a_{\{z\}}k \preceq_R b_{\{z\}}k
    & \spiff 
    a_{\{z\}}\bar k \preceq_R b_{\{z\}}\bar k. \label{eq_bar_equiv2}
\end{align}

We now construct a menu on which Axiom~\ref{axiom_sep_choice} applies. Define
\begin{align*}
    f_h \equiv -a_{\{z\}}\bar h,
    & \qquad
    f_k \equiv -a_{\{z\}}\bar k, \\ 
    g_h \equiv -b_{\{z\}}\bar h,
    & \qquad
    g_k \equiv -b_{\{z\}}\bar k.
\end{align*}
For each state  $s=(z',\theta)\in Z\times\Theta$, let $e^s_K$ be the utility act
\begin{align*}
    e^s_K(t)
    =
    \begin{cases}
        0, & t=s,\\
        -K, & t\ne s. 
    \end{cases}
\end{align*}
Consider the menu $ M_K
    \equiv
    \{f_h,f_k\}\cup \{e^s_K:s\in Z\times\Theta\}$, such that the statewise maximum utility in $M_K$ is zero. Hence, the regret act corresponding to $f_h$ and $f_k$ are $r^{M_K}_{f_h} = a_{\{z\}}\bar h$ and $r^{M_K}_{f_k} = a_{\{z\}}\bar k$, respectively,  such that
    \begin{align}
        I(r^{M_K}_{f_h}) = I(r^{M_K}_{f_k}) = m.  \label{eq_regret_eq}
    \end{align}

    For the normalizing act $e^s_K$, its regret is zero at $s$ and $K$ at all other states. If $s=(z_s,\theta_s)$, then every $\gamma\in\Gamma$
assigns probability $1-\mu(z_s)$ to states with data realization different  from $z_s$. Thus, we have that
    \begin{align*}
        I(r_{e^s_K}^{M_K} ) = \max_{\gamma \in \Gamma} \sum_{(z,\theta) \neq s}  K \gamma(z,\theta) \geq K(1-\mu(z_s)).
    \end{align*}
Choose $K$ large enough such that 
\begin{align*}
   I(r_{e^s_K}^{M_K} )  \geq K\min_{z'\in Z}(1-\mu(z'))>m
\end{align*}
for all $s$. In other words, the normalizing act $e_K^s$ is  never chosen from the menu $M_K$, and \eqref{eq_regret_eq} implies $ f_h,f_k\in C(M_K)$.   

Now add $g_h$ to $M_K$. Since $g_h\le 0$ statewise, the statewise utility frontier remains zero. The regret act corresponding to $g_h$ in $M_K\cup\{g_h\}$ is $b_{\{z\}}\bar h$, while the regrets of all acts already in $M_K$ are unchanged. Therefore,
\begin{align*}
    f_h\in C(M_K\cup\{g_h\})
    & \spiff I( r_{f_h}^{M_k \cup \{g_h\}}) \leq I( r_{g_h}^{M_k \cup \{g_h\}})  \\
    & \spiff  I(a_{\{z\}}\bar h)\le I(b_{\{z\}}\bar h), 
\end{align*}
and the same argument for $f_k, g_k$ yields 
\begin{align*}
    f_k\in C(M_K\cup\{g_k\})
   \spiff 
    I(a_{\{z\}}\bar k)\le I(b_{\{z\}}\bar k). 
\end{align*}

Because $f_h,f_k\in C(M_K)$, Axiom~\ref{axiom_sep_choice}  then implies
\begin{align*}
    f_h\in C(M_K\cup\{g_h\})
    \spiff
    f_k\in C(M_K\cup\{g_k\}),
\end{align*}
which is equivalent to  $I(a_{\{z\}}\bar h)\le I(b_{\{z\}}\bar h)
    \iff
    I(a_{\{z\}}\bar k)\le I(b_{\{z\}}\bar k).$  It then follows from \eqref{eq_bar_equiv1} and \eqref{eq_bar_equiv2} that
\begin{align*}
    I(a_{\{z\}}h)\le I(b_{\{z\}}h)
    \iff 
    I(a_{\{z\}}k)\le I(b_{\{z\}}k),
\end{align*}
in which case  $ a_{\{z\}}h \preceq_R b_{\{z\}}h \iff
    a_{\{z\}}k \preceq_R b_{\{z\}}k,$  
as desired.
\end{proof}

Applying  the previous lemma to \eqref{eq_indiff} yields 
\begin{align*}
   a_{\{z\}}h
    \sim_R
    v_z(a)_{\{z\}}h
\end{align*}
for any regret act $h$. For every regret act $r$, let $r_z$ be the $\Theta$-measurable act defined by $r_z(\theta) = r(z,\theta)$. We can then proceed inductively  as in the proof of Theorem~\ref{thm_rep_loss} to recover a $Z$-measurable regret act $\overline r \sim_R r$ that satisfies 
\begin{align*}
    \overline r(z,\theta) 
    = v_z(r_z) = 
    \max_{q_z\in \mathcal{Q}_z}
    \sum_{\theta'\in\Theta}
    r(z,\theta')q_z(\theta')
\end{align*}
for all $(z,\theta) \in Z \times \Theta$. Plugging into  \eqref{eq_our_I} then yields
\begin{align*}
    I(r)
    &=
    I(\overline r)  =
    \sum_{z\in Z}\mu(z)
    \max_{q_z\in \mathcal{Q}_z}
    \sum_{\theta\in\Theta}
    r(z,\theta)q_z(\theta).
\end{align*}

Finally,  substituting $r=r_\delta^M$ reveals that for any finite menu $M$, we have
\begin{align*}
    C(M)
    &=
    \arg\min_{\delta\in M} I(r_\delta^M) \\
    &=
    \arg\min_{\delta\in M}
    \sum_{z\in Z}\mu(z)
    \max_{q_z\in \mathcal{Q}_z}
    \sum_{\theta\in\Theta}
    r_\delta^M(z,\theta)q_z(\theta) \\
    &=
    \arg\min_{\delta\in M}
    \sum_{z\in Z}\mu(z)
    \max_{q_z\in \mathcal{Q}_z}
    \sum_{\theta\in\Theta}
    \left(
        \max_{\beta\in M}\beta(z,\theta)
        -
        \delta(z,\theta)
    \right)
    q_z(\theta),
\end{align*}
which coincides with the dynamically consistent minimax regret representation.

It remains to discuss uniqueness. The marginal $\mu$ is uniquely
identified by the restriction of choice to $Z$-measurable utility acts. For every $z$ with $\mu(z)>0$, the support function $a \mapsto v_z(a) =    \frac{1}{\mu(z)}I(a_{\{z\}}\mathbf 0)$  is unique. Since a compact convex subset of  $\Delta(\Theta)$ is uniquely determined by its support
function, $\mathcal{Q}_z$ is uniquely identified.  The full support assumption on $\mu$ then yields the desired result.\footnote{Without full support,
$\mathcal{Q}_z$ is not identified at $\mu$-null data realizations and these sets are irrelevant for the ex ante representation.}  \hfill $\blacksquare$

\subsection{Proof of Lemma~\ref{lem:asif-Ta}} \label{appendix_lem_asif_Ta}

The as-if rule is minimax regret over the confidence interval $C=[\hat\tau-c\,\widehat{\mathrm{se}},\ \hat\tau+c\,\widehat{\mathrm{se}}]$ for the effect $\tau$, where $c>0$ is the critical value of the chosen confidence level and $\widehat{\mathrm{se}}>0$. For the binary decision, the regret of providing at effect $\tau$ is $\max\{0,-\tau\}$ and the regret of withholding is $\max\{0,\tau\}$, so the worst-case regrets over $C$ are
\begin{align*}
\max_{\tau\in C}\max\{0,-\tau\}=\max\{0,\,c\,\widehat{\mathrm{se}}-\hat\tau\},\qquad
\max_{\tau\in C}\max\{0,\tau\}=\max\{0,\,\hat\tau+c\,\widehat{\mathrm{se}}\}.
\end{align*}

The as-if rule provides when the first term is smaller. If $\hat\tau>0$, the second term equals $\hat\tau+c\,\widehat{\mathrm{se}}>0$ and strictly exceeds the first, since $c\,\widehat{\mathrm{se}}-\hat\tau<\hat\tau+c\,\widehat{\mathrm{se}}$; the as-if rule provides. If $\hat\tau<0$ the comparison reverses and it withholds, and if $\hat\tau=0$ the two are equal. Hence the as-if rule provides if and only if $\hat\tau>0$, for every $c>0$; in particular its action does not depend on the confidence level used to form $C$.

The test $T_a$ provides if and only if $z>z_\alpha$, where $z=\hat\tau/\widehat{\mathrm{se}}$ and $z_\alpha=\Phi^{-1}(1-\alpha)$. For $\alpha\in(0,\tfrac12)$ we have $z_\alpha>0$, so $T_a$ providing implies $\hat\tau/\widehat{\mathrm{se}}>z_\alpha>0$, and therefore $\hat\tau>0$ because $\widehat{\mathrm{se}}>0$. By the previous paragraph the as-if rule then provides. On the event $\{\delta^\ast=\mathrm{hold},\,T_a=\mathrm{provide}\}$ both rules provide, and the as-if action is the same at every confidence level. \hfill $\blacksquare$

\section{Details of Empirical Application} \label{appendix_empirics}

\setcounter{table}{0}
\renewcommand{\thetable}{B.\arabic{table}}

\subsection{Covariate Cells}
\label{app:protocol} \label{app:cells}


We form cells from the twelve pre-treatment characteristics that organize the
experiment: the men's elicited second-order beliefs about how many other men support women working
(in general, in semi-segregated environments, and on a minimum wage), the misperception (wedge)
between each belief and the truth, the confidence attached to it, and education, current
employment, and the wife's employment. These belief, wedge, and confidence measures are elicited
before the information is provided; the post-treatment labor-demand guess is excluded, since it
follows the sign-up decision and is treated as an outcome in the replication code. Taking 
covariates and all two- and three-way intersections,  and keeping cells with at least five men in
each arm, the minimax regret rule withholds in $4{,}273$ cells.

The cells we study do not form a single partition of the sample. We enumerate subgroups defined by one, two, and three of the pre-treatment characteristics, so a one-characteristic cell (e.g., low education) strictly contains its finer refinements, and cells built from different characteristics cross-cut one another. What licenses the exercise is not that the cells partition the data, but that, because treatment is assigned by an unconditional fair coin, \emph{every} such subgroup is itself a valid randomized experiment; within each cell the finite-sample minimax-regret problem of \citet{Stoye_2009_JoE} is well posed, and the comparison of $\delta^\ast$ with the hypothesis-testing and as-if MMR rules is a legitimate instance of the interim problem of Section~\ref{section_treatment_choice}.

\subsection{Pairwise Disagreements}
\label{app:disagreement}

Tables~\ref{tab:disagreement-05} and~\ref{tab:disagreement-10} report, across the $18{,}291$ covariate cells of Appendix~\ref{app:protocol}, the number of cells on which each pair of rules---the minimax-regret rule $\delta^\ast$, the hypothesis-testing rule $T_\alpha$, and the as-if MMR rule---prescribes a different interim action, at $\alpha=0.05$ and $\alpha=0.10$, respectively. Each entry splits the disagreement into its two directions: the top number counts cells where the row rule provides while the column rule withholds, and the bottom number cells where the row withholds while the column provides. The lower-right entry is zero at both levels, as predicted by Lemma~\ref{lem:asif-Ta}. Cells on which a rule is exactly indifferent ($I_N=0$ for $\delta^\ast$, or $\hat\tau=0$ for the as-if rule) are excluded. In Table~\ref{tab:disagreement-05} the as-if MMR rule is formed from a $95\%$ confidence interval to match $\alpha=0.05$. Because the as-if MMR action depends only on the sign of $\hat\tau$, the as-if entries remain the same in Table~\ref{tab:disagreement-10}, and the two tables differ only through the level of $T_\alpha$.

Because the covariate cells overlap, the counts in Table~\ref{tab:bursztyn-ta} and Tables~\ref{tab:disagreement-05}--\ref{tab:disagreement-10} correspond to  different objects. The pairwise-disagreement tables of this section count \emph{cells}, whereas Table~\ref{tab:bursztyn-ta} lists \emph{distinct subgroups} (i.e., covariate combinations that select samples with identical arm sizes and success counts are shown once). Hence the event $\{\delta^\ast=\text{withhold},\,T_\alpha=\text{provide}\}$ occurs in  $21$ cells but $18$ distinct subgroups at $\alpha=0.10$, and $3$ cells but $2$ distinct subgroups at $\alpha=0.05$.



\begin{table}[ht]
\centering
\caption{Pairwise disagreements at $\alpha=0.05$. Top: row provides, column withholds. Bottom: row withholds, column provides (\% of $18{,}291$ cells in parentheses).}
\label{tab:disagreement-05}
\newcommand{\twocell}[2]{%
  \begin{tabular}[c]{@{}c@{}}
    #1\\
    #2
  \end{tabular}%
}
\renewcommand{\arraystretch}{1}
\begin{tabular}{l ccc}
\toprule
 & $\delta^\ast$ (minimax) & $T_\alpha$ (test) & as-if MMR \\
\midrule
$\delta^\ast$ (minimax) & --- & & \\[0.4em]
$T_\alpha$ (test)
  & \shortstack{$3$ {\footnotesize($0.0\%$)}\\ $10{,}304$ {\footnotesize($56.3\%$)}}
  & --- & \\[0.4em]
as-if MMR
  & \shortstack{$994$ {\footnotesize($5.4\%$)}\\ $552$ {\footnotesize($3.0\%$)}}
  & \shortstack{$11{,}349$ {\footnotesize($62.0\%$)}\\ $0$ {\footnotesize($0.0\%$)}}
  & --- \\
\bottomrule
\end{tabular}
\end{table}

\begin{table}[ht]
\centering
\caption{Pairwise disagreements at $\alpha=0.10$. Top: row provides, column withholds. Bottom: row withholds, column provides (\% of $18{,}291$ cells in parentheses).}
\label{tab:disagreement-10}
\renewcommand{\arraystretch}{1}
\begin{tabular}{l ccc}
\toprule
 & $\delta^\ast$ (minimax) & $T_\alpha$ (test) & as-if MMR \\
\midrule
$\delta^\ast$ (minimax) & --- & & \\[0.4em]
$T_\alpha$ (test)
  & \shortstack{$21$ {\footnotesize($0.1\%$)}\\ $8{,}434$ {\footnotesize($46.1\%$)}}
  & --- & \\[0.4em]
as-if MMR
  & \shortstack{$994$ {\footnotesize($5.4\%$)}\\ $552$ {\footnotesize($3.0\%$)}}
  & \shortstack{$9{,}453$ {\footnotesize($51.7\%$)}\\ $0$ {\footnotesize($0.0\%$)}}
  & --- \\
\bottomrule
\end{tabular}
\end{table}

\bibliography{ref}

\newpage

\end{document}